\documentclass[11pt]{article}

\usepackage[english]{babel}
\usepackage[letterpaper, margin=1in]{geometry}
\usepackage[colorlinks=true, allcolors=blue]{hyperref}

\usepackage{graphicx}
\usepackage{amsmath}
\usepackage{amsthm}
\usepackage{amssymb}
\usepackage{booktabs}
\usepackage{booktabs}
\usepackage{algorithm}
\usepackage{algorithmic}
\usepackage{dirtytalk}
\usepackage{thmtools}
\usepackage{thm-restate}
\usepackage{subcaption}
\usepackage{caption}
\usepackage{enumerate}
\usepackage{dirtytalk}
\usepackage[most]{tcolorbox}

\usepackage{pgf}
\usepackage{lmodern}
\makeatletter

\everymath=\expandafter{\the\everymath\displaystyle}
\makeatother

\newtheorem{theorem}{Theorem}
\newtheorem{lemma}{Lemma}
\newtheorem{definition}{Definition}

\newtheorem{remark}{Remark}
\newtheorem{corollary}{Corollary}
\newtheorem{claim}{Claim}

\newcommand{\s}{\textsf{\bf s}}
\newcommand{\f}{f}
\newcommand{\x}{\textsf{\bf x}}
\newcommand{\spot}{\textsf{s}^\diamond}
\newcommand{\poa}{{\sf PoA}}
\newcommand{\spred}{\bar{\s}}
\newcommand{\fpred}{\bar{\f}}
\newcommand{\fpot}{\f^\diamond}

\allowdisplaybreaks[4]

\title{Learning-Augmented Coordination Mechanisms}
\date{}

\author{
\textbf{George Christodoulou}$^{1,3}$
\qquad
\textbf{Vasilis Christoforidis}$^{1,3}$
\\[0.35em]
\textbf{Alkmini Sgouritsa}$^{2,3}$
\qquad
\textbf{Ioannis Vlachos}$^{2,3}$
\\[0.9em]
\small $^{1}$Aristotle University of Thessaloniki, Thessaloniki, Greece
\\
\small $^{2}$Athens University of Economics and Business, Athens, Greece
\\
\small $^{3}$Archimedes, Athena Research Center, Athens, Greece
\\[0.6em]
\small
\texttt{\{gichristo,vgchrist\}@csd.auth.gr}
\qquad
\texttt{\{alkmini,ioa.vlahos\}@aueb.gr}
}

\begin{document}

\maketitle

\begin{abstract}
    The inefficiency of decentralized resource allocation, a core challenge in Algorithmic Game Theory, is classically measured by the {\em Price of Anarchy (PoA)} and the {\em Price of Stability (PoS)}. The most widely studied and foundational class of these models are {\em atomic and non-atomic congestion games}, for which a robust theory quantifying the inefficiency of equilibria is well-established.
To mitigate the effects of this selfish behavior, we employ {\em Coordination Mechanisms}, which modify resource costs to incentivize
socially improved equilibrium outcomes. In their standard form,
Coordination Mechanisms have been limited by a pessimistic, worst-case
view; it is assumed that the demand is {\em unknown} and {\em adversarially}
chosen. Consequently, positive results remain sparse, applying
mainly to specialized network topologies like parallel links.
We address this limitation by studying the design and analysis of {\em learning-augmented coordination mechanisms}, where the mechanism is
endowed with a potentially inaccurate {\em prediction of the demand
  $\bar{r}$}. We study general {\em atomic and non-atomic congestion
  games} with polynomial latency functions of degree $d$. Our main
positive result is a learning-augmented coordination mechanism tuned
by a confidence parameter $\beta$ which achieves a "best of both worlds" type of result: it
    achieves approximately optimal performance for an accurate
    prediction without sacrificing the worst-case guarantee of a good
    equilibrium (PoS). We further establish bounds on a stronger
    robustness measure, the PoA-Robustness (worst equilibrium
    performance under cost modification). The mechanism's construction
    relies on transforming approximate equilibria of the original game
    into exact equilibria of the modified game using a carefully
    defined {\em approximate potential function}.
We complement our positive results by providing {\em impossibility
  results}. We first show that a stronger notion of
consistency, requiring that {\em all} equilibria have optimal
social cost for an accurate prediction, fails to guarantee bounded
robustness. We also show a strong {\em error tolerance} limit: any
mechanism achieving optimal consistency in non-atomic games suffers a
sharp, discontinuous performance that collapses to the original {\em PoA} value under even infinitesimally small prediction errors,
highlighting the limits of efficiency guarantees in non-atomic
congestion games.

\end{abstract}

\newpage

\tableofcontents

\newpage

\section{Introduction}
For the past three decades, a core research area in Algorithmic Game
Theory has focused on investigating and quantifying the \emph{inefficiency
of equilibria}. Two standard concepts are used to measure this
inefficiency. On the one hand, the \emph{Price of Anarchy} (PoA)
\cite{KoutsoupiasP99} serves as a robust measure for this assessment,
comparing the worst-case equilibrium with the system optimal
solution. On the other hand, the \emph{Price of Stability}
(PoS) \cite{siamcomp/AnshelevichDKTWR08:PoS} offers an optimistic
measure that uses the best-case equilibrium for this comparison. The
former is a very robust measure, while the latter is perhaps a more appropriate
concept from an engineering perspective, analyzing the ideal solution
that we would like our protocols to produce.

The richest and most well-studied classes of games among these
models are arguably \emph{atomic and non-atomic congestion games} (see, e.g.,
\cite{NRTV2007a} for a detailed discussion), where a set of players
compete for a set of resources. The resources are sensitive to the
amount of traffic they receive, which is typically modeled via
a latency function. This highly descriptive class of games captures a
wide variety of application domains, most prominently \emph{routing games}.

The PoA and PoS of congestion games are thoroughly
understood. Roughgarden and Tardos's seminal work
\cite{RoughgardenT02, RT04} delivered tight bounds for the PoA of \emph{non-atomic
congestion games}, where individual players exert negligible influence
on traffic. For \emph{atomic congestion games}, the PoA was progressively
resolved: initially for affine latency functions
\cite{stoc/AwerbuchAE05, stoc/ChristodoulouK05}, then for polynomials
\cite{siamcomp/AlandDGMS11}, and finally for more general monotone
cost functions \cite{Roughgarden15}. These developments underscore
congestion games' role as a central paradigm within the
\emph{Smoothness Framework} \cite{Roughgarden15}, an elegant and unified
methodology for bounding the PoA.
The PoS\footnote{We note that for non-atomic congestion games
  PoA and PoS coincide.} of congestion games retains much smaller
values compared to the respective PoA values: approximately $1.577$
for affine functions \cite{esa/ChristodoulouK05,algorithmica/CFKKM11},
as opposed to the PoA bound of $2.5$; and $\Theta(d)$
\cite{teac/CG16:PoS} for polynomials of degree $d$ as opposed to the
much higher PoA bound of $\left(\Theta\!\left(d/\ln d\right)\right)^{d+1}$. All aforementioned PoA
and PoS results are {\em tight}, and collectively they provide a solid theory
for estimating the inefficiency of equilibria in congestion
games.

\textbf{Coordination Mechanisms.}
While the PoA and PoS offer valuable
insights into the performance of equilibria, they are inherently
\emph{post-factum} measures; they assess the quality of an equilibrium
that has already been reached. This observation motivated the
introduction of \emph{coordination mechanisms}
\cite{tcs/ChristodoulouKN09:CMs, algorithmica/ChristodoulouMP14} for
congestion games. A coordination mechanism aims to improve the equilibrium performance
of a given congestion game by modifying (e.g., increasing) the latency
functions of the resources in a way that encourages equilibrium
behavior with good social cost. The challenge is that these
modifications must occur \emph{without prior knowledge of the traffic
demand}.\footnote{In a routing game, the demand refers to the amount of
traffic that needs to move from each source to each destination of a
given network. If the demand is known in advance,
it is possible to modify the latencies to enforce the optimal
flow~\cite{algorithmica/ChristodoulouMP14}.} \cite{algorithmica/ChristodoulouMP14}
took the worst-case approach where demand is chosen adversarially after
latency modifications are applied. This robust, yet overly pessimistic, approach has yielded positive results only in the special case of
parallel links with affine
latencies~\cite{algorithmica/ChristodoulouMP14}. However, in many
practical settings, it is often realistic to assume that traffic can
be accurately predicted from historical data \cite{LvDKLW15:trafficPrediction}. 

In light of this, we adopt the \emph{learning-augmented framework} (see,
e.g., \cite{cacm/MitzenmacherV22} for an introduction), where
algorithms are equipped with an additional, potentially unreliable,
prediction about the input. This paradigm has proven effective in the
design of \emph{online algorithms}, where the algorithm is augmented with a
prediction of future inputs. In this work, we study the design of
\emph{learning-augmented coordination mechanisms}, where the mechanism is augmented with a \emph{prediction of the
demand rate}.\footnote{The work of \cite{CCSV25:predictions} is closely related to our work and we compare our results with the results of \cite{CCSV25:predictions} in Section~\ref{sec:related}.}

\subsection{Contributions}

We study coordination mechanisms through the lens of the learning-augmented
framework. A learning-augmented coordination mechanism takes as input
an instance of a congestion game and a {\em prediction of the demand $\bar{r}$} (see Sec.~\ref{sec:prelims} for formal definitions), but
not the actual demand $r$ which is unknown.
Our results apply to a broad class of games: non-atomic (Section
\ref{sec:nonAtomic}) and atomic (Section \ref{sec:atomic}) congestion
games with polynomial latency functions with nonnegative coefficients
of maximum degree $d$. 

\subsubsection{Positive Results}
\textbf{Consistency - Robustness trade-off.} We propose a
learning-augmented coordination mechanism, called
\textsc{Potential}, which takes as input a predicted demand $\bar{r}$ and can be tuned by
a {\em confidence parameter} $\beta\in [1,d+1]$. \textsc{Potential}
modifies (increases) the latency functions of the game to ensure that the modified game satisfies the following:
\begin{itemize}
\item    when the prediction is accurate (i.e., $r=\bar{r}$),  there exists an equilibrium with approximately optimal social cost as a function of $\beta$ {\bf (Consistency)};

\item even when the predicted
  demand is {\em arbitrarily inaccurate}, there exists an equilibrium whose social
  cost is at most a factor of $\beta(d+1)$ away from the optimal social cost 
  {\bf (Robustness)}. 
  \end{itemize}

  We remark that in both cases, we compare the equilibrium cost with
  the {\em increased} latencies to the {\em original} optimal cost, prior
  to the modification.  The exact statements can be found in Theorem~\ref{thm:consrobust} for the case of non-atomic congestion
  games and in Theorem~\ref{thm:atomic} for the case of atomic
  congestion games, respectively.

It is instructive to compare our results with the performance of the best Nash Equilibrium in the original game which
is measured by the PoS. For atomic congestion games, we are able to improve the
performance of the best equilibrium by a factor of $\beta$ when the
prediction is accurate, while for arbitrarily inaccurate predictions, we recover the worst-case guarantee for the best equilibrium up to a factor of $\beta$. For non-atomic games we obtain analogous guarantees, by losing at most a factor depending on $\beta$ and a small factor of $O(\ln(d))$. In Table~\ref{tbl:results1} we provide the asymptotic PoS upper bounds in the original games as $d\rightarrow\infty$, alongside upper bounds on consistency and robustness in the modified games. For the non-atomic consistency entry, the table reports a simplified upper bound;
the exact expression is given in Theorem~\ref{thm:consrobust}.

\begin{table}[ht]
    \centering
    \begin{tabular}{lccc}
        \toprule
        Model 
        & PoS ($d\rightarrow\infty$) 
        & Consistency 
        & Robustness \\
        \midrule
        Non-atomic 
        & $\Theta\!\left(\frac{d}{\ln d}\right)$ \cite{RT04}
        & $\frac{d+1}{\beta}$ (Thm.~\ref{thm:consrobust})
        & $\beta(d+1)$ (Thm.~\ref{thm:consrobust}) \\[4pt]
        Atomic 
        & $\Theta(d)$ \cite{teac/CG16:PoS}
        & $\frac{d+1}{\beta}$ (Thm.~\ref{thm:atomic})
        & $\beta(d+1)$ (Thm.~\ref{thm:atomic}) \\
        \bottomrule
    \end{tabular}
    \caption{Summary of the main guarantees for atomic and non-atomic congestion games with polynomial latency functions of degree $d$. The first column reports the classical PoS benchmark without predictions, while the last two columns give the consistency and robustness guarantees of our learning-augmented coordination mechanism with confidence parameter $\beta$. For non-atomic games, as $d \to \infty$, the consistency bound can be refined to $O\!\left(d/(\beta+\ln d)\right)$.}
    \label{tbl:results1}
\end{table}

Our results demonstrate that incorporating predictions of the input may dramatically improve the performance when the prediction is accurate, without significantly compromising robustness for moderate values of $\beta$. We further emphasize that our results hold for {\em general} congestion games with polynomial latency functions, as opposed to previous work on coordination mechanisms which considered parallel links \cite{algorithmica/ChristodoulouMP14,isaac/WangYC16:ImprovedTax,CCSV25:predictions} and affine latency functions \cite{algorithmica/ChristodoulouMP14}.

\textbf{PoA-Robustness.}
We also provide a bound on the performance guarantee of {\em any}
equilibrium under arbitrarily inaccurate predicted demand ({\bf PoA-Robustness}). As before, we compare the worst-case equilibrium cost measured using the {\em modified} latencies, 
with the {\em original} optimal cost, prior
  to the modification. In Table~\ref{tbl:results2} we provide the asymptotic PoA bounds for both the unmodified and modified games when $d\rightarrow\infty$.

\begin{table}[ht]
    \centering
    \begin{tabular}{lcc}
        \toprule
        & PoA & PoA-Robustness  \\
        \midrule
        Non-atomic 
        & $\Theta\!\left(\frac{d}{\ln d}\right)$ \cite{RT04}
        & $\max\{d+1,\beta^{d+1}\}$
        (Thm.~\ref{thm:consrobustPoA}) \\[4pt]
        Atomic 
        & $\left(\Theta\!\left(\frac{d}{\ln d}\right)\right)^{d+1}$
        \cite{siamcomp/AlandDGMS11}
        & $O\!\left(\left(\frac{d}{W(d/\beta)}\right)^{d+1}\right)$
        (Thm.~\ref{thm:atomicPoA})  \\
        \bottomrule
    \end{tabular}
    \caption{Asymptotic behavior, as $d\to\infty$, of the classical PoA bounds for atomic and non-atomic congestion games with polynomial latency functions of degree $d$, together with the corresponding PoA-Robustness guarantees of our learning-augmented coordination mechanism. For non-atomic games, the displayed bound is the simplified uniform upper bound implied by Theorem~\ref{thm:consrobustPoA}: the exact first branch is $\Theta(d/\ln d)$ at $\beta=1$ and increases up to $d+1$ as $\beta$ approaches $(d+1)^{1/(d+1)}$, while the second branch is $\beta^{d+1}$. In the atomic PoA-Robustness bound, $W$ denotes the Lambert $W$ function.}
    \label{tbl:results2}
\end{table}

Note that, for atomic games, the performance degrades in the worst-case only from $\left(\Theta\!\left(d/\ln d\right)\right)^{d+1}$
to $\left(O(d)\right)^{d+1}$. For non-atomic games, the PoA-Robustness can be substantially worse than in the unmodified game. This degradation occurs because our intervention introduces discontinuities, effectively transforming the game into an atomic one. Note, however, that such discontinuities are necessary: without them, no improvement is possible, as established in \cite{algorithmica/ChristodoulouMP14}. Nonetheless, we emphasize that all our bounds are essentially constant, as they do not depend on the demand (i.e., the number of agents or the input rate) but only on the maximum degree $d$ of the polynomial.

\subsubsection{Impossibility Results}

We complement our results by providing impossibility results that hold for {\em all} learning-augmented coordination mechanisms.

\textbf{PoA - Consistency.}
At this point, a natural question may arise: \say{{\em Is it possible
    to obtain good consistency - robustness trade-off if one insists
    that {\em all} equilibria of the modified game have optimal social
    cost when $r=\bar{r}$?}} We present strong negative results
for both atomic (Theorem~\ref{thm:impossibilityPoAcons}) and non-atomic games (Theorem~\ref{thm:impossibilityPoAconsNon-atomic}), showing that any coordination mechanism that
demands this stronger notion of consistency fails to guarantee 
bounded robustness for any arbitrarily inaccurate predicted demand.

\textbf{Error tolerance.} 
Lastly, inspired by the notion of {\em smoothness} for learning-augmented algorithms, we study how approximation guarantees degrade under small prediction errors in non-atomic congestion games (see Section \ref{sec:smoothness}). 
In Theorem \ref{thm:smoothNonSymmetric}, we establish a strong impossibility result, showing that there is {\em no} mechanism
that can achieve optimal performance under perfectly accurate predictions while simultaneously maintaining approximately optimal performance when the prediction is slightly inaccurate. More precisely, for any $1$-consistent mechanism, we show that even an infinitesimal prediction error causes the performance to revert back to the original PoA obtained without predictions. Furthermore, we show that this negative result applies to any class of games beyond single-commodity network congestion games (see Section \ref{sec:smoothness}). For instance, in multi-commodity networks or symmetric congestion games, no mechanism can be $1$-consistent while also maintaining an approximation ratio that approaches $1$ as the prediction error tends to zero. Nonetheless, in the special case of single-commodity network congestion games, we exploit the network structure to obtain a positive result (Theorem~\ref{thm:SmoothnessSC}).

\subsubsection{Techniques}

\textbf{Design of coordination mechanisms.}
At a high level, our coordination mechanism transforms {\em
  approximate} equilibria with approximately optimal social cost in
the original game into {\em exact} equilibria of the modified game that
retain the same social cost.
A central building block in the definition of our mechanism is an
{\em approximate potential function} w.r.t. a twisted set of latency
functions $h$, tuned by the confidence parameter $\beta$. The
minimizers of this approximate potential are approximate equilibria
with bounded performance. The associated latencies $h$, together with the
potential minimizer $\spred$ (for $r=\bar{r}$), are carefully
used via a threshold mechanism to define the \textsc{Potential}
 coordination mechanism. The construction ensures that, when $r=\bar{r}$, the  potential minimizer is also a potential minimizer of the modified game, and hence an exact equilibrium for that game, while retaining its original social cost. 
Remarkably, this improvement comes at little cost: even under inaccurate predictions, any potential minimizer of the modified game continues to provide strong performance guarantees.

\textbf{Analysis.}
When predictions are inaccurate, we propose a framework to provide
robustness bounds for learning-augmented coordination mechanisms. To
achieve this, we transfer standard techniques for deriving bounds on the inefficiency of equilibria, most notably the smoothness framework for the PoA and the potential function method for the PoS, to relate the equilibria of the modified game to those of the original game.

We remark that the games that we consider in this work are guaranteed to possess pure Nash equilibria. For simplicity, our upper bound proofs are presented
 for pure equilibria. However, all our proofs can be directly
extended to mixed, correlated, and coarse-correlated equilibria via standard
extension arguments. 

Our approach is quite general and provides a unified framework for
designing and analyzing coordination mechanisms in both non-atomic and
atomic congestion games. In most cases, we tailor the analysis to the
specific parameters of an approximate potential function, yielding
tighter bounds than those obtained through general approximate
equilibrium guarantees.

\subsection{Further Related Work}
\label{sec:related}
\paragraph{Inefficiency of equilibria.}
Of particular relevance to our work is the study of the performance degradation resulting from selfish behavior and methods to mitigate it. 
Extensive work has characterized the PoA and PoS of congestion games; apart from the works already discussed in the introduction, bounds were shown for mixed and (coarse) correlated equilibria \cite{esa/ChristodoulouK05,stoc/BlumHLR08,Roughgarden15}, as well as for approximate equilibria \cite{arxiv:CKS08} and weighted congestion games \cite{teac/BhawalkarGR14:PoAWeighted,siamcomp/ChristodoulouGG19,tocs/GiannakopoulosP23}. 

\paragraph{(Approximate) Potentials.} Since the work of
\cite{Rosenthal1973}, that introduced the class of congestion games, a long line of work has been devoted to identifying classes of games that admit pure Nash equilibria. A key
technical tool of Rosenthal's work was the use of a \emph{potential}
function which captures the gains in each player's deviating moves in
an elegant way. Congestion games are \emph{exact potential} games
\cite{MondererShapley:Potential}. Beyond proving the existence of
equilibria, potential functions can also provide important structural
properties on the performance of potential minimizers.

This has motivated the development of \emph{approximate} potential functions, which extend the power of potential-based arguments to approximate equilibria (see e.g. \cite{algorithmica/ChristodoulouKS11,approx/HansknechtKS14,siamcomp/ChristodoulouGG19,CaragiannisF21,mor/GiannakopoulosNS22,tocs/GiannakopoulosP23}). Moreover, the well-known fact that \emph{weighted} congestion games do not always have equilibria in pure strategies \cite{tcs/FotakisKS05} further underlined the need for approximate potentials. Minimizers of approximate potential functions are guaranteed to be approximate equilibria, in contrast to Rosenthal's case.

\paragraph{Coordination mechanisms.}\label{par:coordination_mechs}

Coordination mechanisms were first introduced in \cite{tcs/ChristodoulouKN09:CMs} as a method to mitigate the inefficiency of equilibria. They have since been extensively studied in the design of local policies across various scheduling settings and congestion games \cite{tcs/ImmorlicaLMS09,mst/CohenDT11,esa/AbedH12,algorithmica/Caragiannis13,Kollias13,esa/AbedCH14,BhattacharyaIKM14,icalp/BhattacharyaKM14,ColeCGMO15,AzarFJMS15,tocs/CaragiannisF19,soda/Kashaev26}. Specifically, \cite{algorithmica/ChristodoulouMP14,isaac/WangYC16:ImprovedTax} considered coordination mechanisms in the context of \emph{selfish routing} over parallel-link networks.
\cite{CaragiannisGV17} established intricate connections between cost-sharing protocols and coordination mechanisms, while \cite{ior/CGS24} applied the main principle of coordination mechanisms to cost-sharing games, aiming to improve social welfare by overcharging.

\paragraph{Coping with Selfishness.} Besides coordination mechanisms,
an important approach explored in the literature is the use of \emph{taxation} and \emph{tolls}, where an agent may be charged a tax that influences their perceived latency. The idea of introducing taxes to improve system efficiency has been extensively studied in transportation science (cf. \cite{stoc/ColeDR03} and references therein), and gained significant traction in the algorithmic game theory literature; both in non-atomic games (see e.g., \cite{focs/FleischerJM04,focs/KarakostasK04,jcss/ColeDR06Taxes}) and atomic ones (e.g., \cite{talg/CKK10,FotakisSpirakis08,teac/BV19,ior/PG24}). \cite{icalp/Colini-Baldeschi18} deviate from the marginal edge pricing approach and seek for optimal tolls without the exact knowledge of the demand. They show conditions under which demand-independent optimal tolls exist for any input rate in the network. Another approach suggests that the game designer can select a fraction of players to be centrally routed using a \emph{Stackelberg strategy} \cite{siamcomp/Roughgarden04,mor/BonifaciHS10,tocs/Fotakis10a}. \emph{Signaling} strategies have also been used to leverage the designer's informational advantage and steer the traffic equilibrium to a more efficient outcome \cite{sigecom/BhaskarCKS16}. \emph{Cost-sharing protocols} determine how a common resource is allocated among players \cite{or/GkatzelisKR16}.

\paragraph{Learning-augmented framework.} The learning-augmented framework has gained significant attention in the design of algorithms beyond the worst-case. Rather than surveying the growing literature, we highlight a few representative works on canonical problems such as auctions, strategic facility location, and strategic scheduling \cite{mor/AgrawalBGOT24,ITCS/BalkanskiGT23,CBKST24,CSV24,sigecom/LuW024,soda/Gkatzelis0T25}. 
In the direction of improving the PoA, \cite{sigecom/GkatzelisKST22} incorporated demand predictions into the design of cost-sharing mechanisms for scheduling games. For a broader overview, we refer the interested reader to \cite{cacm/MitzenmacherV22,sigecom/BalkanskiGT23Overview} and the references therein, and to \cite{LM:alpsRepo} for a frequently updated online repository of works in this area.  

The recent paper~\cite{CCSV25:predictions} is closely related to our
work. They also explore the use of demand predictions in the design of
coordination mechanisms for games with polynomial latencies, but their
results are incomparable with ours. The focus
of~\cite{CCSV25:predictions} is on {\em non-atomic} congestion games
and on {\em parallel-link} networks, while our results apply to a much
more general setting; to {\em general} congestion games, for both
atomic and non-atomic variants. However, they are able to obtain
stronger results for the stronger definitions of consistency and
robustness. They are able to show an improved PoA-Robustness upper bound of
$d+1$ for $1$-PoA-Consistent mechanisms and they also show that this
result is tight.  As we show in Theorems~\ref{thm:impossibilityPoAcons} and \ref{thm:impossibilityPoAconsNon-atomic} (for atomic and non-atomic congestion games, respectively), it is impossible to achieve this in general congestion games.

\section{Model and Notation}
\label{sec:prelims}
We study both atomic and non-atomic congestion games. We introduce the
relevant notation and the definitions of each setting separately.
\paragraph{Atomic congestion games.}
An (atomic) {\em congestion game} is a tuple
$(N,E, (\mathcal{S}_i)_{i\in N},(\ell_e)_{e\in E} )$, where
$N = \{1, \dots, n\}$ is a set of $n$ players and $E$ is a set of
resources. Each player $i\in N$ chooses a pure {\em strategy}
$s_i\subseteq E$, from a set of available strategies
$\mathcal{S}_i \subseteq 2^E$. Each resource $e\in E$ is associated with a
nonnegative {\em latency function}
$\ell_e : \mathbb{N}\to \mathbb{R}^+$.  We consider polynomial latency
functions with maximum degree $d$ and nonnegative coefficients; that
is every latency function is of the form
$\ell_e(x)=\sum_{k=0}^d a_{e,k} \cdot x^k$ with $a_{e,k}\ge 0$ for all
$k$. We refer to the pair $r=(N,(\mathcal{S}_i)_{i\in N})$ as the {\em demand} of the game.

A \emph{pure strategy profile} is a choice of strategies
$\s = (s_1, s_2, \dots, s_n)\in \mathcal{S}=\mathcal{S}_1\times \cdots \times \mathcal{S}_n$
one for each player. We adopt the standard notation
$\s_{-i}=(s_1,\ldots, s_{i-1},s_{i+1}, \ldots s_n)$,
$\mathcal{S}_{-i}=\mathcal{S}_1\times \cdots \times \mathcal{S}_{i-1} \times
\mathcal{S}_{i+1} \times \cdots \times \mathcal{S}_n$, and $\s=(s_i,\s_{-i})$.
For a pure strategy profile $\s$ define the \emph{load}
$x_e(\s)=|i\in N: e\in s_i|$ as the number of players that use
resource $e$. The \emph{cost} for player $i$ is defined by
$C_i(\s)=\sum_{e\in s_i} \ell_e(x_e(\s))$.

A pure strategy profile $\s$ is a pure \emph{Nash equilibrium} if and only if for every
player $i\in N$ and for all $s'_i\in \mathcal{S}_i$, we have $C_i(\s)\leq C_i(s'_i,\s_{-i})$.
The social cost of a strategy profile $\s$ is the sum of the players' costs 
$$C(\s) = \sum_{i \in N}C_i(\s)=\sum_{e \in E}x_e(\s)\ell_e(x_e(\s)).$$ We
further denote by $C_{opt}$ the minimum social cost,
i.e. $C_{opt}=\min_sC(s)$.
The \emph{Price of Anarchy} of a congestion game is the social cost of the worst-case Nash equilibrium over the optimum social cost
$$ \poa = \max_{\s \in \text{Eq}(r)}\frac{C(\s)}{C_{opt}},$$ where Eq$(r)$ denotes the set of all equilibria under demand $r$.
The PoA for a class of games is the largest PoA among all games in the class.

Congestion games admit a {\em potential} function
$\Phi(\s)=\sum_{e\in E}\Phi_e(\s)=\sum_{e\in
  E}\sum_{j=1}^{x_e(\s)}\ell_e(j),$ which was introduced by
Rosenthal~\cite{Rosenthal1973}. The set of pure Nash equilibria corresponds to local optima
of the potential function~\cite{Rosenthal1973}.  This mapping suggests a special pure Nash equilibrium,  the {\em potential minimizer}, which is the profile
$\s^\diamond$ that minimizes the potential, i.e., $ \Phi(\spot) \le \Phi(\s') $ for all $ \s'\in \mathcal{S}$.
We sometimes use the notation
$\Phi^\ell, \Phi_{e}^{\ell}$ to emphasize the dependency of the latency function in the computation. Similarly, non-atomic congestion games admit a potential function $\Phi(f) = \sum_{e \in E}\int_0^{f_e}\ell_e(x)dx$ and equilibria correspond to global potential minimizers of the potential function~\cite{Beckmann1956-}.
The inefficiency of potential minimizers has been well-analyzed as it is used as a means to show upper bounds for the Price of Stability\footnote{The price of stability of a game is the social cost of the best-case Nash equilibrium over the optimum social cost.}~\cite{siamcomp/AnshelevichDKTWR08:PoS}, see e.g. \cite{esa/ChristodoulouK05,algorithmica/ChristodoulouKS11,algorithmica/CFKKM11,teac/CG16:PoS,Bilo18,siamcomp/ChristodoulouGG19}.

\paragraph{Non-atomic congestion games.} 
Rather than formally defining the class of non-atomic congestion
games, we focus on the selfish routing problem, as it serves as a
clear and more illustrative example. The key difference is that, in
non-atomic congestion games, players' strategies are not necessarily
paths within a network (as in the selfish routing problem), but rather
arbitrary subsets of resources.%

Let $G=(V,E)$ be a directed graph, where $V$ is the set of vertices
and $E$ is the set of edges. We consider a set $\mathcal{I}$ of $k$
commodities: origin-destination pairs $\{s_i, t_i\}_{i\in \mathcal{I}}$
that define the origins and the destinations in the graph. We denote
by $\mathcal{P}_i$ the set of all $s_i-t_i$ simple paths of $G$, and by
$\mathcal{P} = \bigcup_i \mathcal{P}_i$ the union of all feasible
paths. A flow $f$, is a mapping from the set of paths to the set of nonnegative
reals $f:\mathcal{P}\rightarrow\mathbb{R}_{\ge0}$. For a given flow $f$,
the flow on an edge is defined as the sum of the flows of all the
paths that use this edge $f_e=\sum_{P\in \mathcal{P}, e\in P}f_P$. We
relate with every commodity $\{s_i, t_i\}$ a demand rate $r_i$, as the
total traffic that needs to move from $s_i$ to $t_i$. A flow $f$ is
feasible, if for every commodity $\{s_i, t_i\}$, the traffic rate
equals the total flow routed on paths in $\mathcal{P}_i$, i.e., $r_i=\sum_{P\in
  \mathcal{P}_i}{f_P}$. We refer to $r=(r_i)_{i\in \mathcal{I}}$ as the {\em demand} of the game.

Every edge introduces a delay in the network which depends on the load
of the edge and is determined by a nonnegative latency function,
$\ell_e : \mathbb{R}_{\ge0}\to \mathbb{R}_{\ge0}$. In our case this is a polynomial of maximum degree
$d$ with nonnegative coefficients. An instance of a routing game is
denoted by the triple $(G,r,\ell)$. The latency of a path $P$, for a
given flow $f$, is defined as the sum of all the latencies of the
edges that belong to $P$, i.e., $\ell_P(f)=\sum_{e\in P}\ell_e(f_e)$. The
social cost w.r.t. a given flow $f$ is the total delay due to
$f$
\[C(f)=\sum_{P\in \mathcal{P}}\ell_P(f)f_P=\sum_{e\in E}\ell_e(f_e)f_e.\]
We denote by $C_{opt}(r)$ the minimum cost of a flow for demand $r$.
A feasible flow $f$ that routes $r_i$ units of flow from $s_i$ to $t_i$ for each commodity $i\in \cal{I}$,
is at Nash (or Wardrop) equilibrium if and only if it is routed on
shortest $s_i-t_i$ paths, i.e., if for any two paths
$p_1, p_2 \in \mathcal{P}_i$ with $f_{p_1} > 0$,
$\ell_{p_1}(f) \le \ell_{p_2}(f)$.
We will often use discontinuous latency functions, under which
standard Wardrop equilibria might not exist. Hence, we resort to the
notion of {\em user
  equilibria}~\cite{transci/BernsteinS94:UserEq}. A feasible flow $f$ that routes $r_i$ units of flow for each
commodity $i \in \mathcal{I}$ is a user equilibrium, if and only if
for all paths $p_1, p_2 \in \mathcal{P}_i$ with $f_{p_1}>0$,
\[\ell_{p_1}(f) \le \liminf_{\epsilon \downarrow 0}\ell_{p_2}(f + \epsilon\boldsymbol{1}_{p_2} -\epsilon\boldsymbol{1}_{p_1}), \]
where $\boldsymbol{1}_p$ denotes a unit of flow passing through $p$. Throughout the paper, we refer to user equilibrium simply as equilibrium.

\paragraph{Learning-augmented coordination mechanisms.}
A {\em learning-augmented coordination mechanism} takes as input a
partial instance of a game, where the actual demand $r$ is missing. This instance is augmented with a \emph{prediction}\footnote{
This is a key difference with the standard definition of a coordination mechanism~\cite{tcs/ChristodoulouKN09:CMs, algorithmica/ChristodoulouMP14}, where $r$ is an unknown parameter chosen adversarially.} of the demand $\bar{r}$. For the
case of \emph{atomic games}, this input is a tuple
$(\bar{N},E, (\bar{\mathcal{S}}_i)_{i\in \bar{N}},(\ell_e)_{e\in E} )$, where the predicted demand $\bar{r}$ consists of a prediction of the set of players $\bar{N}$, and a
prediction of the strategy space\footnote{In a network congestion game, the strategy space can be implicitly defined by the simple paths of the network that connect a set of origin-destination pairs $(\bar{s}_i,\bar{t}_i)$, one for each $i\in \bar{N}$.} $\bar{\mathcal{S}}_i$ for each player $i\in \bar{N}$. For the case of
\emph{non-atomic games}, the input is a tuple $(G,\bar{r},\ell)$ where $\bar{r} = (\bar{r}_i)_{i\in \bar{\mathcal{I}}}$ refers to the prediction on the set of commodities
$\bar{\mathcal{I}}$ along with a prediction of the demand $\bar{r}_i$
for each commodity $i\in \bar{\mathcal{I}}$. For simplicity, we will often refer only to the predicted demand $\bar r$ as the input of the coordination mechanism, emphasizing the predicted element of the input, as the other parameters are usually inferred from the context.

Given a predicted demand $\bar{r}$, a learning-augmented mechanism defines a
set of modified latencies $\hat{\ell}=(\hat{\ell}_e)_{e \in E}$ such
that $\hat{\ell}_e(x) \ge \ell_e(x)$ for all $x \ge 0$.
The goal of this modification is to create equilibria $\s$ with good
social cost denoted
by $\hat{C}_{\bar r}(\s)$, in the modified game with the {\em
  increased} latency functions, compared to the optimal social cost in
the original game using the {\em original}
latencies, when the unknown actual demand $r$ arrives. Adopting the standard practice of the learning augmented framework, we evaluate the performance of a mechanism in two different regimes of $r$: when the demand is accurate, $r=\bar{r}$, and  when $r$ is arbitrarily inaccurate.

We define the {\em Consistency} of the mechanism as the social cost of
the prediction-derived potential minimizer $\spred$ used to define the modified
latencies, compared to the cost of the optimal solution
(w.r.t. the original latencies) when the prediction is accurate
($r=\bar r$), and we take the worst-case $\bar r$. We define the
{\em Robustness} of a mechanism as the worst-case guarantee of a
potential minimizer $\spot$\footnote{Technically the potential minimizer
  may not be unique, in which case we consider the worst-case one.}
when the predicted demand may be {\em arbitrarily inaccurate}, taking
the worst-case over $r$ and $\bar r$. Formally,
\[
\textbf{(Consistency) } 
\max_{\bar r}
\frac{\hat{C}_{\bar r}(\spred)}{C_{\mathrm{opt}}(\bar r)}\,,
\qquad
\textbf{(Robustness) } \max_{r,\bar{r} }\max_{\spot \in \text{Pot}(r)}\frac{\hat{C}_{\bar r}(\spot)}{C_{opt}(r)}.\]

where Pot$(r)$ denotes the set of all potential minimizers of $\hat{\Phi}:=\Phi^{\hat{\ell}}$ w.r.t. the modified latencies for demand equal to $r$.
We also define a stronger notion of robustness, which we
refer to as {\em PoA}-Robustness, as the worst-case
guarantee for {\em any} equilibrium when the predicted demand may be
arbitrarily inaccurate\footnote{The PoA-Robustness guarantee holds for {\em
  any} actual demand and coincides with the {\em engineered Price of Anarchy} of 
coordination mechanisms~\cite{algorithmica/ChristodoulouMP14}.}
 \[\textbf{(PoA-Robustness) } \max_{r,\bar r} \max_{\s \in \text{Eq}(r)}\frac{\hat{C}_{\bar r}(\s)}{C_{opt}(r)}\,,\]
where Eq$(r)$ denotes the set of all equilibria under demand $r$.
We also define the approximate versions of these concepts. A learning-augmented mechanism is $\alpha$-consistent if its consistency is at most $\alpha$, and $\rho$-robust if its robustness is at most $\rho$.

\section{Non-atomic Congestion Games}
\label{sec:nonAtomic}

In this section, we study learning-augmented coordination mechanisms
for the class of non-atomic congestion games with polynomial
latencies of maximum degree $d$. We are given an instance augmented with a prediction of the
demand $(G,\bar{r},\ell)$, where $\bar{r}$ denotes the predicted
demand $\bar{r}=(\bar{r}_i)_{i\in \bar{\mathcal{I}}}$, and
$\bar{\mathcal{I}}$ is a prediction of the set of $\bar{k}$
commodities
$(\bar{s}_1,\bar{t}_1),\ldots, (\bar{s}_{\bar k},\bar{t}_{\bar k})$.

We propose a learning-augmented mechanism, which we call \textsc{Potential} (Definition~\ref{def:potentialMech}). Given $\bar{r}$, it
appropriately modifies the latency functions, achieving 
a trade-off between consistency and robustness which is tuned by a confidence parameter $\beta\in [1,d+1]$. The following theorem, which is the main
result of this section, summarizes the consistency--robustness trade-off
as a function of $\beta$.

\begin{restatable}{theorem}{consRobustThm}
\label{thm:consrobust}
Given the predicted input rates $\bar r$ and a parameter $\beta\in [1,d+1]$, the \textsc{Potential($\beta$,$\bar r$)} mechanism has  
    \begin{align*}
        \mbox{Consistency: }& \left(\beta\left(1-\frac{d}{d+1}\left(\frac{\beta}{d+1}\right)^{1/d}\right)\right)^{-1}\,;\qquad
        \mbox{Robustness: } \beta(d+1)\,.
    \end{align*}  
\end{restatable}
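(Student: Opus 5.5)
The plan is to build the modified latencies from a ``twisted'' latency $h_e$ that interpolates between the Nash latency ($\beta=1$) and the marginal-cost latency ($\beta=d+1$), and then prove the two bounds separately. Concretely, for $\ell_e(x)=\sum_k a_{e,k}x^k$ I would take $h_e(x)=\sum_k \min\{\beta,k+1\}\,a_{e,k}x^k$, so that $\ell_e\le h_e\le \beta\ell_e$ pointwise. Let $\bar f$ be a minimizer of the convex potential $\Phi^h$ for demand $\bar r$; it is a Wardrop equilibrium with respect to $h$. The mechanism then uses a threshold rule: $\hat\ell_e(x)=\ell_e(x)$ for $x\le \bar f_e$, and $\hat\ell_e(x)=\max\{\ell_e(x),h_e(\bar f_e)\}$ for $x>\bar f_e$. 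The jump to $h_e(\bar f_e)$ is what turns the $h$-equilibrium $\bar f$ into an exact equilibrium of the modified game. Since $h_e$ is nondecreasing, this choice still satisfies $\ell_e\le\hat\ell_e\le\beta\ell_e$.

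\textbf{Robustness.} Let $f^\diamond$ be any minimizer of $\hat\Phi$ for the true demand $r$, and let $f^*$ be optimal for $\ell$ at demand $r$. I would chain four inequalities.
\begin{itemize}
\item Since $\hat\ell_e$ is nondecreasing, integrating $\ell_e\le\hat\ell_e$ gives $x\hat\ell_e(x)\le x\ell_e(x)+\int_0^x(\hat\ell_e-\ell_e)$, and hence $\hat C(f^\diamond)\le (d+1)\hat\Phi(f^\diamond)$. This uses $x\ell_e(x)\le(d+1)\int_0^x\ell_e$ for degree-$d$ polynomials with nonnegative coefficients.
\item By minimality, $\hat\Phi(f^\diamond)\le\hat\Phi(f^*)$.
\item From $\hat\ell_e\le\beta\ell_e$ we get $\hat\Phi(f^*)\le\beta\Phi(f^*)$.
\item Finally, $\Phi(f^*)\le C(f^*)$.
\end{itemize}
Together these give $\beta(d+1)$. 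If the first step is delicate because of the jump, I would instead bound $x\hat\ell_e(x)\le \beta x\ell_e(x)$ and compare through $\Phi^{\beta\ell}$; this gives the same constant.

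\textbf{Consistency.} There are two parts. First, I show that $\bar f$ minimizes $\hat\Phi$ at demand $r=\bar r$, so it qualifies as the prediction-derived minimizer. Since $\hat\Phi$ is a sum of one-dimensional functions, it suffices to show on each edge that $\hat\Phi_e(y)-\hat\Phi_e(\bar f_e)\ge g_e\,(y-\bar f_e)$ for some $g_e\in[\ell_e(\bar f_e),h_e(\bar f_e)]$. Taking $g_e=h_e(\bar f_e)$ works: for $y>\bar f_e$ we have $\hat\ell_e\ge h_e(\bar f_e)$, and for $y<\bar f_e$ we have $\hat\ell_e=\ell_e\le h_e(\bar f_e)$. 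Summing over edges and using that $\bar f$ is an $h$-equilibrium, i.e.\ $\sum_e h_e(\bar f_e)(y_e-\bar f_e)\ge0$ for feasible $y$, gives the claim. Along the way I will check that $\bar f$ is a user equilibrium of the discontinuous game. Since $\bar f_e$ lies on the threshold, $\hat C(\bar f)=C(\bar f)$.

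Second, I bound $C(\bar f)/C(f^*)$ by a smoothness-type argument. The $h$-equilibrium condition gives $\sum_e h_e(\bar f_e)\bar f_e\le\sum_e h_e(\bar f_e)f^*_e$. The left side is at least $\beta\, C(\bar f)$ on the monomials with $k+1\ge\beta$. The right side I split per monomial via Young's inequality $x^k y\le \lambda x^{k+1}+\mu(\lambda) y^{k+1}$, choosing $\lambda$ so that the loss term is absorbed. The resulting ratio should be $\bigl(\beta(1-\frac{d}{d+1}(\beta/(d+1))^{1/d})\bigr)^{-1}$.

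\textbf{Main obstacle.} I expect the main difficulty to be this last calculation. It requires handling all degrees $k\le d$ at once, including the monomials with $k+1<\beta$, which are not scaled by the full factor $\beta$, and then showing that the worst case is the top degree $d$, where optimizing $\lambda$ yields exactly the stated expression. My first attempt would be the Roughgarden-style inequality $\min\{\beta,k+1\}\,x^k y\le \mu x^{k+1}+c_k y^{k+1}$ with one common $\mu$, proving monotonicity in $k$ by elementary calculus.
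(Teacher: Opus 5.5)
Your robustness chain fails at its first link. The pointwise inequality $x\hat\ell_e(x)\le x\ell_e(x)+\int_0^x(\hat\ell_e-\ell_e)$ does not follow from monotonicity, and it is false just above the threshold. As $x\downarrow\bar f_e$, the quantity $x(\hat\ell_e(x)-\ell_e(x))$ tends to $\bar f_e(h_e(\bar f_e)-\ell_e(\bar f_e))$, which is positive whenever $h_e(\bar f_e)>\ell_e(\bar f_e)$, while the integral tends to $0$.

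Consequently $\hat C(f^\diamond)\le(d+1)\hat\Phi(f^\diamond)$ is false even at potential minimizers. Take a single edge with $\ell(x)=x^d$ (so $h(x)=\beta x^d$), $\bar r=1$, $\bar f=1$, and true demand $r=1+\epsilon$. The only feasible flow has $\hat C\to\beta$ and $\hat\Phi\to 1/(d+1)$, under your threshold rule and under the paper's alike, so $\hat C/\hat\Phi\to\beta(d+1)$.

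The correct generic estimate is $\hat\Phi_e(x)\ge\int_0^x\ell_e\ge x\ell_e(x)/(d+1)\ge x\hat\ell_e(x)/(\beta(d+1))$. Combined with your remaining links $\hat\Phi(f^*)\le\beta\Phi(f^*)\le\beta C(f^*)$, it only yields $\beta^2(d+1)$. Your fallback has the same defect. It gives $\hat C(f^\diamond)\le\beta C(f^\diamond)\le\beta(d+1)\Phi(f^\diamond)\le\beta(d+1)\hat\Phi(f^*)$, and comparing $\hat\Phi(f^*)$ through $\Phi^{\beta\ell}$ then costs a second factor $\beta$. Note that $\beta\Phi(f^*)\le C(f^*)$ fails as soon as there are monomials with $k+1<\beta$, e.g.\ constants. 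Because your robustness argument uses $h$ only through $\ell\le h\le\beta\ell$, it pays $\beta$ twice and does not reach $\beta(d+1)$ as written.

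The missing idea is to bound the potential of $f^*$ through $h$ itself; this is exactly where the coefficients $\min\{\beta,k+1\}$ matter. For both threshold rules, $\hat\ell_e\le h_e$ pointwise. Moreover, $\int_0^x h_e(t)\,dt=\sum_k\frac{\min\{\beta,k+1\}}{k+1}a_{e,k}x^{k+1}\le x\ell_e(x)$. Hence $\hat\Phi(f)\le\Phi^h(f)\le C(f)$ for every flow. With the lower bound above this gives $\hat C(f^\diamond)\le\beta(d+1)\hat\Phi(f^\diamond)\le\beta(d+1)\hat\Phi(f^*)\le\beta(d+1)C(f^*)$, which is the paper's two-claim proof.

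Your rule $\max\{\ell_e(x),h_e(\bar f_e)\}$ is not the paper's \textsc{Potential}, which uses $h_e(x)$ above $\bar f_e$. This is harmless, since everything needed holds for both.

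On consistency, your subgradient argument that $\bar f$ minimizes $\hat\Phi$ is correct; the paper instead uses $\hat\Phi(f)\ge\hat\Phi(f)+\Phi^h(\bar f)-\Phi^h(f)$. A direct variational-inequality computation can also replace the paper's appeal to the approximate-potential bound of Christodoulou, Koutsoupias and Spirakis. It will not work with one common $\mu$, however. The degree-$0$ terms force the left-hand weight down to $1-\mu$, and you get only a PoA-type bound (e.g.\ $(d+1)^{d+1}$ instead of $1$ at $\beta=d+1$).

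Instead, fix the coefficient of $y^{k+1}$ to $1$ and let the other coefficient depend on $k$. Setting $c_k=\min\{\beta,k+1\}$, use $c_kx^ky\le\mu_kx^{k+1}+y^{k+1}$. This gives $c_k-\mu_k=1$ for $k+1\le\beta$, and $c_k-\mu_k=\beta\left(1-\frac{k}{k+1}\left(\frac{\beta}{k+1}\right)^{1/k}\right)$ otherwise. The latter is decreasing in $k$ and is at most $1$ at $k=d$. So the minimum over $k$ is attained at $k=d$, and it equals the reciprocal of the stated consistency bound.
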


The consistency guarantee is evaluated on the prediction-derived potential minimizer used
to define the modified latencies, while the robustness guarantee holds for every
potential minimizer of the modified game. Increasing $\beta$ improves the consistency guarantee but worsens the robustness guarantee linearly in $\beta$. At $\beta=1$, the mechanism attains its strongest robustness guarantee, namely $d+1$, which is asymptotically a factor of $\Theta(\ln d)$ away from the PoS/PoA bound for non-atomic congestion games with polynomial latencies. At $\beta=d+1$, the mechanism is $1$-consistent, while its robustness guarantee becomes $(d+1)^2$. The above theorem gives the following corollary for the interesting and well-studied case of affine
latency functions, where $d=1$.

\begin{restatable}{corollary}{consrobustAffineCor}
\label{cor:consrobustAffine}
    Given the predicted input rates $\bar r$ and a parameter $\beta\in [1,2]$, the \textsc{Po\-ten\-tial($\beta$,$\bar r$)} mechanism for affine latency functions is $\frac{4}{(4-\beta)\beta}$-consistent and $2\beta$-robust.
\end{restatable}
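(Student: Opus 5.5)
The corollary is a direct specialization of Theorem~\ref{thm:consrobust} to $d=1$, so the plan is to substitute $d=1$ into both bounds and simplify.

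\textbf{Admissible range and robustness.} The range of the confidence parameter becomes $\beta\in[1,d+1]=[1,2]$. The robustness bound $\beta(d+1)$ becomes $2\beta$ immediately.

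\textbf{Consistency.} With $d=1$ we have $\frac{d}{d+1}=\frac12$ and $\left(\frac{\beta}{d+1}\right)^{1/d}=\frac{\beta}{2}$. Hence
\[
\beta\left(1-\tfrac12\cdot\tfrac{\beta}{2}\right)=\beta\cdot\frac{4-\beta}{4}=\frac{(4-\beta)\beta}{4},
\]
and its reciprocal is $\frac{4}{(4-\beta)\beta}$, as claimed.

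\textbf{Sanity checks.} The quantity $(4-\beta)\beta$ is positive on $[1,2]$, so the expression is well defined. At $\beta=1$ the bounds give consistency $4/3$ and robustness $2$; the value $4/3$ matches the classical Pigou PoA for affine latencies. At $\beta=2$ they give consistency $1$ and robustness $4$. Both agree with the endpoint remarks after Theorem~\ref{thm:consrobust}.

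No real obstacle arises: all of the substantive work lies in Theorem~\ref{thm:consrobust}, which we assume.
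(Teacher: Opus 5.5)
Your proposal is correct: substituting $d=1$ into Theorem~\ref{thm:consrobust} gives exactly $\frac{4}{(4-\beta)\beta}$ and $2\beta$. This is also how the paper frames the result ("the above theorem gives the following corollary"), and there is no circularity, since the proof of Theorem~\ref{thm:consrobust} does not use the corollary. The paper's written proof simply re-derives the $d=1$ case directly as a warmup, using the same ingredients: Lemma~\ref{lemma:enforce-nonatomic-unique} with the affine PoS bound for $\beta$-approximate equilibria for consistency, and the sandwich $\frac{1}{2\beta}\hat C(f)\le \hat\Phi(f)\le C(f)$ for robustness. So the two routes coincide in substance.
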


We additionally provide a stronger PoA-Robustness guarantee that applies to the worst-case social cost across {\em all} equilibria, not just to potential minimizers.

\begin{restatable}{theorem}{consrobustPoAThm}
\label{thm:consrobustPoA}

Given the predicted input rates $\bar r$ and a parameter $\beta\in [1,d+1]$, the \textsc{Potential($\beta$,$\bar r$)} mechanism has  
    \begin{align*}
        \mbox{PoA-Robustness: }&
            \begin{cases}
                \left(1-\frac{\beta d}{d+1}\left(\frac{\beta}{d+1}\right)^{1/d}\right)^{-1}, \quad \mbox{for }\beta \in [1, (d+1)^{\frac 1{d+1}})\\
                \beta^{d+1}, \qquad \qquad \qquad \qquad\quad\;\; \mbox{for } \beta \in [(d+1)^{\frac 1{d+1}}, d+1]
            \end{cases}
    \end{align*}  
\end{restatable}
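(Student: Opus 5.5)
The plan is a smoothness argument in the style of Roughgarden's $(\lambda,\mu)$ framework. The equilibrium condition will be applied to the \emph{modified} latencies $\hat\ell$ and then translated back to the original latencies $\ell$ through pointwise sandwich bounds on $\hat\ell$. I cannot see Definition~\ref{def:potentialMech} from here, so I will assume that, for every $e$ and $x$, the \textsc{Potential($\beta$,$\bar r$)} latencies satisfy
\[
\ell_e(x)\le \hat\ell_e(x)\le \beta\,\ell_e(x).
\]
I also assume that $\hat\ell_e$ is nondecreasing, with its only discontinuity a jump at the threshold $\bar f_e$, where it switches from $\ell_e$ to the twisted latency $h_e$. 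If the actual definition gives a different pointwise factor, the same argument should go through with that factor in place of $\beta$.

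\textbf{Step 1 (variational inequality).} Let $f$ be a user equilibrium for demand $r$, and let $f^*$ be an optimal flow for $\ell$. I first convert the $\liminf$ definition into
\[
\hat C_{\bar r}(f)=\sum_e \hat\ell_e(f_e)f_e\le \sum_e \hat\ell_e^{+}(f_e)\,f^*_e ,
\]
where $\hat\ell_e^{+}$ is the right limit. The argument shifts infinitesimal flow from used paths to the paths of $f^*$, which is legitimate because $\hat\ell_e$ is monotone. Using $\hat\ell^+_e\le \beta\ell_e$, which holds since $\ell_e$ is continuous, this gives $\hat C_{\bar r}(f)\le \beta\sum_e \ell_e(f_e)f^*_e$.

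\textbf{Step 2 (first branch).} I bound each term by the one-dimensional inequality
\[
\beta\,x^k y\le y^{k+1}+\mu_\beta\,x^{k+1}, \qquad \mu_\beta=\frac{\beta d}{d+1}\Bigl(\frac{\beta}{d+1}\Bigr)^{1/d}.
\]
This comes from maximizing $\beta x^k y-y^{k+1}$ over $y$, and the worst case among $k\le d$ is $k=d$. Summing with the nonnegative coefficients $a_{e,k}$ yields $\hat C_{\bar r}(f)\le C_{opt}(r)+\mu_\beta C(f)\le C_{opt}(r)+\mu_\beta\hat C_{\bar r}(f)$, where the last inequality uses $\ell\le\hat\ell$. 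Rearranging gives the bound $(1-\mu_\beta)^{-1}$. This is exactly the stated expression, and it is meaningful precisely when $\mu_\beta<1$, which corresponds to $\beta<(d+1)^{1/(d+1)}$.

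\textbf{Step 3 (second branch, the main obstacle).} For $\beta\ge (d+1)^{1/(d+1)}$ the bound $\mu_\beta<1$ fails, so a different charging is needed, and this step depends on the structure of $\hat\ell$. I would split the edges according to whether $f_e\le\bar f_e$, where $\hat\ell_e=\ell_e$, or $f_e>\bar f_e$, where the inflated latency $h_e$ is in force. On the first class the original-game smoothness bound applies with factor $1$. On the second class I would try to relate $h_e(x)$ to $\ell_e(\beta x)\le\beta^d\ell_e(x)$. I would then use the generalized inequality $\beta x^d y\le \lambda y^{d+1}+\mu x^{d+1}$ with $\lambda$ chosen so that the combined bound collapses to $\beta^{d+1}$.

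If that direct choice leaves a residual $(1-\mu)^{-1}$ factor, my fallback is a scaling comparison. I would compare $f$ against a flow in which the optimum is scaled by $\beta$ on the inflated edges, so that the equilibrium cost on those edges is dominated term by term by $\beta^{d+1}\ell_e(f^*_e)f^*_e$. Handling the jump at $\bar f_e$ correctly inside the $\liminf$ condition is the delicate point. I expect this branch to require the most care.

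Combining Steps 2 and 3, and taking the worst case over $r$ and $\bar r$, then gives the PoA-Robustness stated in Theorem~\ref{thm:consrobustPoA}.
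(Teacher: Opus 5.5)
You have a genuine gap. Your Step 1 is fine up to $\hat C(f)\le\sum_e\hat\ell^+_e(f_e)f^*_e$. The next move, replacing $\hat\ell^+_e$ by $\beta\ell_e$, throws away exactly what the theorem needs, and Step 2 then fails on the constant terms. For $k=0$ your inequality $\beta x^ky\le y^{k+1}+\mu_\beta x^{k+1}$ reads $\beta y\le y+\mu_\beta x$, which is false at $x=0$, $y>0$ whenever $\beta>1$.

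This cannot be patched under your stated assumptions, because the sandwich $\ell\le\hat\ell\le\beta\ell$ does not imply the first branch at all. Take uniform scaling $\hat\ell=\beta\ell$ on Pigou's network ($\ell_1\equiv1$, $\ell_2(x)=x$, unit demand). Equilibria are unchanged, so $\hat C=\beta$ against $C_{opt}=3/4$. The ratio $4\beta/3$ exceeds $4/(4-\beta^2)$ for every $1<\beta<(\sqrt{13}-1)/2$.

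The missing ingredient is the explicit form of $h_e$ from Definition~\ref{def:potentialMech}. Since $\hat\ell_e\le h_e$ pointwise and $h_e$ is continuous, you have $\hat\ell^+_e(f_e)\le h_e(f_e)$; this is the paper's Lifting Lemma. Moreover, $h_e$ leaves the constant coefficient $a_{e,0}$ unscaled, and for $k\ge1$ it multiplies $a_{e,k}$ by either $k+1\le\beta$ or $\beta$. In the first branch $\beta<2$, so $\lfloor\beta\rfloor=1$ and
$\hat C(f)\le\sum_e\bigl(a_{e,0}f^*_e+\beta\sum_{k\ge1}a_{e,k}f_e^kf^*_e\bigr)$.
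The constant term is now charged to $C(f^*)$ with coefficient $1$, and your Step 2 inequality is only needed for $k\ge1$. With that change, Step 2 is exactly the paper's argument. Your claim that the worst case is $k=d$ is also correct there, since $\beta<2$.

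Your explanation of why Step 3 is needed is also wrong. At $\beta=(d+1)^{1/(d+1)}$ you have $\mu_\beta=d/(d+1)<1$ and $(1-\mu_\beta)^{-1}=d+1=\beta^{d+1}$. So the first-branch bound is still finite beyond that point; it simply stops being the best available.

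The idea you are missing is to leave the coefficient of $y^{k+1}$ as a free parameter:
$x^ky\le\mu(\lambda)x^{k+1}+\lambda y^{k+1}$ with $\mu(\lambda)=\max\{\frac{1}{4\lambda},\frac{d}{(d+1)\sqrt[d]{\lambda(d+1)}}\}$.
The worst $k\in[1,d]$ is always an endpoint, because $\frac{d}{dk}\ln\frac{k}{(k+1)\sqrt[k]{\lambda(k+1)}}=\ln(\lambda(k+1))/k^2$. Combining this with the form of $h_e$ and with $\ell\le\hat\ell$ gives
$\hat C(f)\le\beta\mu(\lambda)\hat C(f)+\lambda\beta\,C(f^*)$ for every $\lambda\ge1/\beta$.
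The constraint $\lambda\ge1/\beta$ comes from the constant term, via $a_{e,0}f^*_e\le\lambda\beta\,a_{e,0}f^*_e$.

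Now minimize $\lambda\beta/(1-\beta\mu(\lambda))$. The unconstrained minimizer is $\lambda=\beta^d/(d+1)$, where $\beta\mu(\lambda)=d/(d+1)$ and the value is exactly $\beta^{d+1}$. This $\lambda$ is feasible iff $\beta^{d+1}\ge d+1$, which is precisely the branch point. Below it, the constraint binds at $\lambda=1/\beta$ and recovers the first branch. The comparisons with $\ell_e(\beta x)$ or with a $\beta$-scaled optimum that you sketch are not needed, and as written they do not yield any bound.
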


For the special case of affine latency functions, we obtain the following result, which serves as a warm-up for the proof of Theorem~\ref{thm:consrobustPoA}.

\begin{restatable}{corollary}{consrobustAffinePoACor}
\label{cor:consrobustAffinePoA}
    Given the predicted input rates $\bar r$ and a parameter $\beta\in [1,2]$, the \textsc{Po\-ten\-tial($\beta$,$\bar r$)} mechanism for affine latency functions has PoA-Robustness $\frac{4}{4-\beta^2}$ for $\beta \le \sqrt{2}$, and $\beta^2$ otherwise.
\end{restatable}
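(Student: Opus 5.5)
The plan is a smoothness-style argument for the modified game, comparing an arbitrary (user) equilibrium $f$ of the modified latencies $\hat\ell$ with an optimal flow $o$ of the original game for the same actual demand $r$. I take the structure of \textsc{Potential}($\beta,\bar r$) in the affine case to be as follows; I have not checked this against Definition~\ref{def:potentialMech}, and the proof hinges on it. For $\ell_e(x)=a_ex+b_e$, the modified latency coincides with $\ell_e$ up to the threshold $\bar f_e$ given by the prediction-derived potential minimizer. Beyond $\bar f_e$ it is governed by the twisted latency $h_e$ of the approximate potential. So $\ell_e\le\hat\ell_e$ everywhere, and $\hat\ell_e$ is dominated by an affine function whose slope is inflated by at most a factor $\beta$ relative to $\ell_e$, plus a possible jump at $\bar f_e$. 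The first step is to record these two pointwise facts: $\hat\ell_e(x)\le \beta a_e x+b_e$ for $x$ not at the discontinuity, and a separate bound on the right-limit value $\hat\ell_e(y^+)$ at any load $y$.

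Second, I derive the variational inequality for user equilibria. Since $f$ is a user equilibrium, shifting an infinitesimal amount of flow from used paths to the paths of $o$ cannot help. Summing over commodities and using linearity of path costs in edge costs gives
\[\sum_e \hat\ell_e(f_e)f_e\;\le\;\sum_e \liminf_{\epsilon\downarrow0}\hat\ell_e(f_e+\epsilon)\,o_e .\]
Here the right-hand side uses right limits, because of the discontinuities of $\hat\ell_e$. The remaining task is then an edge-by-edge inequality of the form
\[\hat\ell_e(f_e^+)\,o_e\le \mu\,\hat\ell_e(f_e)f_e+\lambda\,\ell_e(o_e)o_e ,\]
which yields PoA-Robustness at most $\lambda/(1-\mu)$.

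Third, I prove the per-edge inequality in two regimes.

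\emph{Regime (a):} the right-limit value at $f_e$ is controlled by the inflated affine bound, $\hat\ell_e(f_e^+)\le \beta a_ef_e+b_e$. The constant term gives $b_eo_e\le \ell_e(o_e)o_e$. The linear term is handled by AM--GM: $\beta a_ef_eo_e\le \tfrac{\beta^2}{4}a_ef_e^2+a_eo_e^2$. Together with $a_ef_e^2\le\hat\ell_e(f_e)f_e$, this gives $\mu=\beta^2/4$ and $\lambda=1$, hence the bound $4/(4-\beta^2)$.

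\emph{Regime (b):} the jump dominates, for instance when $f_e$ sits exactly at $\bar f_e$. Then I bound $\hat\ell_e(f_e^+)o_e$ using only $o_e$, taking $\mu=0$ and $\lambda=\beta^2$. To do this I expect to use that the jump height at $\bar f_e$ is at most what $h_e$ takes there, together with a case split on $o_e\ge f_e$ versus $o_e<f_e$. In the latter case I charge to $\hat\ell_e(f_e)f_e$ instead.

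Taking the better of the two regimes for each $\beta$ gives $\max\{4/(4-\beta^2),\beta^2\}$. Since $4/(4-\beta^2)\le\beta^2$ iff $(\beta^2-2)^2\ge0$, with the crossover at $\beta=\sqrt2=(d+1)^{1/(d+1)}$, this matches the stated piecewise formula.

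The main obstacle is regime (b): justifying that the jump of $\hat\ell_e$ at the threshold costs at most a factor $\beta^2$, and no more, against $\ell_e(o_e)o_e$. This needs the precise form of the threshold in Definition~\ref{def:potentialMech}. What I would try first is to bound the jumped value by $h_e(\bar f_e)\le\beta\,\ell_e(\beta\bar f_e)$-type expressions and compare $\bar f_e$ with $o_e$ by monotonicity. I would then verify tightness on a two-link instance with the equilibrium pinned at the threshold.
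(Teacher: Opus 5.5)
Your second step is sound: it is the paper's Lifting Lemma, with right limits in place of $h_e(f_e)$. Regime (a) alone already proves the $\beta\le\sqrt2$ branch. In the affine case $h_e(x)=\beta a_ex+b_e$ and $\hat\ell_e(x)=h_e(x)$ for $x>\bar f_e$, so the jump at $\bar f_e$ lands exactly on $h_e(\bar f_e)$. Hence $\hat\ell_e(f_e^+)\le\beta a_ef_e+b_e$ holds for \emph{every} load, including $f_e=\bar f_e$, and your AM--GM step gives $\hat C(f)\le\frac{\beta^2}{4}\hat C(f)+C(o)$ globally. There is no separate ``jump'' regime, and the obstacle you single out does not arise.

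The genuine gap is the branch $\beta\in(\sqrt2,2]$. Your final comparison is reversed: $\frac{4}{4-\beta^2}\ge\beta^2$ is equivalent to $(\beta^2-2)^2\ge0$, so it always holds, with equality only at $\beta=\sqrt2$. Thus $\max\{4/(4-\beta^2),\beta^2\}=4/(4-\beta^2)$. This exceeds $\beta^2$ for $\beta>\sqrt2$ and blows up at $\beta=2$, so your argument never yields the $\beta^2$ branch.

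Regime (b) cannot rescue this, for three reasons. First, an edgewise bound $\hat\ell_e(f_e^+)o_e\le\beta^2\ell_e(o_e)o_e$ with $\mu=0$ is false once $f_e$ is large relative to $o_e$ (take $b_e=0$). Second, charging part of that term to $\hat\ell_e(f_e)f_e$ brings back $\mu>0$. Third, edgewise inequalities with edge-dependent constants only sum to $\hat C(f)\le(\max\mu)\hat C(f)+(\max\lambda)C(o)$, so you cannot pick the better regime edge by edge.

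The missing idea is that both branches come from one inequality with a tunable AM--GM weight $\gamma$, applied uniformly to all edges. Write $\beta a_ef_eo_e\le\beta a_e\bigl(\frac{1}{4\gamma}f_e^2+\gamma o_e^2\bigr)$. Absorb the constant term via $b_eo_e\le\gamma\beta\,b_eo_e$, which needs $\gamma\beta\ge1$. This gives $\hat C(f)\le\frac{\beta}{4\gamma}\hat C(f)+\gamma\beta\,C(o)$, i.e. a ratio of $\frac{4\gamma^2\beta}{4\gamma-\beta}$.

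Its unconstrained minimizer $\gamma=\beta/2$ is admissible iff $\beta\ge\sqrt2$, and it gives $\beta^2$ (in your notation $\mu=\tfrac12$, $\lambda=\beta^2/2$). Otherwise the constraint binds at $\gamma=1/\beta$, which is exactly your regime (a). So the switch at $\sqrt2$ reflects when the constant terms $b_eo_e$ can be absorbed into $\gamma\beta\,C(o)$, not which edges sit at their thresholds.
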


Two cases of the parameter $\beta$ are of special interest; when
$\beta = d+1$, the consistency bound becomes $1$ and the PoA-Robustness bound becomes
$(d+1)^{d+1}$, while for $\beta=1$, the consistency and PoA-Robustness bounds are both
equal to
$\left(1-\frac{d}{d+1}\left(\frac{1}{d+1}\right)^{1/d}\right)^{-1},$
 which matches the standard PoA bound of the original game {\em without predictions}.

The structure of this section is as follows. In Section~\ref{sec:threshold},
we introduce the general toolbox to design coordination mechanisms based on predicted rates, including the \textsc{Potential} mechanism and the main technical lemmas
used in its analysis. Section~\ref{sec:Affine} then provides, as a warmup, the analysis of \textsc{Potential} for the case of affine latency functions (Corollary~\ref{cor:consrobustAffine}). Next, in
Section~\ref{sec:Polynomials}, we extend the analysis of \textsc{Potential} for bounded-degree polynomials (Theorem~\ref{thm:consrobust}).
Finally, in Section~\ref{sec:nonAtomicPoA}, we establish PoA-Robustness guarantees for both affine (Corollary~\ref{cor:consrobustAffinePoA}) and
polynomial latency functions (Theorem~\ref{thm:consrobustPoA}).

\subsection{The {\sc Potential} Mechanism}
\label{sec:threshold}

We now present the main machinery for designing learning-augmented mechanisms
for non-atomic congestion games. We first introduce a generic coordination
mechanism, called \textsc{Threshold}, which takes as input a threshold flow
$f$ and a set of latency functions $h_e$, one for each resource $e\in E$.
We then state two useful properties of \textsc{Threshold}, summarized in
Lemmas~\ref{lemma:enforce-nonatomic-unique} and~\ref{lem:LikeVariationalIneq}.
Finally, we define the main mechanism of this section, called
\textsc{Potential}. This learning-augmented mechanism takes as input an
instance augmented with a demand prediction $(G,\bar r,\ell)$, together with a
confidence parameter $\beta$. Based on the predicted demand $\bar r$, it
modifies the latencies to $\hat{\ell}$, and the resulting equilibria are
evaluated on the actual instance $(G,r,\hat{\ell})$.

We begin with the formal definition of the \textsc{Threshold}
mechanism. Intuitively, \textsc{Threshold} is triggered once the
actual flow {\em exceeds some threshold} $f_e$, and modifies the
latency of this edge according to $h_e$. If the actual flow does not exceed the threshold, it uses the original latency function. Since
coordination mechanisms with continuous modifications have been shown
to yield no improvement~\cite{algorithmica/ChristodoulouMP14}, we
employ discontinuous latencies instead.

\begin{definition}[\textsc{Threshold}$(h,f)$] \label{def:simpleCM} 
    Given a set of latency functions $h=(h_e)_{e \in E}$ and a ``threshold'' flow $f=(f_e)_{e\in E}$, such that $h_e(x) \ge \ell_e(x)$ for all $x \geq 0$, we define the coordination mechanism \textsc{Threshold}$(h,f)$ with the following modified latency functions:
        \[
        \hat{\ell}_e(x)=
        \begin{cases}
            h_e(x), & x > f_e,\\
            \ell_e(x), & \text{otherwise}.
        \end{cases}
        \]
\end{definition}

An important property of the \textsc{Threshold} mechanism is that, when the
threshold flow is chosen as a minimizer of the potential $\Phi^h$, this same
flow is also a potential minimizer with respect to the modified latency
functions $\hat{\ell}$. Moreover, this particular flow preserves its original
social cost, since it never triggers the increased latencies. This property is
established in the following lemma.

\begin{lemma}
\label{lemma:enforce-nonatomic-unique}
    Given a predicted input rate $\bar r$ and a set of non-decreasing latency functions $(h_e)_{e \in E}$, with $\ell_e(x) \le h_e(x)$ for all $x \ge 0$ and $e \in E$, let $\fpred$ be a minimizer of $\Phi^h$ for $r = \bar r$, and let $\hat \ell$ be the modified latencies produced by  \textsc{Threshold}$(h,\fpred)$.
    Then $\fpred$ is a minimizer of $\Phi^{\hat{\ell}}$ for demand $\bar r$ and preserves the original social cost, i.e., $ \hat C(\fpred)=C(\fpred)$.
\end{lemma}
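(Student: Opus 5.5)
The plan is to compare the potentials $\Phi^{\hat\ell}$ and $\Phi^h$ edge by edge, using that $\hat\ell_e \le h_e$ pointwise and $\hat\ell_e=\ell_e\le h_e$ below the threshold. The second claim is immediate. Since $\hat\ell_e(x)=\ell_e(x)$ whenever $x\le \fpred_e$, evaluating at $x=\fpred_e$ gives $\hat\ell_e(\fpred_e)=\ell_e(\fpred_e)$ for every edge. Hence $\hat C(\fpred)=\sum_e \hat\ell_e(\fpred_e)\fpred_e=C(\fpred)$.

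For the minimizer claim, fix any feasible flow $g$ for demand $\bar r$. I would show that for every edge $e$,
\[
\int_0^{g_e}\hat\ell_e(x)\,dx-\int_0^{\fpred_e}\hat\ell_e(x)\,dx \;\ge\; \int_0^{g_e}h_e(x)\,dx-\int_0^{\fpred_e}h_e(x)\,dx ,
\]
and then sum over $e$. This gives $\Phi^{\hat\ell}(g)-\Phi^{\hat\ell}(\fpred)\ge \Phi^h(g)-\Phi^h(\fpred)\ge 0$, where the last inequality holds because $\fpred$ minimizes $\Phi^h$ over flows with demand $\bar r$. The per-edge inequality splits into two cases.

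\textbf{Case $g_e\ge \fpred_e$.} Both differences are integrals over $(\fpred_e,g_e]$. On this interval $\hat\ell_e=h_e$ by definition of \textsc{Threshold}, so the two sides are equal. The single point $x=\fpred_e$ has measure zero and does not affect the integrals.

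\textbf{Case $g_e<\fpred_e$.} The left side equals $-\int_{g_e}^{\fpred_e}\ell_e(x)\,dx$, because $\hat\ell_e=\ell_e$ on $[0,\fpred_e]$. The right side equals $-\int_{g_e}^{\fpred_e}h_e(x)\,dx$. Since $\ell_e\le h_e$, the left side is at least the right side.

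Neither case is a real obstacle. The only subtle point is that $\hat\ell$ is discontinuous, so one must confirm that the Beckmann potential $\int_0^{f_e}\hat\ell_e$ is still well defined. It is, because $\hat\ell_e$ is piecewise polynomial with a single jump, hence Riemann integrable. I also note that monotonicity of $h$ is not needed for this argument; it matters only elsewhere, for equilibrium interpretations.
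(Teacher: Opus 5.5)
Your proof is correct and is essentially the paper's argument. The paper likewise adds $\Phi^h(\bar f)-\Phi^h(f)\le 0$ to $\hat\Phi(f)$ and compares edge by edge. Its split into triggered edges ($f_e>\bar f_e$) and untriggered edges ($f_e\le\bar f_e$) matches your two cases, and it obtains $\hat C(\bar f)=C(\bar f)$ from the fact that $\bar f$ never triggers $h$.

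One minor point about well-definedness: in the lemma $h_e$ is only assumed non-decreasing, not polynomial, and piecewise polynomiality holds only for the specific $h(\beta)$ of \textsc{Potential}. The Riemann integrability of $\hat\ell_e$ should therefore rest on the monotonicity of $h_e$, which is also why that assumption is not entirely idle here.
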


\begin{proof}
    Let $\hat\Phi$, $\Phi^h$, and $\Phi^{\ell}$ be the potentials induced by the latencies $\hat\ell$, $h$, and $\ell$, respectively. Let $\f$ be any feasible flow for demand $\bar r$. Define $E_h$ to be the set of edges for which the increased latencies $h$ are triggered, i.e., $f_e> \fpred_e$, and let $E_\ell = E\setminus E_h $ be the rest of the edges, for which the original latencies are used. The potential of $\f$ is bounded as follows:
    
\begin{align*}
\hat\Phi({\f}) 
&\geq \hat\Phi({\f}) +\Phi^h(\fpred) - \Phi^h(\f)\\
&=\sum_{e\in E_h}\left(\hat\Phi_e({\f}) +\Phi_e^h(\fpred) - \Phi_e^h(\f)\right)+ \sum_{e\in E_\ell}\left(\hat\Phi_e({\f}) +\Phi_e^h(\fpred) - \Phi_e^h(\f)\right)\\
&=\sum_{e\in E_h}\left(\left(\int_{0}^{\fpred_e}\ell_e(t)dt +\int_{\fpred_e}^{\f_e}h_e(t)dt\right) - \int_{\fpred_e}^{\f_e}h_e(t)dt \right)+ \sum_{e\in E_\ell}\left(\Phi_e^{\ell}({\f}) +\Phi_e^h(\fpred) - \Phi_e^h(\f)\right)\\
&= \sum_{e\in E_h}\Phi^{\ell}_e(\fpred)+\sum_{e\in E_\ell}\left(\Phi^{\ell}_e(\f)+ \int_{\f_e}^{\fpred_e}h_e(t)dt\right)\\
&= \sum_{e\in E_h}\Phi^{\ell}_e(\fpred)+ \sum_{e\in E_\ell}\left( \int_0^{\f_e}\ell_e(t)dt+\int_{\f_e}^{\fpred_e}h_e(t)dt\right)\\
&\geq\sum_{e\in E_h}\Phi^{\ell}_e(\fpred)+ \sum_{e\in E_\ell} \Phi^{\ell}_e(\fpred)\\
&= \hat\Phi({\fpred}), 
\end{align*}

where the first inequality is due to the fact that $\fpred$ minimizes $\Phi^h$, and the second due to $h_e(x)\geq \ell_e(x)$ for all $e\in E$ and $x\geq 0$. The last equality is due to the fact that the profile $\fpred$ never triggers the increased latencies $h$, hence $\hat{\Phi}(\fpred)= \Phi^\ell(\fpred)$.

The inequality above shows that $\hat\Phi(\f)\ge \hat\Phi(\fpred)$
for every feasible flow $\f$ for demand $\bar r$. Therefore, $\fpred$ is a minimizer of $\hat\Phi$ for demand $\bar r$. Moreover, since $\fpred$ never triggers the increased latencies, we also have $    \hat C(\fpred)=C(\fpred)$.
\end{proof}

The second important property of \textsc{Threshold} helps to provide robustness upper bounds.
The following lemma provides a variational inequality that
relates the social cost of any equilibrium w.r.t. the {\em
  modified} latencies to a benchmark flow under the auxiliary latencies $h$.

\begin{lemma}[Lifting Lemma] 
  \label{lem:LikeVariationalIneq}
  Given a set of continuous and non-decreasing
  latency functions $h=(h_e)_{e \in E}$, such that
  $\ell_e(x) \le h_e(x)$, for all $x\ge 0$ and $e \in E$, and a ``threshold'' flow $f^h=(f^h_e)_{e\in E}$, let
  $\hat{\ell}$ be the modified latencies of \textsc{Threshold}$(h,f^h)$. For any demand $r$, let
  $f$ be an equilibrium flow under $\hat{\ell}$ and
  $f^*$ be any other feasible flow. Then, 
\begin{align*}
    \sum_{e \in E}\hat{\ell}_e(f_e)f_e & \le \sum_{e \in E} h_e(f_e)f_e^*\,.
\end{align*}
\end{lemma}

\begin{proof}
We prove the equivalent path-based inequality
    \begin{align*}
    \sum_{p \in \mathcal{P}}\hat{\ell}_p(f)f_p & \le \sum_{p \in \mathcal{P}} h_p(f)f_p^*\,.
\end{align*}

For each commodity $i$, let $p_i \in \arg\min_{p\in \mathcal{P}_i}h_p(f)$. Since $f$ is an equilibrium with respect to $\hat\ell$, then for any path $p \in \mathcal{P}_i$ with $f_{p} > 0$, it holds that $\hat{\ell}_{p}(f) \le \liminf_{\epsilon \downarrow 0}\hat{\ell}_{p_i}(f + \epsilon\boldsymbol{1}_{p_i} -\epsilon\boldsymbol{1}_{p})\leq h_{p_i}(f)$. Then,
\begin{align*}
    \sum_{p \in \mathcal{P}}\hat{\ell}_p(f)f_p & 
    \le \sum_{i\in \mathcal{I}}\sum_{p \in \mathcal{P}_i}h_{p_i}(f)f_p \\
    & = \sum_{i\in \mathcal{I}}h_{p_i}(f)\sum_{p \in \mathcal{P}_i}f_p \\
    & = \sum_{i\in \mathcal{I}}h_{p_i}(f)\sum_{p \in \mathcal{P}_i}f^*_p \\
    & \le \sum_{i\in \mathcal{I}} \sum_{p \in \mathcal{P}_i} h_p(f)f_p^*\\
    & = \sum_{p \in \mathcal{P}} h_p(f)f_p^*\,,
\end{align*}
where the second equality is due to the fact that $\sum_{p \in \mathcal{P}_i}f_p=\sum_{p \in \mathcal{P}_i}f^*_p=r_i$, for all $i\in \mathcal{I}$, and the last inequality is derived by the definition of $p_i$.
\end{proof}

We are now ready to define the \textsc{Potential} 
mechanism, that takes as input a predicted rate $\bar r$ as well as a
confidence parameter $\beta$. It is then defined as a specific instantiation of
the \textsc{Threshold} mechanism, for a carefully selected
latency function $h(\beta)$ parametrized by $\beta$. For the case of
polynomials, $\beta \in [1,d+1]$, where $d$ is the maximum degree over
all latency functions. We define the latency functions $h(\beta)$ according to the potential function of the best
$\beta$-approximate equilibrium provided in
\cite{algorithmica/ChristodoulouKS11}, and with threshold flow equal to the prediction-derived potential minimizer.

\begin{definition} [\textsc{Potential($\beta$,$\bar r$)}]
\label{def:potentialMech}
    Given a parameter $\beta \in [1,d+1]$, we define the polynomial latency function $h(\beta)=(h_e)_{e\in E}$ as, 
    \[h_e(x)=\sum_{k=0}^{\lfloor\beta\rfloor-1}(k+1)a_{e,k}x^k+\sum_{k=\lfloor\beta\rfloor}^d\beta a_{e,k}x^k\,.\] 
        Let $\fpred$ be a potential minimizer of $\Phi^h$ for the predicted demand $\bar r$. 
    Then, the \textsc{Potential}$(\beta,\bar r)$ mechanism is the learning-augmented mechanism that uses the modified latencies produced by     \textsc{Threshold}$(h(\beta),\fpred)$.
    
\end{definition}

We analyze the \textsc{Potential} mechanism w.r.t. consistency and robustness in the next sections, first for affine latencies as a warmup (Sec.~\ref{sec:Affine}) and then for bounded-degree polynomials with nonnegative coefficients (Sec~\ref{sec:Polynomials}).

\subsection{Warmup - Affine Latencies}
\label{sec:Affine}
In this section, we provide consistency and robustness analysis of the
{\sc{Potential}} mechanism for the case of affine latency functions of
the form $\ell_e(x)=a_ex+b_e$, where $a_e$ and $b_e$ are non-negative
coefficients, specific to each resource $e$. Let $\beta$ be the
confidence parameter and $\bar r$ be the predicted demand, then the
threshold function $h(\beta)$ used by the {\sc{Potential}}$(\beta,\bar {r})$ boils down to $h_e(x)=\beta a_ex+b_e$,
for all $e\in E$, and the respective potential to
\[\Phi^\beta(f) = \sum_{e\in E}\int_{x=0}^{f_e}\frac{h_e(x)}{\beta}dx = \sum_{e\in E}\left(\frac{1}{2} a_ef^2_e+\frac{1}{\beta}b_ef_e\right). \]
Let $\fpred$ be a potential minimizer of $\Phi^h$ for demand equal to $\bar r$. Then the modified latencies of {\sc{Potential}}$(\beta,\bar {r})$ are defined as follows
        \[\hat{\ell}_e(x)=
    \begin{cases}
\beta a_ex+b_e, &  x > \fpred_e\\
a_ex+b_e, & \text{otherwise}.
    \end{cases}\]

\consrobustAffineCor*

\begin{proof}

  \noindent{\bf Consistency.}  
    We first show the consistency bound. Suppose that the prediction is accurate,
  i.e., $\bar r=r$. Let $f^*$ be the optimal flow w.r.t. the original latencies
  $\ell$. By Lemma~\ref{lemma:enforce-nonatomic-unique},
  the potential minimizer $\fpred$ is also a minimizer of
  $\hat{\Phi}$ and preserves its original social cost, i.e., $      \hat C(\fpred)=C(\fpred)$. 
  Moreover, $\fpred$ minimizes $\Phi^h$. Equivalently, it minimizes
  $\Phi^h/\beta$, which is the approximate potential used in \cite{algorithmica/ChristodoulouKS11}. Therefore, by the bound of \cite{algorithmica/ChristodoulouKS11} for $\beta$-approximate Wardrop
  equilibria,\footnote{This approach was used in \cite{algorithmica/ChristodoulouKS11} to bound the PoS of
  $\beta$-approximate Wardrop equilibria.} $
      C(\fpred)
      \le
      \frac{4}{(4-\beta)\beta}C(f^*)$.
  Combining this with $\hat C(\fpred)=C(\fpred)$ yields $\hat C(\fpred)
      \le
      \frac{4}{(4-\beta)\beta}C(f^*)$,
  which gives the required consistency bound.

\noindent{\bf Robustness.}
Let $f$ be any feasible flow and let $E_h$ be the set of edges for which $f_e > \bar{f}_e$, i.e., $\hat{\ell}_e(f_e)=h_e(f_e)$, and let $E_\ell \subseteq E \setminus E_h$ be the rest of the edges, for which the original latencies are used. Then, $\hat{\Phi}(f)$ can be written as

\begin{align*}
\hat{\Phi}(f)
&=\sum_{e\in E} \int_0^{f_e}\hat{\ell}_e(t)dt\\
&=\sum_{e \in E_\ell}\int_0^{f_e}(a_et+b_e)dt
+\sum_{e\in E_h}\left(
    \int_0^{\bar f_e}(a_et+b_e)dt
    +\int_{\bar f_e}^{f_e}(\beta a_et+b_e)dt
\right) \\
&=\sum_{e \in E_\ell}\left(\frac{1}{2}a_ef_e^2+b_ef_e\right)
+\sum_{e\in E_h}\left(
    \frac{\beta}{2}a_ef_e^2+b_ef_e
    -\frac{\beta-1}{2}a_e\bar f_e^2
\right) \\
&\le \sum_{e \in E}a_ef_e^2+b_ef_e= C(f)\,,
\end{align*}
where the inequality uses $\beta\le 2$ and $\bar f_e< f_e$ for every
$e\in E_h$.

Moreover,
\begin{align*}
    \hat\Phi(f)
    &=
    \sum_{e \in E_\ell}\left(\frac{1}{2}a_ef_e^2+b_ef_e\right)
    +
    \sum_{e\in E_h}\left(
        \frac{\beta}{2}a_ef_e^2+b_ef_e
        -\frac{\beta-1}{2}a_e\bar f_e^2 \right
    ) \\
    &\ge
    \sum_{e \in E_\ell}\left(\frac{1}{2}a_ef_e^2+b_ef_e\right)
    +
    \sum_{e\in E_h}\left(\frac{1}{2}a_ef_e^2+b_ef_e\right) \\
    &\ge
    \frac{1}{2\beta}\left(
        \sum_{e\in E_\ell}(a_ef_e^2+b_ef_e)
        +
        \sum_{e\in E_h}(\beta a_ef_e^2+b_ef_e)
    \right) \\
    &=
    \frac{1}{2\beta}\hat C(f)\,.
\end{align*}
Combining the above inequalities yields $\frac{1}{2\beta}\hat{C}(\fpot)\le\hat{\Phi}(\fpot)\le \hat{\Phi}(f^*)\le C(f^*)$, since they hold for any $f$, and therefore, for $\fpot$ and $f^*$ too, while the middle inequality follows from the fact that $\fpot$ is a potential minimizer. This establishes the desired robustness bound and concludes the proof.
\end{proof}

\subsection{Bounded-degree Polynomials} 
\label{sec:Polynomials}

We now show how to extend our analysis to the general case where latency functions are polynomials of maximum degree $d$. Let $\beta \in [1,d+1]$.
The consistency argument remains unchanged, hence,
it is equal to the PoS of $\beta$-approximate equilibria in non-atomic congestion games for polynomial latency functions. Regarding robustness, we bound the inefficiency of potential minimizers to get a guarantee for the PoS for the modified game.

\consRobustThm*

\begin{proof}
\noindent{\bf Consistency.}
    We first establish the consistency upper bound. Similarly to the proof for the affine latency functions, we obtain that the consistency is equal to the PoS of the best $\beta$-approximate equilibrium. Suppose that the prediction is
  accurate, i.e., $\bar r=r$.  The \textsc{Potential($\beta$,$\bar r$)}
  mechanism modifies the latencies according to \textsc{Threshold}$(h,\fpred)$, for
  \[
h_e(x)=
\sum_{k=0}^{\lfloor\beta\rfloor-1}(k+1)a_{e,k}x^k
+
\sum_{k=\lfloor\beta\rfloor}^d\beta a_{e,k}x^k\,,
\qquad \forall e\in E,
\]
and $\fpred$ is a potential minimizer of $\Phi^h$ for input rate
$\bar r=r$. By Lemma~\ref{lemma:enforce-nonatomic-unique}, $\fpred$ is a minimizer of
$\hat{\Phi}=\Phi^{\hat\ell}$ and preserves its original social cost, i.e., $    \hat C(\fpred)=C(\fpred)$. Moreover, $\fpred$ minimizes $\Phi^h$. In \cite{algorithmica/ChristodoulouKS11} it is shown that any flow $f$ that minimizes the approximate potential 
  \[\Phi^\beta(f) = \sum_{e\in E}\int_{0}^{f_e}\frac{h_e(x)}{\beta}dx \]
  has social cost w.r.t. the original latencies $\ell$ at most a factor of 
  $\alpha = \left(\beta\left(1-\frac{d}{d+1}\left(\frac{\beta}{d+1}\right)^{1/d}\right)\right)^{-1}$ away from the optimal cost.  Note that $\fpred$ also minimizes $\Phi^\beta=\Phi^h/\beta$, and since $\hat{C}(\fpred)=C(\fpred)$, the mechanism 
 is $\alpha$-consistent.

\noindent{\bf Robustness.}
Let $f$ be any feasible flow and let $E_h$ be the set of edges for which $f_e>\fpred_e$, i.e. $\hat{\ell}_e(f_e)=h_e(f_e)$ and let $E_\ell = E\setminus E_h $ be the rest of the edges, where the original latencies are used. Then $\hat{\Phi}(f)$ can be written as
\[
\hat{\Phi}(f) = \sum_{e\in E} \int_{0}^{f_e}\hat{\ell}_e(t)dt 
=\sum_{e\in E_{\ell}} \int_{0}^{f_e}\ell_e(t)dt  + \sum_{e\in E_h} \left(\int_0^{\bar f_e} \ell_e(t) dt + \int_{\bar f_e}^{f_e}h_e(t)dt \right)\,. 
\]
We split the rest of the proof into two claims. We first show that $\hat{\Phi}(f)\leq C(f)$ and then that $\hat{\Phi}(f)\geq \frac{1}{\beta(d+1)}\hat{C}(f)$. We finally combine them to get the robustness bound.

\begin{claim}
    $\hat{\Phi}(f)\leq C(f)\,$.
\end{claim}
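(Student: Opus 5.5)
The plan is to prove the inequality edge by edge. Since $C(f)=\sum_{e\in E}\ell_e(f_e)f_e$ and $\hat\Phi(f)$ has already been split over $E_\ell$ and $E_h$, it suffices to show that each edge's contribution to $\hat\Phi(f)$ is at most $\ell_e(f_e)f_e$. By linearity in the coefficients $a_{e,k}\ge 0$, this further reduces to checking the claim monomial by monomial.

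\textbf{Edges in $E_\ell$.} Here the contribution is $\int_0^{f_e}\ell_e(t)\,dt$. Since $\ell_e$ is non-decreasing, this is at most $\ell_e(f_e)f_e$. Monomial-wise it reads $\frac{a_{e,k}f_e^{k+1}}{k+1}\le a_{e,k}f_e^{k+1}$.

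\textbf{Edges in $E_h$.} Write $y=\bar f_e< f_e=x$. For each degree $k$, the contribution is
\[
a_{e,k}\left(\frac{y^{k+1}}{k+1}+c_k\,\frac{x^{k+1}-y^{k+1}}{k+1}\right),
\]
where $c_k=k+1$ if $k\le\lfloor\beta\rfloor-1$ and $c_k=\beta$ otherwise.

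For $k\le \lfloor\beta\rfloor-1$, this equals $a_{e,k}\bigl(x^{k+1}-\tfrac{k}{k+1}y^{k+1}\bigr)\le a_{e,k}x^{k+1}$.

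For $k\ge\lfloor\beta\rfloor$, we have $\beta<\lfloor\beta\rfloor+1\le k+1$, so $c_k/(k+1)\le 1$. The expression is then
\[
a_{e,k}\left(\frac{c_k}{k+1}\,x^{k+1}+\frac{1-c_k}{k+1}\,y^{k+1}\right)\le a_{e,k}x^{k+1},
\]
because $c_k\ge 1$ makes the second term nonpositive and the first coefficient is at most $1$. The borderline integer case $\beta=d+1$ only involves the first sum, where $c_k=k+1$, so it is covered by the previous case.

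Summing over all degrees and all edges gives $\hat\Phi(f)\le C(f)$. The only delicate point is the coefficient comparison $c_k\le k+1$ together with $c_k\ge 1$, which follows from $\beta\ge 1$ and the choice of the split index $\lfloor\beta\rfloor$ in $h$. This is exactly where the design of $h(\beta)$ enters; the rest is routine.
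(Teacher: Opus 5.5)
Your proof is correct and is essentially the paper's argument: an edge-wise, monomial-wise coefficient comparison that uses $\beta\le k+1$ for $k\ge\lfloor\beta\rfloor$. The only cosmetic difference is in handling the triggered edges. The paper first removes the dependence on $\bar f_e$ via $\ell_e\le h_e$, bounding the $E_h$ contribution by $\int_0^{f_e}h_e(t)\,dt$. You instead keep the $\bar f_e^{k+1}$ terms and discard them because they are nonpositive, which is the same inequality written per monomial.
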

\begin{proof}
The part of the potential containing edges in $E_h$ can be written as follows 
\begin{align*}
\sum_{e\in E_h}\left(\int_0^{\bar f_e}\ell_e(t)dt+\int_{\bar f_e}^{f_e}h_e(t)dt\right) & \le 
\sum_{e \in E_h}\int_{0}^{f_e}h_e(t)dt \\
& = \sum_{e \in E_h}\sum_{k=0}^{\lfloor\beta\rfloor-1}(k+1)\int_{0}^{f_e}a_{e,k}t^kdt + \sum_{e \in E_h}\sum_{k=\lfloor\beta\rfloor}^{d}\beta\int_{0}^{f_e}a_{e,k}t^kdt \\
& = \sum_{e \in E_h}\sum_{k=0}^{\lfloor\beta\rfloor-1}a_{e,k}f_e^{k+1} + \sum_{e \in E_h}\sum_{k=\lfloor\beta\rfloor}^{d}\frac{\beta}{k+1} a_{e,k}f_e^{k+1} \\
& \leq \sum_{e \in E_h}\sum_{k=0}^{d}a_{e,k}f_e^{k+1}
\end{align*}

Note that for every index $k \ge \lfloor \beta \rfloor$, we have $k+1>\beta$, hence $\frac{\beta}{k+1} \le 1$, which justifies the inequality above. For the edges in $E_{\ell},$ it holds that:

\begin{align*}
    \sum_{e \in E_{\ell}}\int_{0}^{f_e}\ell_e(t)dt = \sum_{e \in E_{\ell}}\sum_{k=0}^{d}\frac{a_{e,k}}{k+1} f_e^{k+1} \leq \sum_{e \in E_{\ell}}\sum_{k=0}^{d}a_{e,k}f_e^{k+1} 
\end{align*}

Overall, the two inequalities above give

\begin{align*}
\hat{\Phi}(f) \leq \sum_{e \in E}\sum_{k=0}^{d}a_{e,k}f_e^{k+1} = C(f)\,.
\end{align*}
\end{proof}

\begin{claim}
    $\hat{\Phi}(f)\geq \frac{1}{\beta(d+1)}\hat{C}(f)\,.$
\end{claim}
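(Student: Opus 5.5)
We prove the inequality edge by edge, splitting $E$ into $E_\ell$ and $E_h$ as in the previous claim. In both cases we compare with
\[
\hat C(f)=\sum_{e\in E_\ell}\ell_e(f_e)f_e+\sum_{e\in E_h}h_e(f_e)f_e .
\]

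\textbf{Edges in $E_\ell$.} Here $\int_0^{f_e}\ell_e(t)dt=\sum_{k}\frac{a_{e,k}}{k+1}f_e^{k+1}\ge \frac{1}{d+1}\ell_e(f_e)f_e$. Since $\beta\ge 1$, this is at least $\frac{1}{\beta(d+1)}\ell_e(f_e)f_e$.

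\textbf{Edges in $E_h$.} This is the main step. We have $0\le \fpred_e< f_e$. The contribution is
\[
g(\fpred_e):=\int_0^{\fpred_e}\ell_e(t)dt+\int_{\fpred_e}^{f_e}h_e(t)dt .
\]
Because $h_e\ge \ell_e$ pointwise, $g$ is non-increasing in $\fpred_e$. Its infimum over $\fpred_e\in[0,f_e]$ is therefore attained at $\fpred_e=f_e$, where $g=\int_0^{f_e}\ell_e(t)dt$. So the contribution is at least $\sum_k \frac{a_{e,k}}{k+1}f_e^{k+1}$.

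We now compare this term by term with $h_e(f_e)f_e=\sum_k c_k a_{e,k}f_e^{k+1}$, where $c_k=k+1$ for $k<\lfloor\beta\rfloor$ and $c_k=\beta$ otherwise. In both cases $c_k\le\beta$: for $k<\lfloor\beta\rfloor$ we have $k+1\le\lfloor\beta\rfloor\le\beta$. Since also $k+1\le d+1$, each term satisfies $\frac{a_{e,k}}{k+1}f_e^{k+1}\ge \frac{c_k}{\beta(d+1)}a_{e,k}f_e^{k+1}$. Summing over $k$ gives the bound $\frac{1}{\beta(d+1)}h_e(f_e)f_e$ on $E_h$.

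Summing the two cases gives the claim. Combining it with the previous claim yields the robustness bound in the same way as in the affine case:
\[
\frac{1}{\beta(d+1)}\hat C(\fpot)\le\hat\Phi(\fpot)\le\hat\Phi(f^*)\le C(f^*).
\]
The only delicate point is checking $c_k\le\beta$ for the low-degree coefficients $k<\lfloor\beta\rfloor$, which uses $k+1\le\lfloor\beta\rfloor$.
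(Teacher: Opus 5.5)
Your proof is correct and follows essentially the same route as the paper: an edge-by-edge, coefficient-by-coefficient comparison of $\hat\Phi(f)$ with $\hat C(f)$ over $E_\ell$ and $E_h$, where the potential on each triggered edge is first bounded below by a full-interval integral. You also check explicitly that $c_k = k+1 \le \lfloor\beta\rfloor \le \beta$ for $k<\lfloor\beta\rfloor$; the paper's one-line justification cites only $\bar f_e<f_e$, $\beta\ge 1$ and $k+1\le d+1$ and leaves this step implicit.
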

\begin{proof}
We have that 
\begin{align*}
    \hat{\Phi}(f)
    &\ge
    \sum_{e \in E_\ell} \sum_{k=0}^d
        \frac{1}{\beta(d+1)}a_{e,k}f_e^{k+1} \\    &\quad+
    \sum_{e\in E_h}
    \left(
        \sum_{k=0}^{\lfloor\beta\rfloor-1}
        \frac{1}{\beta(d+1)}(k+1)a_{e,k}f_e^{k+1}
        +
        \sum_{k=\lfloor\beta\rfloor}^d
        \frac{1}{\beta(d+1)}\beta a_{e,k}f_e^{k+1}
    \right) \\
    &=
    \frac{1}{\beta(d+1)}
    \left(
        \sum_{e\in E_\ell}\sum_{k=0}^d a_{e,k}f_e^{k+1}
        +
        \sum_{e\in E_h}
        \left(
            \sum_{k=0}^{\lfloor\beta\rfloor-1}(k+1)a_{e,k}f_e^{k+1}
            +
            \sum_{k=\lfloor\beta\rfloor}^d \beta a_{e,k}f_e^{k+1}
        \right)
    \right) \\
    &=
    \frac{1}{\beta(d+1)}\hat{C}(f)\,.
\end{align*}
The inequality uses $\bar f_e < f_e$ for every $e\in E_h$, $\beta\ge 1$, and
$k+1\le d+1$ for every $k\in\{0,\ldots,d\}$.
\end{proof}

Let $\fpot$ be any potential minimizer of $\hat{\Phi}$, and $f^*$ be an optimal profile w.r.t. the original latency functions, i.e., $f^*$ minimizes $C(f)$. Then, by utilizing the above inequalities that hold for any $f$, and therefore for $\fpot$ and $f^*$, it follows that
$\frac{1}{\beta(d+1)}\hat{C}(\fpot) \leq \hat{\Phi}(\fpot)\leq \hat{\Phi}(f^*) \leq C(f^*)\,,$
which concludes the proof for robustness, and so completes the proof of the theorem.
\end{proof}

\subsubsection{A threshold-sensitive robustness refinement}
\label{sec:threshold-sensitive-nonatomic}

We present a refined threshold-sensitive analysis of the \textsc{Potential} mechanism, depending on how close the triggered edges are to their predicted thresholds. Let \(f\) be any feasible flow and let
\(E_h(f)=\{e\in E:f_e>\bar f_e\}\) be the set of triggered edges. For every \(e\in E_h(f)\), define
\(\eta_e=\bar f_e/f_e\in[0,1)\). If \(E_h(f)\neq\emptyset\), let
\(\eta(f,\bar f)=\max_{e\in E_h(f)}\eta_e\), and otherwise set \(\eta(f,\bar f)=0\). Recall that
\(\hat\Phi_e(f)=\int_0^{f_e}\hat\ell_e(t)\,dt\), \(\hat\Phi(f)=\sum_{e\in E}\hat\Phi_e(f)\), and
\(\hat C_e(f)=f_e\hat\ell_e(f_e)\).

We will use two elementary facts. First, \(h_e(x)\le \beta\ell_e(x)\) for every edge \(e\) and every \(x\ge0\). Indeed, by the definition of \(h_e\), each monomial coefficient is either multiplied by \(\beta\), or by \(k+1\) with \(k+1\le\beta\); all coefficients are nonnegative. Second, since \(h_e\) is a polynomial of degree at most \(d\) with nonnegative coefficients, writing \(h_e(x)=\sum_{k=0}^d b_{e,k}x^k\) gives
\(\Phi^h_e(x)=\int_0^x h_e(t)\,dt=\sum_{k=0}^d b_{e,k}x^{k+1}/(k+1)\ge xh_e(x)/(d+1)\).

\begin{lemma}[Threshold-sensitive potential lower bound]
\label{lemma:threshold-sensitive-nonatomic}
For every feasible flow \(f\),
\[
    \hat{\Phi}(f)
    \ge
    \frac{1-\left(1-\frac{1}{\beta}\right)\eta(f,\bar f)}
    {d+1}\,
    \hat{C}(f).
\]
\end{lemma}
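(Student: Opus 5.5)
The bound is proved edge by edge: it suffices to show that for every edge \(e\),
\[
\hat\Phi_e(f)\ge \frac{1-\left(1-\frac1\beta\right)\eta(f,\bar f)}{d+1}\,\hat C_e(f),
\]
and then sum over \(e\in E\). The natural split is into untriggered edges \(e\in E_\ell=E\setminus E_h(f)\) and triggered edges \(e\in E_h(f)\). Write \(\gamma=1-(1-1/\beta)\eta(f,\bar f)\in(0,1]\), using \(\beta\ge1\) and \(\eta<1\). For an untriggered edge, \(\hat\ell_e=\ell_e\) on \([0,f_e]\), so \(\hat\Phi_e(f)=\Phi^\ell_e(f_e)\ge f_e\ell_e(f_e)/(d+1)=\hat C_e(f)/(d+1)\ge \gamma\hat C_e(f)/(d+1)\). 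This uses the second elementary fact applied to \(\ell_e\), which is also a nonnegative-coefficient polynomial of degree at most \(d\).

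For a triggered edge, write \(x=f_e\) and \(\bar x=\bar f_e=\eta_e x\). Then
\[
\hat\Phi_e(f)=\int_0^{\bar x}\ell_e+\int_{\bar x}^{x}h_e=\Phi^h_e(x)-\int_0^{\bar x}(h_e-\ell_e),
\qquad \hat C_e(f)=x\,h_e(x).
\]
The first elementary fact gives \(h_e-\ell_e\le(1-1/\beta)h_e\). Hence
\[
\hat\Phi_e(f)\ge \Phi^h_e(x)-\left(1-\tfrac1\beta\right)\Phi^h_e(\bar x).
\]
It therefore remains to show \(\Phi^h_e(x)-(1-1/\beta)\Phi^h_e(\eta_e x)\ge \frac{1-(1-1/\beta)\eta_e}{d+1}\,x h_e(x)\). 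The final bound then follows because \(\eta_e\le\eta(f,\bar f)\), so \(1-(1-1/\beta)\eta_e\ge\gamma\).

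This monomial inequality is the main step. Write \(h_e(t)=\sum_k b_{e,k}t^k\) with \(b_{e,k}\ge0\). It suffices to check each monomial \(k\), namely
\[
\frac{x^{k+1}}{k+1}\Bigl(1-c\,\eta^{k+1}\Bigr)\ge \frac{1-c\eta}{d+1}\,x^{k+1},
\qquad c=1-\tfrac1\beta\in[0,1],\ \eta\in[0,1).
\]
Since \(\eta^{k+1}\le\eta\), the left side is at least \((1-c\eta)x^{k+1}/(k+1)\), and \(k+1\le d+1\) finishes it. Summing the monomials yields the edge inequality, and summing over all edges gives the lemma. The only points needing care are the direction of the inequality from \(h_e\le\beta\ell_e\) and that \(\eta_e\le\eta\) when the maximum over triggered edges is taken.
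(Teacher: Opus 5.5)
Your proof is correct and follows the paper's argument essentially step for step. You use the same edge-wise split, the same use of $h_e\le\beta\ell_e$ to get $\hat\Phi_e(f)\ge\Phi^h_e(f_e)-(1-1/\beta)\Phi^h_e(\bar f_e)$, and the same two monomial facts ($\eta_e^{k+1}\le\eta_e$ and $k+1\le d+1$); the only cosmetic difference is that you check the final inequality monomial by monomial in one step, whereas the paper first factors out $\bigl(1-(1-1/\beta)\eta_e\bigr)\Phi^h_e(f_e)$ and then applies $\Phi^h_e(f_e)\ge f_e h_e(f_e)/(d+1)$.
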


\begin{proof}
We prove the inequality edge by edge. First consider an untriggered edge \(e\notin E_h(f)\), so \(f_e\le \bar f_e\). Then \(\hat\ell_e(f_e)=\ell_e(f_e)\), and hence
\(\hat\Phi_e(f)=\int_0^{f_e}\ell_e(t)\,dt\ge f_e\ell_e(f_e)/(d+1)=\hat C_e(f)/(d+1)\), where the inequality uses that \(\ell_e\) is a polynomial of degree at most \(d\) with nonnegative coefficients. Since
\(1-(1-1/\beta)\eta(f,\bar f)\le 1\), this already implies the desired edge-wise bound for every untriggered edge.

Now consider a triggered edge \(e\in E_h(f)\), so \(f_e>\bar f_e\). Then
\begin{align*}
    \hat\Phi_e(f)
    &=
    \int_0^{\bar f_e}\ell_e(t)\,dt
    +
    \int_{\bar f_e}^{f_e}h_e(t)\,dt  \\
    &=
    \Phi^\ell_e(\bar f_e)+\Phi^h_e(f_e)-\Phi^h_e(\bar f_e) \\
    &\ge
    \Phi^h_e(f_e)
    -
    \left(1-\frac{1}{\beta}\right)\Phi^h_e(\bar f_e) \\
    &\ge
    \Phi^h_e(f_e)
    -
    \left(1-\frac{1}{\beta}\right)\eta_e\Phi^h_e(f_e) \\
    &=
    \left(1-\left(1-\frac{1}{\beta}\right)\eta_e\right)\Phi^h_e(f_e) \\
    &\ge
    \frac{1-\left(1-\frac{1}{\beta}\right)\eta_e}{d+1}f_eh_e(f_e) \\
    &=
    \frac{1-\left(1-\frac{1}{\beta}\right)\eta_e}{d+1}\hat C_e(f).
\end{align*}
The first inequality uses \(h_e(t)\le \beta\ell_e(t)\), which implies
\(\Phi^\ell_e(\bar f_e)\ge \Phi^h_e(\bar f_e)/\beta\). The second inequality
uses \(\Phi^h_e(\bar f_e)\le \eta_e\Phi^h_e(f_e)\): indeed,
\(\Phi^h_e\) has nonnegative coefficients and zero constant term, and
\((\bar f_e/f_e)^q\le \bar f_e/f_e=\eta_e\) for every \(q\ge1\). The last
inequality uses \(\Phi^h_e(f_e)\ge f_eh_e(f_e)/(d+1)\). Since
\(\eta_e\le \eta(f,\bar f)\) and \(1-(1-1/\beta)\eta\) is decreasing in
\(\eta\), this gives
\[
    \hat\Phi_e(f)
    \ge
    \frac{1-\left(1-\frac{1}{\beta}\right)\eta(f,\bar f)}
    {d+1}\hat C_e(f)
\]
for every triggered edge. Summing over all edges completes the proof.
\end{proof}

\begin{corollary}[Threshold-sensitive robustness]
\label{cor:threshold-sensitive-robustness}
Let \(f^\diamond\) be any potential minimizer of the modified game, and let
\(f^\ast\) be an optimal flow under the original latencies. Then
\[
    \hat C(f^\diamond)
    \le
    \frac{d+1}
    {1-\left(1-\frac{1}{\beta}\right)\eta(f^\diamond,\bar f)}
    C(f^\ast).
\]
\end{corollary}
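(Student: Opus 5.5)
The plan is to mirror the robustness argument of Theorem~\ref{thm:consrobust}, replacing the crude lower bound $\hat\Phi(f)\ge \hat C(f)/(\beta(d+1))$ by the sharper bound of Lemma~\ref{lemma:threshold-sensitive-nonatomic}. The chain of inequalities will be
\[
\frac{1-\left(1-\frac{1}{\beta}\right)\eta(f^\diamond,\bar f)}{d+1}\,\hat C(f^\diamond)\;\le\;\hat\Phi(f^\diamond)\;\le\;\hat\Phi(f^\ast)\;\le\;C(f^\ast),
\]
and the claim follows by dividing through.

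First, I apply Lemma~\ref{lemma:threshold-sensitive-nonatomic} to the flow $f=f^\diamond$; this gives the leftmost inequality directly. Second, I note that $f^\diamond$ minimizes $\hat\Phi$ over feasible flows for the actual demand $r$, and that $f^\ast$ is feasible for the same demand, so $\hat\Phi(f^\diamond)\le\hat\Phi(f^\ast)$. Third, I invoke the first claim in the proof of Theorem~\ref{thm:consrobust}, namely $\hat\Phi(f)\le C(f)$ for every feasible flow, and apply it at $f=f^\ast$. That claim was proved for an arbitrary feasible flow under arbitrary demand, using only $\beta\le d+1$, the split into triggered and untriggered edges, and $\beta/(k+1)\le1$ for $k\ge\lfloor\beta\rfloor$. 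So it transfers verbatim.

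The only remaining step is to check that dividing by the coefficient is legitimate. Since $\beta\ge1$, we have $0\le 1-1/\beta<1$, and $\eta(f^\diamond,\bar f)\in[0,1)$ by definition. Hence the coefficient $1-(1-1/\beta)\eta$ is strictly positive, in fact at least $1/\beta$. Dividing yields the stated bound. As a sanity check, the bound never exceeds $\beta(d+1)$, which recovers Theorem~\ref{thm:consrobust}.

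There is no real obstacle here, since all the work is in the lemma. The one point I would check carefully is that $\eta$ in the corollary must be evaluated at $f^\diamond$, not at $f^\ast$, because the lemma is applied only to $f^\diamond$. The upper bound via $C(f^\ast)$ needs no $\eta$-dependence.
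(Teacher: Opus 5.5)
Your proposal is correct and matches the paper's proof: you apply Lemma~\ref{lemma:threshold-sensitive-nonatomic} at $f^\diamond$, use the minimality of $\hat\Phi$ at $f^\diamond$ against $f^\ast$, and invoke the bound $\hat\Phi(f)\le C(f)$ from the proof of Theorem~\ref{thm:consrobust}. Your extra check that the coefficient $1-(1-1/\beta)\eta(f^\diamond,\bar f)$ is at least $1/\beta>0$, which justifies the division, is a useful detail that the paper leaves implicit.
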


\begin{proof}
By Lemma~\ref{lemma:threshold-sensitive-nonatomic},
\[
    \frac{1-\left(1-\frac{1}{\beta}\right)\eta(f^\diamond,\bar f)}
    {d+1}
    \hat C(f^\diamond)
    \le
    \hat\Phi(f^\diamond).
\]
Since \(f^\diamond\) minimizes \(\hat\Phi\), and since the proof of
Theorem~\ref{thm:consrobust} shows that \(\hat\Phi(f)\le C(f)\) for every feasible flow \(f\), we have
\(\hat\Phi(f^\diamond)\le \hat\Phi(f^\ast)\le C(f^\ast)\). Combining the two inequalities and rearranging gives the claim.
\end{proof}

This refinement interpolates between the strongest robustness bound \(d+1\), obtained when no triggered edge is close to its threshold, and the worst-case guarantee \(\beta(d+1)\) of Theorem~\ref{thm:consrobust} as \(\eta(f^\diamond,\bar f)\) approaches \(1\). Thus, the robustness loss depends not only on the confidence parameter \(\beta\), but also on how close the realized potential minimizer is to the prediction-derived threshold flow on the edges where the mechanism is triggered.

\subsection{PoA-Robustness}\label{sec:nonAtomicPoA}

In this section, we analyze the PoA-Robustness of \textsc{Potential}, first for the case of affine latencies (Corollary~\ref{cor:consrobustAffinePoA}) and subsequently for the case of polynomial latencies (Theorem~\ref{thm:consrobustPoA}). 

\consrobustAffinePoACor*

\begin{proof}    
Given any input rate $r$, let $f$ be an equilibrium flow under the modified latencies $\hat{\ell}$ and let $f^*$ be the optimal flow under the original latencies $\ell$. By using the Lifting Lemma (Lemma~\ref{lem:LikeVariationalIneq}) and substituting for $h_e(x)=\beta a_ex+b_e$, we obtain
\begin{align}
\hat{C}(f)= \sum_{e \in E}\hat{\ell}_e(f_e)f_e \le \sum_{e \in E} h_e(f_e)f_e^* = \sum_{e\in E} (\beta a_ef_ef_e^*+b_ef_e^*)\,. \label{eq:relationToCKS11}
\end{align}

A quick ---but not tight---  upper bound of PoA-Robustness can be obtained as follows. 
The last term of (\ref{eq:relationToCKS11}) can be bounded above by  $\beta \sum_{e \in E}\ell_e(f_e)f^*_e$, which gives $\hat{C}(\f)\leq \beta \sum_{e \in E}\ell_e(f_e)f^*_e.$
By observing that the above inequality is also satisfied by $\beta$-approximate equilibria w.r.t. the original latencies, one can follow a similar analysis with that of \cite{algorithmica/ChristodoulouKS11} to upper bound the PoA of $\beta$-approximate equilibria. This gives a PoA-Robustness bound of at most $\frac{4\beta}{4-\beta}$, for $\beta\in [1,2]$.

However, we are able to obtain an improved bound for all $\beta \in (1,2)$ (see Figure \ref{fig:robustness_affine}.(a) for a comparison) by a more careful analysis. For this, we need to establish the following claim.

\begin{claim}
    For any parameter $\lambda \geq  1/\beta$ it holds that 
    \begin{align*}
    \hat{C}(f) \leq \frac{\beta}{4\lambda}\hat{C}(f)+\lambda \beta C(f^*) \,.
\end{align*}
\end{claim}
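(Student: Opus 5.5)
Starting from inequality (\ref{eq:relationToCKS11}), $\hat C(f)\le \sum_{e}(\beta a_e f_e f_e^* + b_e f_e^*)$, we bound the right-hand side edge by edge. The plan is to split the cross term $\beta a_e f_e f_e^*$ using the AM--GM type inequality
\[
xy \le \frac{x^2}{4\lambda} + \lambda y^2,\qquad x,y\ge 0,\ \lambda>0,
\]
applied with $x=f_e$ and $y=f_e^*$. This gives $\beta a_e f_e f_e^*\le \frac{\beta}{4\lambda}a_e f_e^2+\lambda\beta a_e (f_e^*)^2$, so
\[
\hat C(f)\le \frac{\beta}{4\lambda}\sum_e a_e f_e^2 + \sum_e\left(\lambda\beta a_e (f_e^*)^2 + b_e f_e^*\right).
\]

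We then relate each of the two sums to the quantities in the claim. For the second sum, the hypothesis $\lambda\ge 1/\beta$ gives $\lambda\beta\ge 1$, hence $b_e f_e^*\le \lambda\beta b_e f_e^*$. The second sum is therefore at most $\lambda\beta\sum_e(a_e (f_e^*)^2+b_e f_e^*)=\lambda\beta C(f^*)$. This is exactly where the assumption on $\lambda$ is used.

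For the first sum, we need $\sum_e a_e f_e^2\le \hat C(f)$. On an untriggered edge, $\hat C_e(f)=a_e f_e^2+b_e f_e\ge a_e f_e^2$. On a triggered edge, $\hat C_e(f)=\beta a_e f_e^2+b_e f_e\ge a_e f_e^2$, since $\beta\ge1$ and $b_e\ge 0$. Summing over edges gives $\frac{\beta}{4\lambda}\sum_e a_e f_e^2\le\frac{\beta}{4\lambda}\hat C(f)$.

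Combining the two bounds yields the claim. There is no real obstacle here; the only thing to get right is choosing the constants in the AM--GM split so that the coefficient on $\hat C(f)$ comes out as $\beta/(4\lambda)$ and the coefficient on $C(f^*)$ as $\lambda\beta$. We also note that the bound on $\sum_e a_e f_e^2$ is loose on triggered edges by a factor of $\beta$. Afterwards one would rearrange to $\hat C(f)\le \frac{\lambda\beta}{1-\beta/(4\lambda)}C(f^*)$ and optimize over $\lambda$, but that is beyond the claim itself.
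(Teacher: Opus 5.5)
Your proof is correct and follows essentially the same route as the paper: the split $xy\le \frac{x^2}{4\lambda}+\lambda y^2$ on the cross term $\beta a_e f_e f_e^*$, then $\lambda\beta\ge 1$ to absorb $b_e f_e^*$ into $\lambda\beta C(f^*)$, then $\sum_e a_e f_e^2\le \hat C(f)$. The only cosmetic difference is in that last step: the paper adds the nonnegative terms $b_e f_e$ and uses $C(f)\le \hat C(f)$, whereas you compare against $\hat C_e(f)$ edge by edge.
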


\begin{proof}
    By the first inequality of \eqref{eq:relationToCKS11} and the fact that for any real numbers $x,y,\lambda$ it holds that $xy\leq \frac 1{4\lambda}x^2+\lambda y^2 $ (Lemma 2 in \cite{arxiv:CKS08}), we get
\begin{align*}
    \hat{C}(f) & \le \sum_{e\in E} (\beta a_ef_ef_e^*+b_ef_e^*)\\
    &\le \sum_{e\in E} \left(\beta a_e\left(\frac 1{4\lambda}(f_e)^2+\lambda  (f_e^*)^2\right)+b_ef_e^*\right) \\
    & \leq \frac{\beta}{4\lambda}\sum_{e\in E} (a_e (f_e)^2+b_ef_e)+\lambda \beta \sum_{e\in E} (a_e (f_e^*)^2+b_ef_e^*) \mbox{\qquad (since $\lambda \beta \ge 1$)}\\
    &\leq \frac{\beta}{4\lambda}\hat{C}(f)+\lambda\beta C(f^*) \,.
\end{align*}
\end{proof}

Rearranging yields the following bound on the robustness:
\begin{align*}
    \frac{\hat{C}(f)}{C(f^*)} & \le \frac{\lambda\beta}{1-\frac{\beta}{4\lambda}} = \frac{4\lambda^2\beta}{4\lambda-\beta}\,.
\end{align*}

We appropriately choose $\lambda \geq 1/\beta$ to minimize the above
expression, which is $\lambda=1/\beta$ for $\beta\leq \sqrt{2}$, and
$\lambda=\beta/2$ otherwise. This overall results in robustness of
$\frac 4{4-\beta^2}$ for $\beta\leq \sqrt{2}$, and robustness of
$\beta^2$ otherwise.  We remark that this bound works for \emph{any} equilibrium
flow under $\hat\ell$.
\end{proof}

 \begin{figure}[h]
     \centering

     \begin{subfigure}[t]{0.48\textwidth}
         \centering
         \resizebox{\linewidth}{!}{\input{figs/robustness_non-atomic.pgf}}
         \caption{Non-atomic}
     \end{subfigure}
     \hfill
     \begin{subfigure}[t]{0.48\textwidth}
         \centering
         \resizebox{\linewidth}{!}{\input{figs/robustness_atomic.pgf}}
         \caption{Atomic}
     \end{subfigure}

     \caption{PoA-Robustness guarantees for affine latency functions. The left panel corresponds to the non-atomic setting and the right one to the atomic setting. In green, we highlight the improved bounds obtained by our analysis, compared to the PoA for $\beta$-approximate equilibria.}
     \label{fig:robustness_affine}
 \end{figure}

\consrobustPoAThm*

\begin{proof}   

Given any input rate $r$, suppose that $f$ is the equilibrium flow under the modified latencies $\hat{\ell}$, and $f^*$ is the optimal flow under the original latencies $\ell$. By the Lifting Lemma~\ref{lem:LikeVariationalIneq} we get: 
\begin{align*}
\hat C(f) = \sum_{e \in E}\hat{\ell}_e(f_e)f_e \le \sum_{e\in E} h_e(f_e)f_e^*\,.
\end{align*}

Similarly to the affine latency case, the last term can be bounded above by $\beta \sum_{e \in E}\ell_e(f_e)f^*_e$, which results in $\hat C(f) \le\beta \sum_{e \in E}\ell_e(f_e)f^*_e$. One can thus follow the analysis to upper bound the PoA of $\beta$-approximate equilibria provided by \cite{algorithmica/ChristodoulouKS11} and obtain a PoA-Robustness bound of at most 
$\left(\frac{1}{\beta}-\frac{d}{(d+1)^{1+\frac{1}{d}}}\right)^{-1}$ for $\beta \leq \sqrt[d]{d+1}$, and $\beta^{d+1}$ for $\sqrt[d]{d+1}<\beta \le d+1$.
However, we are able to provide a tighter analysis for our setting that improves this bound for all $\beta \in (1,\sqrt[d]{d+1})$. For this, we show the following claim:

\begin{claim}
    For any parameter $\lambda \geq 1/\beta$ it holds that 
    \begin{align*}
    \hat{C}(f) \leq \max\left\{\frac{1}{4\lambda},\frac{\beta d}{(d+1)\sqrt[d]{\lambda(d+1)}}\right\}\hat{C}(f)+ \lambda\beta C(f^*) \,.
\end{align*}
\end{claim}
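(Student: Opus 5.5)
The plan is to follow the affine warm-up, replacing the inequality $xy\le \frac1{4\lambda}x^2+\lambda y^2$ with a degree-$k$ Young-type inequality applied monomial by monomial. Write $h_e(x)=\sum_{k=0}^d c_k a_{e,k}x^k$, where $c_k=k+1$ for $k<\lfloor\beta\rfloor$ and $c_k=\beta$ otherwise; in both cases $c_k\le\beta$. By the Lifting Lemma (Lemma~\ref{lem:LikeVariationalIneq}),
\[
\hat C(f)\le \sum_{e\in E}\sum_{k=0}^d c_k a_{e,k} f_e^k f_e^* .
\]
So it suffices to show, for every edge and every degree $k$,
\[
c_k a_{e,k} f_e^k f_e^*\le M\,\hat c_{e,k}\,a_{e,k} f_e^{k+1}+\lambda\beta\, a_{e,k}(f_e^*)^{k+1}.
\]
Here $M$ is the claimed maximum, and $\hat c_{e,k}\in\{1,c_k\}$ is the coefficient with which the monomial appears in $\hat C(f)$: it is $c_k$ on triggered edges and $1$ otherwise. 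Summing this over $e$ and $k$, and using $\hat c_{e,k}\ge 1$, gives the claim.

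For $k=0$ the inequality is immediate, since $c_0=1$ and $\lambda\beta\ge 1$. This is exactly where the hypothesis $\lambda\ge 1/\beta$ enters, as in the affine case. For $k\ge1$ I will use the inequality $x^ky\le \mu x^{k+1}+\nu y^{k+1}$, which holds for $x,y\ge0$ whenever $\mu=\frac{k}{k+1}\bigl((k+1)\nu\bigr)^{-1/k}$; this is the tangent-line form of Young's inequality with exponents $\frac{k+1}{k}$ and $k+1$. I choose $\nu=\lambda\beta/c_k$, so that the $y$-term becomes exactly $\lambda\beta\,a_{e,k}(f_e^*)^{k+1}$. The resulting $x$-coefficient is
\[
c_k\mu=\frac{k}{k+1}\,c_k^{1+1/k}\bigl((k+1)\lambda\beta\bigr)^{-1/k}.
\]

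Next I bound this coefficient in the two cases for $c_k$. If $c_k=\beta$, it equals $\beta\frac{k}{k+1}\bigl((k+1)\lambda\bigr)^{-1/k}$, and I divide by $\hat c_{e,k}$. If $c_k=k+1\le\beta$, it equals $k\bigl((k+1)/(\lambda\beta)\bigr)^{1/k}$, which I will compare against the same expression. Finally I maximize over $k\in\{1,\dots,d\}$. The target is that the function $g(k)=\frac{k}{k+1}\bigl((k+1)\lambda\bigr)^{-1/k}$ (times $\beta$ where appropriate) is quasi-convex in $k$, so its maximum is attained at an endpoint. The endpoint $k=d$ gives $\frac{\beta d}{(d+1)\sqrt[d]{\lambda(d+1)}}$. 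The endpoint $k=1$ gives $\frac{1}{4\lambda}$ when $c_1=2$ or the edge is untriggered with $c_1$ small, i.e.\ after also using the $\hat c$ coefficient or $\beta<2$ appropriately.

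The main obstacle is the $k=1$ term. A crude bound there gives $\beta/(4\lambda)$ rather than $1/(4\lambda)$, so I have to be careful about which coefficient ($1$ or $c_1$) of $f_e^2$ is available in $\hat C(f)$. I would first try to use the triggered coefficient $c_1$ on triggered edges. On untriggered edges I would exploit $f_e\le \bar f_e$ only through $\lambda\beta\ge1$. If this still fails, I would redistribute: set $\nu$ differently for low-degree terms, so that the excess goes into the $\lambda\beta C(f^*)$ term, which has slack since $c_k\nu\le\lambda\beta$ is only needed as an upper bound. Verifying the endpoint-maximum property of $g$ is the other technical step; I would check it via the sign of the derivative of $\log g$.
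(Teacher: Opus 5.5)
Your setup coincides with the paper's: Lifting Lemma, a monomial-wise Young inequality with $\nu=\lambda\beta/c_k$, the $k=0$ term handled via $\lambda\beta\ge 1$, and maximization at the endpoints $k\in\{1,d\}$. But the step you yourself flag as the main obstacle is left open, and the remedies you list cannot close it.

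On an untriggered edge $\hat c_{e,1}=1$. There $c_1xy\le \mu x^2+\lambda\beta y^2$ holds for all $x,y\ge 0$ if and only if $\mu\ge c_1^2/(4\lambda\beta)$. This equals $\beta/(4\lambda)$ when $\beta<2$, and $1/(\lambda\beta)$ when $\beta\ge 2$. The latter still exceeds $\frac{1}{4\lambda}$ for $\beta<4$, so your remark that $c_1=2$ produces $\frac{1}{4\lambda}$ is also wrong.

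There is no slack to redistribute. The choice $\nu=\lambda\beta/c_1$ already spends the entire budget $\lambda\beta$ in front of $a_{e,1}(f_e^*)^2$, and $f_e\le \bar f_e$ carries no information about $f_e^*$. So no monomial-by-monomial argument puts $\frac{1}{4\lambda}$ in the first slot.

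The paper does not try to. It bounds every $c_k$ with $k\ge 1$ by $\beta$, applies Lemma~\ref{lem:arithmeticNonAtomic} and $C(f)\le \hat C(f)$. Its proof actually ends with the coefficient $\beta\max\bigl\{\frac{1}{4\lambda},\frac{d}{(d+1)\sqrt[d]{\lambda(d+1)}}\bigr\}$ in front of $\hat C(f)$, which is the form used in the rearrangement afterwards. Tracking $\hat c_{e,k}$ gains nothing, since untriggered edges are the worst case.

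A smaller slip: for $c_k=k+1$ the coefficient is $k(\lambda\beta)^{-1/k}=\bigl(\frac{k+1}{\beta}\bigr)^{1+1/k}\beta g(k)\le \beta g(k)$, not $k\bigl(\frac{k+1}{\lambda\beta}\bigr)^{1/k}$. With your expression the domination fails, e.g.\ at $k=1$, $\beta=2$.

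The missing idea is that you never need $\frac{1}{4\lambda}$ at $k=1$. Let $g(k)=\frac{k}{k+1}\bigl((k+1)\lambda\bigr)^{-1/k}$ and let $M$ be the displayed maximum. The termwise constant is at most $\beta\max_{1\le k\le d}g(k)$. It suffices to show this is at most $M$ whenever $M<1$, because for $M\ge 1$ the claim is trivial ($\hat C(f)\le M\hat C(f)$). Since $\frac{\mathrm{d}}{\mathrm{d}k}\ln g(k)=\frac{\ln((k+1)\lambda)}{k^2}$ (which also settles your quasi-convexity step), there are two cases.

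If $\lambda\ge\frac12$, then $g$ is nondecreasing on $[1,d]$. Hence $\beta\max_k g(k)=\beta g(d)=\frac{\beta d}{(d+1)\sqrt[d]{\lambda(d+1)}}\le M$.

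If $\lambda<\frac12$, this forces $d\ge 2$, as $\lambda\ge 1/\beta\ge 1/(d+1)$. Then $\beta\ge 1/\lambda$, and $g(d)/\lambda$ decreases in $\lambda$. Together these give $M\ge \beta g(d)\ge g(d)/\lambda>\frac{2d}{d+1}\bigl(\frac{2}{d+1}\bigr)^{1/d}>1$, so the claim holds trivially.

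With this case split, either your refined bookkeeping or the paper's cruder one proves the statement as displayed.
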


\begin{proof}

First, consider the inequality stated in Lemma~\ref{lem:arithmeticNonAtomic} (see Appendix~\ref{appendix:Arithmetic}). Combining this result with Lemma~\ref{lem:LikeVariationalIneq}, and noting that $\lambda \beta \geq 1$, we obtain
    
\begin{align*}
    \hat{C}(f) &\leq \sum_{e\in E}h_e(f_e)f_e^* \\
    &= \sum_{e\in E}\left( \sum_{k=0}^{\lfloor\beta\rfloor-1}(k+1)a_{e,k}f_e^kf_e^*+\sum_{k=\lfloor\beta\rfloor}^d\beta a_{e,k}f_e^kf_e^*\right)\\
    &\le \sum_{e\in E}\left( a_{e,0}f_e^*+\sum_{k=1}^{\lfloor\beta\rfloor-1}\beta a_{e,k}f_e^kf_e^*+\sum_{k=\lfloor\beta\rfloor}^d\beta a_{e,k}f_e^kf_e^*\right) \\
    &\leq \sum_{e\in E}\left(\sum_{k=0}^{d}\beta a_{e,k}\max\left\{\frac{1}{4\lambda},\frac{d}{(d+1)\sqrt[d]{\lambda(d+1)}}\right\}f_e^{k+1} + \sum_{k=0}^{d}\beta a_{e,k}\lambda(f_e^*)^{k+1}\right) \\
    &\leq \beta \max\left\{\frac{1}{4\lambda},\frac{d}{(d+1)\sqrt[d]{\lambda(d+1)}}\right\}\hat{C}(f)+  \lambda \beta C(f^*) \,.
\end{align*}
\end{proof}
Rearranging yields the following bound on the PoA-Robustness:

\begin{align*}
    \frac{\hat{C}(f)}{C(f^*)} & \leq \frac{\lambda\beta}{1-\beta \max\left\{\frac{1}{4\lambda},\frac{d}{(d+1)\sqrt[d]{\lambda(d+1)}}\right\}}\,.
\end{align*}
For $d=1$, it holds that $\frac{1}{4\lambda} = \frac{d}{(d+1)\sqrt[d]{\lambda(d+1)}}$, and we derive the bounds of Corollary~\ref{cor:consrobustAffinePoA}. 

Suppose now that $d\geq 2$. Define $\lambda_0 = (\frac{1}{4d})^\frac{d}{d-1}(d+1)^{\frac{d+1}{d-1}}$ to be the value of $\lambda$ such that $\frac{1}{4\lambda} = \frac{d}{(d+1)\sqrt[d]{\lambda(d+1)}}$. Note that for $\lambda \leq \lambda_0$ it is $\frac{1}{4\lambda} \geq \frac{\beta d}{(d+1)\sqrt[d]{\lambda(d+1)}}$, and vice-versa for $\lambda > \lambda_0$.
We appropriately choose the value of $\lambda \geq 1/\beta$ to minimize the above expression. Observe that in case $\beta < \frac{1}{\lambda_0},$ it holds that $\lambda \geq \frac{1}{\beta} > \lambda_0$, and the above expression becomes $\frac{\lambda\beta}{1-\frac{\beta d}{(d+1)\sqrt[d]{\lambda(d+1)}}}.$ This is minimized for $\lambda = 1 / \beta$ when $\beta \in [1, (d+1)^{\frac 1{d+1}})$ and $\lambda = \frac{\beta^d}{d+1}$ when $\beta \in [(d+1)^{\frac 1{d+1}}, \frac{1}{\lambda_0})$. For $\beta \in [\frac{1}{\lambda_0}, d+1]$, it holds that $\beta \geq 1/\lambda_0 \geq 4\lambda_0$ and therefore if $\lambda \leq \lambda_0$ the PoA-Robustness is negative. Thus, again the expression is optimized for $\lambda = \frac{\beta^d}{d+1} \geq \lambda_0.$ Overall, this gives:

$$
\lambda =
\begin{cases}
\frac{1}{\beta}, \quad\quad \mbox{for } \beta \in [1, (d+1)^{\frac 1{d+1}})\\
\frac{\beta^d}{d+1}, \quad \mbox{for } \beta \in [(d+1)^{\frac 1{d+1}}, d+1]
\end{cases}
$$
In total, PoA-Robustness is bounded by:

$$
\frac{\hat{C}(f)}{C(f^*)} \leq
    \begin{cases}
        \left(1-\beta\frac{d}{d+1}\left(\frac{\beta}{d+1}\right)^{1/d}\right)^{-1}, \quad \mbox{for }\beta \in [1, (d+1)^{\frac 1{d+1}})\\
        \beta^{d+1}, \qquad \qquad \qquad \qquad\quad\;\;\; \mbox{for } \beta \in [(d+1)^{\frac 1{d+1}}, d+1)
    \end{cases}
$$
\end{proof}

\section{Atomic Congestion Games}
\label{sec:atomic}

In this section, we switch to the class of atomic congestion games
with polynomial latency functions. As in the 
non-atomic case, we propose the \textsc{Potential}
mechanism and analyze its performance. Given an instance of a
congestion game augmented with a prediction of the demand
$(\bar{N},E, (\bar{\mathcal{S}}_i)_{i\in \bar{N}},(\ell_e)_{e\in E}
)$, where $\bar{r} = (\bar{N}, (\bar{\mathcal{S}}_i)_{i\in \bar N})$
denotes the predicted demand, the \textsc{Potential} mechanism
modifies the latencies according to a \textsc{Threshold} coordination
mechanism.

The main result of this section offers a trade-off between consistency and robustness, tuned by a confidence parameter $\beta$, and is summarized in the next theorem.

\begin{restatable}{theorem}{atomicThm}
\label{thm:atomic}
    Given a predicted input demand $\bar r$ and a parameter $\beta\in \left[\frac{2(d+1)}{d+3},d+1\right]$, the  \textsc{Potential($\beta$,$\bar r$)} mechanism in atomic congestion games has   $$\mbox{Consistency: }\frac{d+1}{\beta}  \,; \qquad \mbox{Robustness: } \beta(d+1)\,.$$
\end{restatable}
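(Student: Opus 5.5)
I will mirror the non-atomic argument with discrete (Rosenthal) potentials. First, I define the atomic analogue of \textsc{Threshold}$(h,\spred)$: the modified latency on resource $e$ is $\hat\ell_e(x)=h_e(x)$ for $x>\bar x_e:=x_e(\spred)$ and $\ell_e(x)$ otherwise. Here $\spred$ is a minimizer of the discrete potential $\Phi^h(\s)=\sum_e\sum_{j=1}^{x_e}h_e(j)$ for the predicted demand. For the choice of $h$, I will take $h_e=\beta\ell_e$. This is natural because $\Phi^{h}/\beta$ is then the Rosenthal potential itself, but a better choice is an approximate-potential latency in the spirit of Definition~\ref{def:potentialMech}, tuned for atomic games. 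The requirement $\beta\ge \frac{2(d+1)}{d+3}$ suggests the intended $h$ comes from an atomic approximate potential for $\beta$-approximate equilibria, e.g.\ the one of \cite{algorithmica/ChristodoulouKS11}. The discrete analogue of Lemma~\ref{lemma:enforce-nonatomic-unique} is proved exactly as before, with sums in place of integrals. Per resource, $\hat\Phi_e(\s)+\Phi^h_e(\spred)-\Phi^h_e(\s)$ equals $\Phi^\ell_e(\spred)$ on triggered resources. On untriggered ones it is $\sum_{j\le x_e}\ell_e(j)+\sum_{x_e<j\le \bar x_e}h_e(j)\ge\Phi^\ell_e(\spred)$. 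Hence $\spred$ minimizes $\hat\Phi$ when $r=\bar r$, and $\hat C(\spred)=C(\spred)$.

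\textbf{Consistency.} It suffices to bound $C(\spred)/C_{opt}$, where $\spred$ minimizes $\Phi^h$. I will use the standard potential-method argument. First I need $\Phi^h(\s)\le \gamma\,C(\s)$ for all $\s$, obtained monomialwise from $\sum_{j=1}^x j^k\le x^{k+1}$ and the structure of $h$. Second I need $\Phi^h(\s)\ge \delta\,C(\s)$, for which it helps that $h_e(j)\ge\ell_e(j)$ on the lower part of the sum. Then $C(\spred)\le \Phi^h(\spred)/\delta\le\Phi^h(\s^*)/\delta\le(\gamma/\delta)C_{opt}$. The target is $\gamma/\delta=(d+1)/\beta$. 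For $h=\beta\ell$ one has $\Phi^h\le \beta C$ trivially, but the lower bound is only $\beta C/(d+1)$, giving $d+1$. To gain the factor $\beta$ I will instead compare through approximate equilibrium properties. Minimizers of $\Phi^h$ are $\beta$-approximate equilibria whose social cost relates to $\Phi^\ell$ via $C(\s)\le (d+1)\Phi^\ell(\s)$. Combined with $\Phi^{\ell}(\spred)\le \frac1\beta\Phi^h(\spred)\le\frac1\beta\Phi^h(\s^*)$ and $\Phi^h(\s^*)\le C(\s^*)$, this yields the claim. The last inequality requires $h$ to be chosen so that $\sum_{j\le x}h_e(j)\le x\ell_e(x)$ monomialwise. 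This is where the lower bound on $\beta$ should enter: the degree-$1$ and constant terms are the binding ones, since $\sum_{j\le x} j=x(x+1)/2$ with $x\ge1$. I expect this calibration of $h$ against discrete sums to be the main obstacle. I will first test the affine case, checking $\beta\frac{x(x+1)}{2}\cdot c\le x^2$ with the appropriate scaling, and then generalize using Faulhaber-type bounds.

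\textbf{Robustness.} I will follow the non-atomic proof with the same two claims. The first is $\hat\Phi(\s)\le C(\s)$, which holds because $\hat\ell\le h$ and $h$ was calibrated so that $\Phi^h_e(x)\le x\ell_e(x)$. The second is $\hat\Phi(\s)\ge\frac{1}{\beta(d+1)}\hat C(\s)$. For this, on untriggered resources I use $\sum_{j\le x}\ell_e(j)\ge \frac{x\ell_e(x)}{d+1}$, which follows from $\sum_{j\le x} j^k\ge x^{k+1}/(k+1)$. On triggered resources I use $\sum_{j\le\bar x}\ell_e(j)+\sum_{\bar x<j\le x}h_e(j)\ge\frac1\beta\sum_{j\le x}h_e(j)\ge\frac{x h_e(x)}{\beta(d+1)}$, using $\ell_e\ge h_e/\beta$. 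Chaining $\frac{1}{\beta(d+1)}\hat C(\spot)\le\hat\Phi(\spot)\le\hat\Phi(\s^*)\le C(\s^*)$ then gives $\beta(d+1)$.
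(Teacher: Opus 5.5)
\textbf{There is a genuine gap, and it sits where you expected the main obstacle.} You never fix the auxiliary latencies $h$, but the theorem is about the specific mechanism of Definition~\ref{def:potentialMechAtomic}. Worse, the consistency chain you propose in its place cannot be realized by \emph{any} $h$. The chain
\[
C(\spred)\le (d+1)\Phi^\ell(\spred)\le \tfrac{d+1}{\beta}\Phi^h(\spred)\le \tfrac{d+1}{\beta}\Phi^h(\s^*)\le \tfrac{d+1}{\beta}C(\s^*)
\]
needs $\beta\Phi^\ell_e(x)\le\Phi^h_e(x)$ for the second step and $\Phi^h_e(x)\le x\ell_e(x)$ for the last. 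At load $x=1$ these read $\beta\ell_e(1)\le h_e(1)\le\ell_e(1)$, which is false for $\beta>1$. A single player on a single resource, where $\spred=\s^*$, already breaks it.

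The second step also clashes with the bound $h_e\le\beta\ell_e$ you use in the robustness claim. Together they force $h=\beta\ell$. Then $\hat\Phi(\s)\le C(\s)$ fails on a resource with threshold $0$ and load $1$, where $\hat\Phi_e=\beta\ell_e(1)>\ell_e(1)$.

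You also misplace the role of $\beta\ge\frac{2(d+1)}{d+3}$. It is not what makes $\Phi^h\le C$ hold. It is what guarantees $h_e\ge\ell_e$, with $d_e=1$ the binding case. That inequality is needed for \textsc{Threshold} to only increase latencies, and for the untriggered-resource step of your enforcement lemma.

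\textbf{The repair is the sandwich argument you sketched first and abandoned after testing only $h=\beta\ell$.} The mechanism's $h$ is defined through its discrete potential, taken from \cite{siamcomp/ChristodoulouGG19}. For $d_e\ge1$,
\[
\Phi^h_e(x)=\frac{a_e}{D_e}\Bigl(\beta(d+1)x^{d_e+1}+(d_e+1)(d+1-\beta)x^{d_e}\Bigr),\qquad D_e=\beta(d+1)+(d_e+1)(d+1-\beta)\le(d+1)^2,
\]
and $\Phi^h_e(x)=a_ex$ for $d_e=0$. Both bounds then follow directly:
\begin{itemize}
\item Dropping the nonnegative second term and using $D_e\le(d+1)^2$ gives $\Phi^h_e(x)\ge\frac{\beta}{d+1}\,x\ell_e(x)$.
\item Using $x^{d_e}\le x^{d_e+1}$ for integers $x\ge1$ gives $\Phi^h_e(x)\le x\ell_e(x)$.
\end{itemize}
So in your notation $\gamma=1$ and $\delta=\beta/(d+1)$. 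Hence $C(\spred)\le\frac{d+1}{\beta}\Phi^h(\spred)\le\frac{d+1}{\beta}\Phi^h(\s^*)\le\frac{d+1}{\beta}C(\s^*)$, with no detour through $\Phi^\ell$. The paper simply cites \cite{siamcomp/ChristodoulouGG19} for this step.

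Your robustness part then coincides with the paper's two claims. However, its inputs $\ell_e\le h_e\le\beta\ell_e$, $\Phi^h_e(x)\le x\ell_e(x)$ and $\Phi^h_e(x)\ge xh_e(x)/(d+1)$ must be verified for this particular $h$. As a polynomial in $x$, $h_e$ can have negative coefficients; in the affine case it is $\beta a_ex+b_e-(\beta-1)a_e$. So the last inequality comes from $x^{k+1}-(x-1)^{k+1}\le(k+1)x^k$ (Lemma~\ref{lem:polyAlgebra}), not from $\sum_{j\le x}j^k\ge x^{k+1}/(k+1)$.
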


For the case of affine latency functions, we are able to get an improved analysis for the consistency bound.

\begin{restatable}{corollary}{atomicAffineCor}
\label{cor:atomicAffine}
    Given the predicted input rates $\bar r$ and a parameter $\beta\in [1,2]$, the \textsc{Po\-ten\-tial($\beta$,$\bar r$)} mechanism is $\frac{\sqrt{3}+1}{\sqrt{3}+\beta-1}$-consistent, and $2\beta$-robust for affine latency functions.
\end{restatable}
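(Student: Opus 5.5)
The plan is to treat the two guarantees separately. Both rest on the atomic analogue of Lemma~\ref{lemma:enforce-nonatomic-unique}, with Rosenthal's potential replacing integrals by sums. For $d=1$ we have $\lfloor\beta\rfloor=1$, so Definition~\ref{def:potentialMech} gives $h_e(x)=\beta a_e x+b_e$. The modified latency is $\hat\ell_e(x)=h_e(x)$ if $x>x_e(\spred)$ and $\hat\ell_e(x)=\ell_e(x)$ otherwise, where $\spred$ minimizes $\Phi^h(\s)=\sum_e\sum_{j=1}^{x_e(\s)}h_e(j)$ for the predicted demand.

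\textbf{Step 1 (atomic Threshold lemma).} I would first repeat the proof of Lemma~\ref{lemma:enforce-nonatomic-unique} verbatim with sums $\sum_{j=1}^{x}$ in place of $\int_0^x$. For any profile $\s$, split the edges into $E_h$ (load above $x_e(\spred)$) and $E_\ell$. Using $\Phi^h(\spred)\le\Phi^h(\s)$ and $h_e\ge\ell_e$ pointwise gives $\hat\Phi(\s)\ge\hat\Phi(\spred)$. Hence $\spred$ is a minimizer of $\hat\Phi$, so a pure equilibrium, and $\hat C(\spred)=C(\spred)$ since no increased latency is triggered.

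\textbf{Step 2 (consistency).} By Step 1 it suffices to bound $C(\spred)/C(\s^*)$ for a minimizer $\spred$ of $\Phi^h$, i.e.\ of the approximate potential $\sum_e\big(a_e x_e(x_e+1)/2+b_e x_e/\beta\big)$. For this I would use the standard PoS technique for affine atomic games (as in \cite{esa/ChristodoulouK05,algorithmica/ChristodoulouKS11}), combining two inequalities. The first is the potential inequality $\Phi^h(\spred)\le\Phi^h(\s^*)$. The second is the local-optimality inequality: since $\spred$ is an equilibrium of the game with latencies $h$,
\[
\sum_e h_e(x_e)x_e\le\sum_e h_e(x_e+1)x^*_e .
\]
I would take a nonnegative combination, $\delta$ times the potential inequality plus $(1-\delta)$ times the equilibrium inequality. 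Writing $x=x_e(\spred)$ and $y=x_e(\s^*)$, the combination reduces to a per-edge integer inequality of the form
\[
a x^2+bx\le \gamma\,(a y^2+by)
\]
for the coefficient $a=a_e$ and the constant term $b=b_e$. The $b$-terms are handled separately and contribute a term linear in $\beta$. The main obstacle of this part is choosing $\delta=\delta(\beta)$ and proving the two-variable quadratic inequality over nonnegative integers with the right constant. I would first fix $\delta$ by matching the tight case of the $\beta=1$ analysis, whose value is $1+1/\sqrt3$. I would then verify that the resulting ratio equals $\frac{\sqrt3+1}{\sqrt3+\beta-1}$ by checking that the quadratic form in $(x,y)$ is nonnegative, with discriminant zero at the optimal $\delta$.

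\textbf{Step 3 (robustness).} Here I would mimic the non-atomic proof with two per-edge inequalities. The lower bound $\hat\Phi(\s)\ge\hat C(\s)/(2\beta)$ is easy: $\hat\ell_e\ge\ell_e\ge h_e/\beta$, and
\[
2\sum_{j\le x}h_e(j)=\beta a x(x+1)+2bx\ge x\,h_e(x)\ge x\,\hat\ell_e(x).
\]
The upper bound needed is $\hat\Phi(\s^*)\le C(\s^*)$. This is where I expect real difficulty. A triggered edge with small load, for example threshold $0$ and load $1$, contributes $\beta a+b>a+b$, so the clean per-edge bound fails. I would first try to reassign this excess to the lower-bound side. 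Concretely, I would combine the two bounds into a single per-edge inequality
\[
\hat C_e(x)\le 2\beta\,\big(\text{contribution of }x\text{ to }\hat\Phi\big)
\]
paired with $\hat\Phi_e(y)\le\beta\,\Phi^\ell_e(y)\le\beta\,C_e(y)\cdot\frac{y+1}{2y}$, and check that the factor is compensated. Failing that, I would bound $\hat C(\s^\diamond)$ directly from the equilibrium condition of $\s^\diamond$ under $\hat\ell$. This is an atomic Lifting Lemma, $\hat C(\s^\diamond)\le\sum_e h_e(x_e+1)x^*_e$, which I would combine with the potential inequality as in Step 2 to reach the factor $2\beta=\beta(d+1)$.
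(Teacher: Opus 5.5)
\textbf{The gap is your choice of auxiliary latencies.} You take $h$ from the non-atomic Definition~\ref{def:potentialMech}, which gives $h_e(x)=\beta a_e x+b_e$. The statement, however, is about the atomic \textsc{Potential} mechanism of Definition~\ref{def:potentialMechAtomic}. For $d=1$ that definition gives $h_e(x)=\beta a_e x+b_e-(\beta-1)a_e$, the marginal of $\Phi^h_e(x)=\frac{\beta}{2}a_e x^2+\frac{2-\beta}{2}a_e x+b_e x$.

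This difference matters: for your $h$ the consistency claim is false, so no choice of $\delta$ in Step~2 can succeed. Take $\beta=2$, where the claimed consistency is $1$. Let two players each choose either $e_1$ with $\ell_1(x)=x$ or $e_2$ with $\ell_2\equiv c$, where $c\in(1,2)$.
\begin{itemize}
\item Your $\Phi^h$ takes the values $6$, $2+c$ and $2c$ on the profiles ``both on $e_1$'', ``split'' and ``both on $e_2$''.
\item Hence $\spred$ puts both players on $e_2$, at cost $2c$.
\item But $C_{opt}=1+c$, so the ratio $2c/(1+c)$ exceeds $1$ and tends to $4/3$ as $c\to 2$.
\end{itemize}
With the correct $h$, $\Phi^h$ coincides with $C$ at $\beta=2$. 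In general $\Phi^h/\beta=\sum_e\big(\tfrac12 a_e x_e^2+\tfrac{2-\beta}{2\beta}a_e x_e+\tfrac1\beta b_e x_e\big)$, which is the approximate potential the paper takes from \cite{algorithmica/ChristodoulouKS11}. So once Lemma~\ref{lemma:enforceAtomic-unique} (your Step~1) gives $\hat C(\spred)=C(\spred)$, consistency follows by direct citation, with no need to redo the $\delta$-combination.

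\textbf{The obstacle you flag in Step~3 comes from the same defect.} The shift $-(\beta-1)a_e$ gives $h_e(1)=\ell_e(1)$ and $h_e(i)-\ell_e(i)=(\beta-1)a_e(i-1)$. So for a triggered edge with threshold $t$ and load $x\ge 1$, using $\beta\le 2$,
\[
\sum_{i\le t}\ell_e(i)+\sum_{t<i\le x}h_e(i)\le\sum_{i\le x}h_e(i)=\tfrac{\beta}{2}a_ex^2+\tfrac{2-\beta}{2}a_ex+b_ex\le a_ex^2+b_ex .
\]
Hence $\hat\Phi(\s)\le C(\s)$ holds edge by edge, and no compensation argument is needed.

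Your fallbacks do not establish the bound for your $h$. The chain $\hat\Phi_e(y)\le\beta\,\Phi^\ell_e(y)\le\beta\,C_e(y)$ gives only $2\beta^2$. The extra factor $\frac{y+1}{2y}$ equals $1$ at $y=1$, which is exactly where the loss of $\beta$ occurs, and it does not even bound the $b_e$-term once $y\ge 2$. The Lifting-Lemma route is left unspecified.

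Your lower bound $\hat\Phi(\s)\ge\hat C(\s)/(2\beta)$ is fine, and it carries over to the correct $h$. This is because $\ell_e\le h_e\le\beta\ell_e$ still holds on integers $x\ge1$, and $\sum_{i\le x}h_e(i)-\frac{x}{2}h_e(x)=\frac12(a_e+b_e)x\ge 0$. Combining it with the corrected upper bound gives $\frac{1}{2\beta}\hat C(\spot)\le\hat\Phi(\spot)\le\hat\Phi(\s^*)\le C(\s^*)$, which is the paper's argument.
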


As in the case of non-atomic congestion games, we additionally provide a stronger PoA-Robustness guarantee that applies to the worst-case
social cost across all equilibria, not only to potential minimizers.

\begin{restatable}{theorem}{atomicPoAThm}
\label{thm:atomicPoA}
    Given a predicted demand $\bar r$ and a parameter $\beta\in \left[{2(d+1)}/({d+3}),d+1\right]$, the PoA-Robustness of \textsc{Potential($\beta$,$\bar r$)} for atomic congestion games is at most $\phi_{d,\beta}^{d+1}$, where $\phi_{d,\beta}$ is the solution of $z^{d+1}=\beta (z+1)^d$.
\end{restatable}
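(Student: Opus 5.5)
The plan is to follow the standard smoothness template for the PoA of atomic congestion games. The one twist is that the equilibrium condition is taken in the modified game, while the deviation cost is bounded by the auxiliary latencies $h$. I expect the final estimate to coincide with the known PoA bound for $\beta$-approximate equilibria of atomic games with polynomial latencies, namely $\phi_{d,\beta}^{d+1}$, where $z^{d+1}=\beta(z+1)^d$.

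\emph{Step 1 (atomic Lifting Lemma).} Let $\s$ be any pure equilibrium under $\hat\ell$ for the actual demand $r$, and let $\s^*$ be an optimal profile under $\ell$. Write $x_e=x_e(\s)$ and $x^*_e=x_e(\s^*)$. In the atomic \textsc{Threshold}$(h,\spred)$ mechanism we have $\hat\ell_e(x)=h_e(x)$ when $x>\bar x_e$ and $\hat\ell_e(x)=\ell_e(x)$ otherwise. Since $\ell_e\le h_e$ and $h_e$ is non-decreasing, we get $\hat\ell_e\le h_e$ pointwise. The equilibrium condition against the deviation $s^*_i$ then gives $C_i(\s)\le \sum_{e\in s_i^*}\hat\ell_e(x_e+1)\le\sum_{e\in s^*_i}h_e(x_e+1)$. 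Summing over players yields
\[
\hat C(\s)\le \sum_{e\in E} h_e(x_e+1)\,x^*_e\le \beta\sum_{e\in E}\ell_e(x_e+1)\,x^*_e .
\]
The last inequality uses $h_e\le\beta\ell_e$. This holds coefficientwise because $k+1\le\lfloor\beta\rfloor\le\beta$ for the low-degree terms.

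\emph{Step 2 (smoothness).} I will use an inequality of the form $\beta\, y(x+1)^k\le \lambda\, y^{k+1}+\mu\, x^{k+1}$, valid for all nonnegative integers $x,y$ and all $0\le k\le d$, with $\mu<1$. Applying it monomial by monomial (all $a_{e,k}\ge0$) gives
\[
\hat C(\s)\le \lambda\, C(\s^*)+\mu\, C(\s)\le \lambda\, C(\s^*)+\mu\,\hat C(\s),
\]
because $\ell\le\hat\ell$ implies $C(\s)\le\hat C(\s)$. Hence $\hat C(\s)/C_{opt}(r)\le \lambda/(1-\mu)$. Setting $\lambda,\mu$ from the same tangent-line choice as in the standard analysis (Aland et al.; Christodoulou--Koutsoupias--Spirakis for approximate equilibria), with $\phi=\phi_{d,\beta}$ the root of $z^{d+1}=\beta(z+1)^d$, should make $\lambda/(1-\mu)=\phi^{d+1}$.

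\emph{Step 3 (main obstacle): verifying the inequality over all integer pairs $(x,y)$ and every degree $k\le d$.} The plan is as follows. First reduce to $k=d$ by arguing that the ratio is monotone in $k$ for the chosen parameters. Then treat separately the boundary cases $x=0$ and $y=0$ (the latter is trivial), and the small pairs such as $y=1$ with $x$ small. These are exactly where integrality breaks the continuous tangent-line argument. I expect the hypothesis $\beta\ge 2(d+1)/(d+3)$ to be precisely what makes these boundary cases, e.g.\ $x=0,y=1$, i.e.\ $\beta\le\lambda$, non-binding. For large $x/y$ I would use the homogeneous continuous relaxation, with $z=x/y$ and the tangency at $z=\phi$. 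If the direct verification fails for some $k<d$, I would fall back on the coefficientwise bound $(x+1)^k y\le$ the $k=d$ case scaled, using $x+1\ge1$.
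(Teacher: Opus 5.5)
Your route is the paper's route: apply the Lifting Lemma, use $h_e\le\beta\ell_e$ to get $\hat C(\s)\le\beta\sum_e\ell_e(x_e+1)x_e^*$, run the smoothness analysis for $\beta$-approximate equilibria, and close the loop with $C(\s)\le\hat C(\s)$. The paper imports that analysis, obtains an integer-$z$ expression, and then bounds it by $\phi_{d,\beta}^{d+1}$. Two points in your write-up need correcting.

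\textbf{First: the wrong $h$ in Step 1.} Your justification of $h_e\le\beta\ell_e$ (via $k+1\le\lfloor\beta\rfloor$) is for the non-atomic $h$ of Definition~\ref{def:potentialMech}. In the atomic mechanism, $h_e(x)=\Phi^h_e(x)-\Phi^h_e(x-1)$ comes from the potential~\eqref{eq:phiatomic}. Its coefficients are not of the form $(k+1)a_{e,k}$ or $\beta a_{e,k}$; for $d=1$, for example, it is $\beta a_ex+b_e-(\beta-1)a_e$. The inequality $h_e(x)\le\beta\ell_e(x)$ is still true for integers $x\ge1$, and the paper uses it. But it is not coefficientwise: for $d=3$, $\beta=4/3$, the monomial $a_ex^3$ gives $h_e(x)=\frac{a_e}{3}(4x^3-2x+1)$, which exceeds $\beta\ell_e(x)$ for $x<1/2$. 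So you must take it from the construction of that potential, or verify it for integer loads.

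Relatedly, the hypothesis $\beta\ge 2(d+1)/(d+3)$ plays no role in the smoothness step. It is what guarantees $\ell_e\le h_e$, i.e.\ that \textsc{Threshold} only increases latencies. You need that for the monotonicity of $\hat\ell$ used in the Lifting Lemma and for $C(\s)\le\hat C(\s)$.

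\textbf{Second: Step 3.} It is easier than you expect, and your fallback would not work: bounding $y(x+1)^k$ by the degree-$d$ inequality produces $y^{d+1},x^{d+1}$ terms that cannot be charged to the degree-$(k+1)$ costs $a_{e,k}x^{k+1}$. Integrality is never an obstacle. For integer $y\ge1$, use $(x+1)^k\le(x+y)^k$ and homogeneity; it then suffices that $F_k(t)=\beta(1+t)^k-\mu t^{k+1}\le\lambda$ for all real $t\ge0$.

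Take $\phi=\phi_{d,\beta}$, $\mu=\frac{d\phi}{(d+1)(\phi+1)}<1$, and $\lambda=(1-\mu)\phi^{d+1}=\beta(\phi+1)^{d-1}\frac{\phi+d+1}{d+1}$.
\begin{itemize}
\item For $k=d$: since $t^d/(t+1)^{d-1}$ is increasing, $F_d$ has the single critical point $t=\phi$, and $F_d(\phi)=\lambda$.
\item For $1\le k<d$: the maximizer $t_k$ of $F_k$ satisfies $t_k^k/(t_k+1)^{k-1}=\frac{k(d+1)}{(k+1)d}\cdot\frac{\phi^d}{(\phi+1)^{d-1}}\le\frac{\phi^k}{(\phi+1)^{k-1}}$. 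Hence $t_k\le\phi$, and $F_k\le\beta(\phi+1)^k\le\lambda$.
\item For $k=0$: $\beta\le\lambda$ holds automatically, so the case $x=0,y=1$ never binds.
\end{itemize}
This gives $\lambda/(1-\mu)=\phi_{d,\beta}^{d+1}$ directly. The paper's intermediate integer bound is slightly sharper (e.g.\ $22/3$ versus $4+2\sqrt3$ for $d=1$, $\beta=2$), but it is not needed for the stated theorem.
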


For the special case of affine latency functions, we obtain a refined analysis.

\begin{restatable}{corollary}{atomicAffinePoACor}
\label{cor:atomicAffinePoA}

    Given the predicted input rates $\bar r$ and a parameter $\beta\in [1,2]$, the \textsc{Po\-ten\-tial($\beta$,$\bar r$)} mechanism has PoA-Robustness $\frac{\beta z^2+(\beta+2)z+\frac{\beta+3}{4}}{2z+1-\beta}$, for affine latency functions, where $z$ is the maximum integer that satisfies $z \leq  \frac{\beta - 1 + \sqrt{\beta^4+4\beta^2-1}}{2}$.   
    
\end{restatable}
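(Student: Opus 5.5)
The plan is to mirror the non-atomic PoA-Robustness argument (Corollary~\ref{cor:consrobustAffinePoA}) and replace the real-valued quadratic inequality with an integer one, as in the classical atomic affine PoA analysis. The first step is an \emph{atomic Lifting Lemma}. Let $\s$ be an equilibrium under $\hat\ell$ and let $\s^*$ be an optimal profile under $\ell$, with loads $x_e=x_e(\s)$ and $y_e=x_e(\s^*)$. Player $i$ deviating to $s_i^*$ faces cost at most $\sum_{e\in s_i^*}\hat\ell_e(x_e+1)\le\sum_{e\in s_i^*}h_e(x_e+1)$, since $\hat\ell_e\le h_e$ pointwise. Summing the equilibrium conditions over players gives
\[
\hat C(\s)\le\sum_{e\in E}h_e(x_e+1)\,y_e=\sum_{e\in E}\bigl(\beta a_e(x_e+1)y_e+b_ey_e\bigr).
\]

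The second step is to find $\lambda\ge 1$ and $\mu<1$ such that, for all nonnegative integers $x,y$,
\[
\beta\,y(x+1)\le \lambda y^2+\mu x^2 .
\]
Given this, $b_ey_e\le\lambda b_ey_e$ and $x_e(a_ex_e+b_e)\le x_e\hat\ell_e(x_e)$ yield
\[
\hat C(\s)\le \mu\sum_e a_ex_e^2+\lambda C(\s^*)\le \mu\,\hat C(\s)+\lambda C(\s^*),
\]
hence the bound $\lambda/(1-\mu)$. Optimizing over admissible $(\lambda,\mu)$ should produce the stated expression. As a sanity check, at $\beta=1$ the formula gives $z=1$ and the value $5/2$, the known atomic affine PoA.

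The third step, which I expect to be the main obstacle, is to characterize the admissible $(\lambda,\mu)$ region on integers and carry out the optimization. For fixed $x$, the constraint only needs to hold at the integer $y$ closest to the real minimizer $\beta(x+1)/(2\lambda)$. The binding constraints are therefore a few consecutive integer pairs, and I would aim to show they are $(x,y)=(z,z+1)$ and $(x,y)=(z+1,z+1)$, or a neighboring pair. Making two such constraints tight gives $\lambda$ and $\mu$ as rational functions of $z$ and $\beta$. Substituting them into $\lambda/(1-\mu)$ should then give $\bigl(\beta z^2+(\beta+2)z+\tfrac{\beta+3}{4}\bigr)/(2z+1-\beta)$, where the constant $\tfrac{\beta+3}{4}$ comes from the half-integer correction in rounding $y$.

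The index $z$ is then selected as the integer maximizing this bound, equivalently the largest $z$ for which the $(z,z+1)$ constraint is still the binding one. I would identify it by comparing consecutive values $F(z)$ and $F(z+1)$ of the bound. This comparison reduces to a quadratic inequality in $z$ whose positive root is $\frac{\beta-1+\sqrt{\beta^4+4\beta^2-1}}{2}$, which would justify the definition of $z$ in the statement. Finally, I would verify that the resulting $\lambda\ge1$ and that $\mu<1$ across $\beta\in[1,2]$, and that all remaining integer pairs satisfy the inequality. The latter holds because the quadratic form is convex in $y$ and the chosen tight points bracket the real minimizer.
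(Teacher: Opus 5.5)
\textbf{The gap is in your first step, and it carries through everything after it.} The atomic \textsc{Potential} mechanism does not use $h_e(x)=\beta a_ex+b_e$; that is the non-atomic auxiliary latency. For $d=1$ the atomic definition, applied to the monomial $a_ex$ and the constant $b_e$, gives $h_e(x)=\beta a_ex+b_e-(\beta-1)a_e$. Hence $h_e(x_e+1)=\beta a_ex_e+a_e+b_e$, and the Lifting Lemma yields $\hat C(\s)\le\sum_e(\beta a_ex_e+a_e+b_e)y_e$.

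Your claimed equality $\sum_e h_e(x_e+1)y_e=\sum_e\bigl(\beta a_e(x_e+1)+b_e\bigr)y_e$ is false. Your right-hand side is larger by $(\beta-1)\sum_e a_ey_e$. Up to the constant terms, it is the inequality that every $\beta$-approximate equilibrium of the original game satisfies, and from it no admissible $(\lambda,\mu)$ reaches the stated bound.

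For instance, take $\beta=2$, where the stated $z$ is $3$ and the claimed value is $25/4$. The integer points $(x,y)=(2,1)$ and $(3,1)$ of $2y(x+1)\le\lambda y^2+\mu x^2$ already force $\lambda+4\mu\ge 6$ and $\lambda+9\mu\ge 8$. Minimizing $\lambda/(1-\mu)$ under these two constraints alone gives $22/3>25/4$. In general your route can recover at best $\min_z\beta\frac{z^2+3z+1}{2z+1-\beta}$, which the paper calls the ``quick but not tight'' bound. Your sanity check at $\beta=1$ cannot detect this, because the two forms of $h_e(x_e+1)$ coincide there.

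\textbf{What you need instead.} Keep the exact $h_e(x_e+1)$ and prove the integer inequality $y(\beta x+1)\le\frac{\beta}{2z+1}x^2+\frac{\beta z^2+(\beta+2)z+\frac{\beta+3}{4}}{2z+1}y^2$ for all nonnegative integers $x,y$. This is Lemma~\ref{lem:ineqAtomic} scaled by $\beta$. Multiply the difference by $(2z+1)/\beta$ and complete the square to get $\bigl(x-\frac{2z+1}{2}y\bigr)^2+\frac{y}{4\beta}\bigl((8z+3)y-8z-4\bigr)\ge 0$. Check this directly for $y=0$, for $y\ge 2$, and for $y=1$, where it reads $(x-z)(x-z-1)+\frac14-\frac{1}{4\beta}\ge 0$. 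Combine it with $\beta z^2+(\beta+2)z+\frac{\beta+3}{4}\ge 2z+1$, which absorbs the $b_ey_e$ terms, and with $2z+1>\beta$; rearranging then gives the stated expression.

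Two smaller points:
\begin{itemize}
\item The tight pairs you propose, $(z,z+1)$ and $(z+1,z+1)$, are not the relevant ones. For $\beta=z=1$, making them tight gives $\mu=2/3$ and $\lambda=5/6$, which already fails at $(x,y)=(0,1)$. The constraints that matter are those with $y=1$ and $x\in\{z,z+1\}$.
\item Do not try to show that the stated $z$ minimizes the bound over integers, because in general it does not. At $\beta=2$ the expression equals $23/4$ at $z=2$ but $25/4$ at the stated $z=3$. So comparing consecutive values will not produce the threshold in the statement. Since the inequality above holds for every integer $z\ge 0$, you simply plug in the given $z$.
\end{itemize}
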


For the case of affine latencies, in the extreme value of $\beta=2$, Corollary~\ref{cor:atomicAffine} gives an (optimal) consistency bound of $1$, and Corollary~\ref{cor:atomicAffinePoA} gives a PoA-Robustness of $6.25$, whereas for the other extreme value of $\beta = 1$, the consistency becomes approximately $1.577$ (which coincides with the PoS of exact equilibria), while PoA-Robustness is $5/2$
(which coincides with the PoA of exact equilibria).

\begin{remark}
We remark that the bounds of Corollaries \ref{cor:atomicAffine} and \ref{cor:atomicAffinePoA} provide better consistency and PoA-Robustness guarantees compared to the results of Theorems \ref{thm:atomic} and \ref{thm:atomicPoA} for $d=1$. Applying $d=1$ to Theorems \ref{thm:atomic} and \ref{thm:atomicPoA} gives a higher consistency bound of $2$ (instead of $\simeq 1.577$) and a higher PoA-Robustness bound of approximately $7.33$ (instead of $6.25$), for $\beta=1$ and $\beta=2$, respectively.
\end{remark}

The remainder of this section is organized as follows. Section
\ref{sec:potentialAtomic} introduces the definitions and lemmas used to design
learning-augmented coordination mechanisms. In Section \ref{sec:AffineAtomic}, we analyze the
performance of the \textsc{Potential} mechanism for the case of affine
latency functions (Corollary \ref{cor:atomicAffine}). Section \ref{sec:PolynomialsAtomic} extends the analysis to the more
general setting of bounded-degree polynomials and proves Theorem
\ref{thm:atomic}. Finally, in Section \ref{sec:atomicPoA}, we analyze PoA-Robustness for both affine (Corollary \ref{cor:atomicAffinePoA}) and polynomial latency functions (Theorem \ref{thm:atomicPoA}). We also include a threshold-sensitive robustness refinement in Section~\ref{sec:threshold-sensitive-atomic}.

\subsection{The {\sc Potential} Mechanism}
\label{sec:potentialAtomic}

In this section we introduce our main learning-augmented mechanism for
atomic congestion games, namely the \textsc{Potential} mechanism,
along with the technical lemmas used to bound its performance. Analogous to the non-atomic setting, we begin by presenting a base coordination
mechanism, denoted \textsc{Threshold}, which modifies each
latency function based on a specified \say{threshold} congestion level $x_e(s)$ and a corresponding
set of target latencies $h_e$. In particular, for each resource $e$, the mechanism increases the latency  $\ell_e(x)$ to $h_e(x)$ once the congestion level exceeds the threshold. Lemmas
\ref{lemma:enforceAtomic-unique} and \ref{lemma:LiftingAtomic} establish the key structural properties of this mechanism.

We then proceed to the description of the \textsc{Potential}
mechanism. The mechanism takes as input the predicted demand $\bar{r}$
and a parameter $\beta\in \left[{2(d+1)}/{(d+3)},d+1\right]$, and modifies the latency functions according to
the following \textsc{Threshold} mechanism. %

\begin{definition} 
[\textsc{Threshold($h, \x$)}] 
\label{def:thresholdMechAtomic}Given a set of latency functions
$h=(h_e)_{e \in E}$ and a threshold load $\x = (x_e)_{e \in E}$, such
that $\ell_e(x) \le h_e(x)$ for all $x \geq 1$, we define the
following modified latencies:

        $$\hat{\ell}_e(x)=
    \begin{cases}
h_e(x), &  x > x_e\\
\ell_e(x), & \text{otherwise}
    \end{cases}
    $$
\end{definition}

An important property of the \textsc{Threshold} mechanism is that, when the
threshold profile is chosen as a minimizer of the potential $\Phi^h$, this same
profile is also a potential minimizer w.r.t. the modified latencies
$\hat\ell$. Moreover, this particular profile preserves its original social
cost, since it never triggers the increased latencies. We show this property of
\textsc{Threshold} in the following lemma.

\begin{lemma}
\label{lemma:enforceAtomic-unique}
    Given a predicted demand $\bar r$ and a set of non-decreasing latency functions $(h_e)_{e \in E}$, with $\ell_e(x) \le h_e(x)$ for all $x \ge 1$ and $e \in E$, let $\spred$ be a minimizer of $\Phi^h$ for demand $\bar r$, and let $\hat{\ell}$ be the modified latencies produced by \textsc{Threshold}$(h,\spred)$.
    Then $\spred$ is a minimizer of $\Phi^{\hat{\ell}}$ for demand $\bar r$ and preserves the original social cost, i.e., $ \hat C(\spred)=C(\spred)$.
\end{lemma}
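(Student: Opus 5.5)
The plan is to mirror the proof of Lemma~\ref{lemma:enforce-nonatomic-unique}, replacing integrals by sums over integer loads. Write $\bar x_e := x_e(\spred)$ for the threshold loads. For any strategy profile $\s$ under demand $\bar r$, let $x_e := x_e(\s)$. Partition the resources into the triggered set $E_h=\{e: x_e>\bar x_e\}$ and the untriggered set $E_\ell=E\setminus E_h$. Since $\hat\ell_e(j)=\ell_e(j)$ for $j\le \bar x_e$ and $\hat\ell_e(j)=h_e(j)$ for $j>\bar x_e$, the per-resource modified potential takes two forms:
\[
\hat\Phi_e(\s)=\sum_{j=1}^{\bar x_e}\ell_e(j)+\sum_{j=\bar x_e+1}^{x_e}h_e(j) \quad\text{for } e\in E_h,
\qquad
\hat\Phi_e(\s)=\sum_{j=1}^{x_e}\ell_e(j) \quad\text{for } e\in E_\ell .
\]

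The main chain of inequalities proceeds in four steps.

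\textbf{Step 1.} Since $\spred$ minimizes $\Phi^h$, we have $\Phi^h(\spred)-\Phi^h(\s)\le 0$. Hence
\[
\hat\Phi(\s)\ge \hat\Phi(\s)+\Phi^h(\spred)-\Phi^h(\s).
\]

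\textbf{Step 2.} Split the right-hand side resource by resource. For $e\in E_h$, the $h$-sum over $j\in(\bar x_e,x_e]$ cancels exactly, leaving $\sum_{j=1}^{\bar x_e}\ell_e(j)=\Phi^\ell_e(\spred)$.

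\textbf{Step 3.} For $e\in E_\ell$ we have $x_e\le \bar x_e$, so $\Phi^h_e(\spred)-\Phi^h_e(\s)=\sum_{j=x_e+1}^{\bar x_e}h_e(j)$. This term is at least $\sum_{j=x_e+1}^{\bar x_e}\ell_e(j)$, because $h_e(j)\ge \ell_e(j)$ for all $j\ge 1$; only integer loads $j\ge 1$ appear, which is why that hypothesis suffices. Adding $\Phi^\ell_e(\s)=\sum_{j=1}^{x_e}\ell_e(j)$ gives $\Phi^\ell_e(\spred)$.

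\textbf{Step 4.} Summing over all resources yields $\hat\Phi(\s)\ge \Phi^\ell(\spred)$. Finally, $\Phi^\ell(\spred)=\hat\Phi(\spred)$, because in $\spred$ every load satisfies $x_e(\spred)=\bar x_e$, so no increased latency is ever triggered.

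This shows that $\spred$ minimizes $\hat\Phi$ over all profiles for demand $\bar r$. For the cost claim, $\hat\ell_e(\bar x_e)=\ell_e(\bar x_e)$ for every $e$ by the definition of \textsc{Threshold} (the condition is strict, $x>x_e$). Therefore
\[
\hat C(\spred)=\sum_e \bar x_e\,\hat\ell_e(\bar x_e)=C(\spred).
\]

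The only delicate point is the bookkeeping in Step~2: one has to use the discrete identity $\Phi^h_e(\s)-\Phi^h_e(\spred)=\sum_{j=\bar x_e+1}^{x_e}h_e(j)$ on triggered resources, together with the strict threshold convention. Monotonicity of $h$ is not actually needed beyond the stated hypotheses. The rest of the argument is routine.
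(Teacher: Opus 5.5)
Your proposal is correct and follows the paper's proof step for step. You add the nonpositive term $\Phi^h(\spred)-\Phi^h(\s)$, cancel the $h$-sums exactly on triggered resources, and use $h_e\ge \ell_e$ on untriggered resources to reach $\Phi^\ell(\spred)=\hat\Phi(\spred)$; the cost identity then holds because $\spred$ never triggers the increased latencies. Your side remarks are also accurate: only integer loads $j\ge 1$ appear, so the hypothesis for $x\ge1$ suffices, and monotonicity of $h$ is never used.
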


\begin{proof}
Let $\hat\Phi=\Phi^{\hat{\ell}}$, $\Phi^h$, and $\Phi^{\ell}$ be the potential under the  latencies $\hat\ell$, $h$, and $\ell$ respectively. Let $\s$ be any strategy profile and let $E_h$ be the set of edges for which the increased latencies $h$ are triggered, i.e., $x_e(\s)>x_e(\spred)$, and let $E_\ell = E\setminus E_h $ be the rest of the edges, where the original latencies are used. The potential of $\s$ is bounded as follows 

\begin{align*}
\hat\Phi({\s}) &\geq \hat\Phi({\s})+\Phi^h(\spred) - \Phi^h(\s)\\
&=\sum_{e\in E_h}\left(\hat\Phi_e({\s})+\Phi_e^h(\spred) - \Phi_e^h(\s)\right) + \sum_{e\in E_\ell}\left(\hat\Phi_e({\s})+\Phi_e^h(\spred) - \Phi_e^h(\s)\right)\\
&=\sum_{e\in E_h}\left(\left(\sum_{i=1}^{x_e(\spred)}\ell_e(i)+\sum_{i=x_e(\spred)+1}^{x_e(\s)}h_e(i)\right)- \sum_{i=x_e(\spred)+1}^{x_e(\s)}h_e(i)\right) + \sum_{e\in E_\ell}\left(\Phi_e^{\ell}({\s})+\Phi_e^h(\spred) - \Phi_e^h(\s)\right)\\
&= \sum_{e\in E_h}\Phi^{\ell}_e(\spred)+\sum_{e\in E_\ell}\left( \Phi^{\ell}_e(\s)-\Phi^h_e(\s)+\Phi^h_e(\spred)\right)\\
&= \sum_{e\in E_h}\Phi^{\ell}_e(\spred)+\sum_{e\in E_\ell}\left(\Phi^{\ell}_e(\s)+ \sum_{i=x_e(\s)+1}^{x_e(\spred)}h_e(i)\right)\\
&= \sum_{e\in E_h}\Phi^{\ell}_e(\spred)+ \sum_{e\in E_\ell}\left( \sum_{i=1}^{x_e(\s)}\ell_e(i)+\sum_{i=x_e(\s)+1}^{x_e(\spred)}h_e(i)\right)\\
&\geq\sum_{e\in E_h}\Phi^{\ell}_e(\spred)+ \sum_{e\in E_\ell} \Phi^{\ell}_e(\spred)\\
&= \hat\Phi({\spred}), 
\end{align*}
where the first inequality is due to the fact that $\spred$ minimizes $\Phi^h$, and the second due to $h_e(x)\geq \ell_e(x)$ for all $e\in E$ and $x\geq 0$. The last equality is due to the fact that the profile $\spred$ never triggers the increased latencies $h$, hence $\hat\Phi(\spred)= \Phi^\ell(\spred)$.

The inequality above shows that $    \hat\Phi(\s)\ge \hat\Phi(\spred)$
for every strategy profile $\s$ feasible for demand $\bar r$. Therefore,
$\spred$ is a minimizer of $\hat\Phi=\Phi^{\hat\ell}$ for demand $\bar r$.
Moreover, since the profile $\spred$ never triggers the increased latencies, we
also have
$
    \hat C(\spred)=C(\spred).
$
\end{proof}

The second key property of \textsc{Threshold} is the one that enables us to derive upper bounds on PoA-Robustness using smoothness arguments that are commonly used in PoA proofs.  The following lemma relates the latencies of the {\em modified network} to the original latencies, serving a role analogous to the Lifting Lemma for non-atomic congestion games (Lemma \ref{lem:LikeVariationalIneq}). 
 Its proof is straightforward.

\begin{lemma}[Lifting Lemma]
\label{lemma:LiftingAtomic}
Given a set of non-decreasing latency functions $h = (h_e)_{e \in E}$,
such that $\ell_e(x) \le h_e(x)$, for all $x \ge 1$ and $e \in E$, and
a threshold load $\x = (x_e(s))_{e \in E}$, let $\hat{\ell}$ be the
modified latencies of \textsc{Threshold}$(h, \x)$. For any demand $r$,
let $\s$ be an equilibrium under $\hat{\ell}$ and $s^*$ any other
feasible strategy profile. Then,
    \[\sum_{e \in E} \hat{\ell}_e(x_e(\s)) x_e(\s) \le \sum_{e \in E} h_e(x_e(\s)+1)x_e(\s^*)\]
\end{lemma}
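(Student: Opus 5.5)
The plan mirrors the non-atomic Lifting Lemma (Lemma~\ref{lem:LikeVariationalIneq}), adapted to the discrete setting: we use each player's equilibrium condition against the deviation $s_i^*$ and then sum over players. Throughout, $\x$ denotes the threshold load of \textsc{Threshold}$(h,\x)$.

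First, rewrite the left-hand side by players: $\sum_{e}\hat\ell_e(x_e(\s))x_e(\s)=\sum_{i\in N}\hat C_i(\s)$, where $\hat C_i$ is player $i$'s cost under $\hat\ell$. Since $\s$ is a Nash equilibrium under $\hat\ell$, for every player $i$ we have $\hat C_i(\s)\le \hat C_i(s_i^*,\s_{-i})=\sum_{e\in s_i^*}\hat\ell_e(x_e(s_i^*,\s_{-i}))$.

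Next, bound each term pointwise. After the deviation, the load on any $e\in s_i^*$ is $x_e(s_i^*,\s_{-i})\le x_e(\s)+1$: it equals $x_e(\s)$ if $e\in s_i$ and $x_e(\s)+1$ otherwise. We then need $\hat\ell_e(y)\le h_e(x_e(\s)+1)$ for every integer $y\in\{1,\dots,x_e(\s)+1\}$. By the definition of $\hat\ell$, either $\hat\ell_e(y)=h_e(y)$ or $\hat\ell_e(y)=\ell_e(y)\le h_e(y)$, using the hypothesis $\ell_e\le h_e$ for $x\ge1$. In both cases $\hat\ell_e(y)\le h_e(y)\le h_e(x_e(\s)+1)$ because $h_e$ is non-decreasing. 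This pointwise domination $\hat\ell_e\le h_e$ on positive integers is the only place the structure of \textsc{Threshold} enters, and it is the one step to state carefully; the rest is bookkeeping.

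Finally, sum over players and exchange the order of summation:
\[
\sum_{i\in N}\hat C_i(\s)\le\sum_{i\in N}\sum_{e\in s_i^*}h_e(x_e(\s)+1)=\sum_{e\in E}h_e(x_e(\s)+1)\,x_e(\s^*),
\]
since each $e$ appears in exactly $x_e(\s^*)$ of the sets $s_i^*$. Both $\s$ and $\s^*$ are profiles for the same actual demand $r$, so they range over the same player set $N$ and the deviation $s_i^*\in\mathcal{S}_i$ is feasible. This yields the claim. I expect no genuine obstacle; the only subtlety is the $+1$ load shift, which is absorbed by the monotonicity of $h_e$.
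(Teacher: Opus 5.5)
Your proof is correct and follows the paper's argument: apply each player's equilibrium condition against the deviation $s_i^*$, bound the post-deviation latency on each $e\in s_i^*$ by $h_e(x_e(\s)+1)$, and sum over players. The only difference is minor. You bound directly via $\hat\ell_e(y)\le h_e(y)\le h_e(x_e(\s)+1)$, whereas the paper first passes through $\hat\ell_e(x_e(\s)+1)$, which implicitly uses that $\hat\ell_e$ is itself non-decreasing; your route needs only the stated monotonicity of $h_e$.
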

\begin{proof}
Since $\s$ is a Nash equilibrium w.r.t. $\hat{\ell}$, $$\sum_{e \in s_i}\hat{\ell}_e(x_e(\s))\leq \sum_{e \in s^*_i}\hat{\ell}_e(x_e(\s)+1)\leq \sum_{e \in s^*_i}h_e(x_e(\s)+1).$$    By summing up over all players the lemma follows.
\end{proof}

We are now ready to define the \textsc{Potential} mechanism that receives as input
a predicted demand $\bar r$ and a parameter
$\beta\in \left[{2(d+1)}/({d+3}),d+1\right]$, and modifies the latencies by
using a specific instantiation of \textsc{Threshold} depending on $\bar r$ and $\beta$.

For simplicity, we assume, without loss of generality, that each
resource is decomposed into $d+1$ resources, where the latency of each
such resource is one of the monomials of the latency of the original
resource. In the general case, where latencies are polynomials of
degree $d$, all our expressions can be derived by the summation of the
related expressions of the monomials. For each resource $e$ in the
equivalent setting of monomials, we denote by $d_e$ the degree of the
monomial representing the latency on $e$.

\begin{definition} [\textsc{Potential($\beta$,$\bar r$)}]
\label{def:potentialMechAtomic}
Given a parameter $\beta \in \left[ \frac{2(d+1)}{d+3}, d+1 \right]$, we define the polynomial latency function
$h(\beta) = (h_e)_{e \in E}$ as follows. For $d_e \ge 1$,

    \begin{align*}
        h_e(x) &=  \frac{a_{e}\beta(d+1)}{\beta(d+1)+(d_e+1)(d+1-\beta)}(x^{d_e+1}-(x-1)^{d_e+1}) + \\
        &\qquad \frac{a_{e}(d_e+1)(d+1-\beta)}{\beta(d+1)+(d_e+1)(d+1-\beta)}(x^{d_e}-(x-1)^{d_e})\,.
    \end{align*}
For $d_e=0,$ $h_e$ is defined as $h_e(x) = a_{e}.$ Let $\spred$ be a potential minimizer of $\Phi^h$ for the predicted demand $\bar{r}$. Then, we define the \textsc{Potential}$(\beta, \bar{r})$ mechanism as the learning-augmented mechanism that uses the modified latencies of  \textsc{Threshold}$(h(\beta),\spred)$.
\end{definition}

\begin{remark}
We remark that $h_e$ is chosen as being the function satisfying $\Phi^h_e(x) = \sum_{i=1}^x h_e(i)$, for the following  $\Phi^h_{e}(x)$, for all $e \in E$:
\begin{align}
\label{eq:phiatomic}
    \Phi^h_{e}(x) = a_{e}\frac{\beta(d+1)}{\beta(d+1)+(d_e+1)(d+1-\beta)}x^{d_e+1} + a_{e}\frac{(d_e+1)(d+1-\beta)}{\beta(d+1)+(d_e+1)(d+1-\beta)}x^{d_e}\,, 
\end{align}
for $d_e\geq 1$; for $d_e=0$, it is defined as $\Phi^h_{e}(x) = a_{e}x.$
This potential function is used in \cite{siamcomp/ChristodoulouGG19}
to bound the PoS of $\beta$-approximate equilibria; the strategy profile $\s$ that minimizes
$\Phi^h(\s) = \sum_{e \in E}\Phi^h_{e}(x_e(\s))$ is a $\beta$-approximate
equilibrium under $\ell$. The function $h_e$ is simply
$h_e(x) = \Phi^h_e(x) - \Phi^h_e(x-1)$ for $x \ge 1$, and satisfies
$\ell_e(x)\leq h_e(x)$, for all $x\geq 1$, as needed~\cite{siamcomp/ChristodoulouGG19}, therefore,
\textsc{Threshold}$(h(\beta),\x)$ is well defined.

\end{remark}

We analyze the performance of \textsc{Potential} in the following subsections: Section \ref{sec:AffineAtomic} addresses the case of affine latency functions, while Section \ref{sec:PolynomialsAtomic}  extends the analysis to general  polynomial latencies.

\subsection{Warmup - Affine Latencies}
\label{sec:AffineAtomic}
In this section, we provide consistency-robustness guarantees for the \textsc{Potential} mechanism for affine latencies of the form
$\ell_e(x)=a_ex+b_e$, for each resource $e \in E$, where $a_e$ and
$b_e$ are nonnegative coefficients.

\atomicAffineCor*

\begin{proof}
For a given confidence parameter $\beta\in [1,2]$, $h_e$ is defined as follows
\begin{equation*}
    h_e(x)=\frac{2a_e\beta(2x-1)}{2\beta+2(2-\beta)}+\frac{2a_e(2-\beta)}{2\beta+2(2-\beta)}+b_e=\beta(a_ex+b_e)-(\beta-1)(a_e+b_e)=\beta a_ex +b_e-(\beta-1)a_e\,.\label{eq:hAffineAtomic}
\end{equation*} 

Let  $\spred$ be a potential minimizer of  $\Phi^h$  when the demand is $\bar r$.
The modified latency functions $\hat\ell$ of \textsc{Potential}$(\beta,\bar{r})$  are defined by 
  \textsc{Threshold}$(h, \x)$ , and $\x=(x_e)_{e\in E}$ with $x_e=x_e(\spred)$.

  \textbf{Consistency.}
We first show the consistency guarantee. For that, suppose that the
prediction is accurate, i.e., $r = \bar{r}$. By
Lemma~\ref{lemma:enforceAtomic-unique}, $\spred$ is a minimizer of
$\hat{\Phi}$ and preserves its original social cost, i.e.,
$\hat{C}(\spred)=C(\spred)$. Moreover, by definition, $\spred$ minimizes
$\Phi^h$, and hence also $\Phi^h/\beta$. By
\cite{algorithmica/ChristodoulouKS11}, any minimizer of $\Phi^h/\beta$%
\footnote{In fact, $\Phi^h/\beta$ is used in
\cite{algorithmica/ChristodoulouKS11} to upper bound the PoS of
$\beta$-approximate equilibria.}
has social cost w.r.t. the original latencies $\ell$ at most a factor of
$\frac{ \sqrt{3} + 1}{\sqrt{3} + \beta-1}$ away from the optimal cost.
Since $\hat{C}(\spred)=C(\spred)$, the consistency bound follows.

\textbf{Robustness.}
Let $\s$ be any strategy profile and let $E_h$ be the set of edges for which $x_e(\s)>x_e(\spred)$, i.e., $\hat{\ell}_e(x_e(\s))=h_e(x_e(\s))$, and let $E_\ell \subseteq E\setminus E_h $ be the rest of the edges that are used in $\s$ by at least one agent; note that for each $e\in E_\ell$, it is $\hat{\ell}_e(x_e(\s))=\ell_e(x_e(\s))$. Then $\hat{\Phi}(\s)$ can be written as
\begin{align*}
\hat{\Phi}(\s)
&=
\sum_{e\in E_\ell}
\left(
a_e\frac{x_e(\s)(x_e(\s)+1)}{2}
+
b_ex_e(\s)
\right) \\
&\quad+
\sum_{e\in E_h}
\left(
\beta a_e\frac{x_e(\s)(x_e(\s)+1)}{2}
+
(b_e-(\beta-1)a_e)x_e(\s)
-
(\beta-1)a_e\frac{x_e(\spred)(x_e(\spred)-1)}{2}
\right) \\
&\le
\sum_{e\in E_\ell}
\left(
a_e\frac{x_e(\s)(x_e(\s)+1)}{2}
+
b_ex_e(\s)
\right) \\
&\quad+
\sum_{e\in E_h}
\left(
\beta a_e\frac{x_e(\s)(x_e(\s)+1)}{2}
+
(b_e-(\beta-1)a_e)x_e(\s)
\right) \\
&\le
\sum_{e\in E_\ell}x_e(\s)\ell_e(x_e(\s))
+
\sum_{e\in E_h}x_e(\s)\ell_e(x_e(\s)) \\
&\le
C(\s)\,
\end{align*}

where the second inequality uses $x_e(\s)+1\le2 x_e(\s)$ for $e \in E_\ell$, and for $e \in E_h$. Note that for any $e\in E_h$, we have $a_e (x_e(\s))^2 + b_ex_e(\s) +(\beta/2-1)a_e (x_e(\s))^2 -(\beta/2-1)a_ex_e(\s)\leq x_e(\s)\ell_e(x_e(\s))$, where for the inequality we use that $(\beta/2-1)a_e (x_e(\s))^2\leq (\beta/2-1)a_e x_e(\s)$, since $x_e(\s)\geq 1$ and $\beta\leq 2$. 

Since
\[
\sum_{i=1}^{x_e(\spred)}\ell_e(i)
+
\sum_{i=x_e(\spred)+1}^{x_e(\s)}h_e(i)
\ge
\frac{1}{\beta}\sum_{i=1}^{x_e(\s)}h_e(i)
\ge
\frac{x_e(\s)}{2\beta}h_e(x_e(\s)),
\]
where the first inequality follows from
$\ell_e(i)\ge h_e(i)/\beta$ for the prefix
$i\le x_e(\spred)$, and from $h_e(i)\ge h_e(i)/\beta$ for the suffix
$i>x_e(\spred)$, using $\beta\ge1$ and $h_e(i)\ge0$. The second inequality follows  from
$\sum_{i=1}^{x_e(\s)}h_e(i)\ge \frac{x_e(\s)}{2}h_e(x_e(\s))$, and since also
$\sum_{i=1}^{x_e(\s)}\ell_e(i)\ge \frac{x_e(\s)}{2\beta}\ell_e(x_e(\s))$ for every $e\in E_\ell$, we get
\[
\hat{\Phi}(\s)
\geq
\sum_{e\in E_\ell}\frac{x_e(\s)}{2\beta}\ell_e(x_e(\s))
+
\sum_{e\in E_h}\frac{x_e(\s)}{2\beta}h_e(x_e(\s))
=
\frac{1}{2\beta}\hat{C}(\s)\,.
\]

We combine the above inequalities as follows: Let $\spot$ be any potential minimizer of $\hat{\Phi}$, and $\s^*$ be an optimal profile w.r.t. the original latency functions, i.e., $\s^*$ minimizes $C(\s)$. Then, by utilizing the above inequalities that hold for any $\s$, and therefore for $\spot$ and $\s^*$, it follows that
$\frac{1}{2\beta}\hat{C}(\spot) \leq \hat{\Phi}(\spot)\leq \hat{\Phi}(\s^*) \leq C(\s^*)\,,$
which concludes the proof for robustness, and so completes the proof of the theorem.
\end{proof}

\subsection{Bounded-degree Polynomials}
\label{sec:PolynomialsAtomic}

In this section, we extend our analysis to polynomial latency functions and present the proof of the main theorem. Similarly to the non-atomic case, we relate the proposed equilibrium to a suitable $\beta$-approximate equilibrium to derive the consistency bound. Robustness is established via a potential inequality argument.

\atomicThm*

\begin{proof}
\textbf{Consistency.} We first bound consistency. We obtain that the consistency is equal to the PoS of the $\beta$-approximate equilibria. To that end, suppose that $r = \bar{r}$, i.e., the prediction is correct. By Definition \ref{def:potentialMechAtomic}, the \textsc{Potential}$(\beta,\bar{r})$ mechanism modifies the latencies according to \textsc{Threshold}$(h, \spred)$, where
    \begin{align*}
        h_e(x) &=  \frac{a_{e}\beta(d+1)}{\beta(d+1)+(d_e+1)(d+1-\beta)}(x^{d_e+1}-(x-1)^{d_e+1}) + \\
        &\qquad \frac{a_{e}(d_e+1)(d+1-\beta)}{\beta(d+1)+(d_e+1)(d+1-\beta)}(x^{d_e}-(x-1)^{d_e})\,
    \end{align*}
    $\forall e \in E$ with $d_e\geq 1$, and $h_e(x)=a_e$, $\forall e \in E$, with $d_e=0$, and $\spred$ is a minimizer of $\Phi^h=\sum_{e\in E}\sum_{i=1}^{x_e(\s)} h_e(i)$; recall that
$\Phi^h(\s)=\sum_{e\in E}\Phi^h_e(x_e(\s)),$ where the functions $\Phi^h_e$ are as defined in \eqref{eq:phiatomic}. 
By Lemma~\ref{lemma:enforceAtomic-unique}, $\spred$ is a minimizer of
$\hat{\Phi}=\Phi^{\hat\ell}$ and preserves its original social cost, i.e., $\hat C(\spred)=C(\spred)$. Moreover, by definition, $\spred$ minimizes $\Phi^h$, and hence also
$\Phi^h/\beta$. Then, due to \cite{siamcomp/ChristodoulouGG19} any minimizer of
$\Phi^h/\beta$\footnote{Any minimizer of $\Phi^h/\beta$ corresponds to a
$\beta$-approximate equilibrium.} has social cost w.r.t. the original
latencies $\ell$ at most a factor of $\frac{d+1}{\beta}$ away from the
optimal cost, for
$\beta \in \left[\frac{2(d+1)}{d+3},d+1\right]$. 
Since $\hat C(\spred)=C(\spred)$, we obtain
    $\hat C(\spred)
    \le
    \frac{d+1}{\beta} C_{\mathrm{opt}}(\bar r).$
Therefore, for every
$\beta \in \left[\frac{2(d+1)}{d+3},d+1\right]$, the mechanism's consistency
is at most $\frac{d+1}{\beta}$.

\textbf{Robustness.} 
Let $\s$ be any strategy profile and let $E_h$ be the set of edges for which $x_e(\s)>x_e(\spred)$, i.e. $\hat{\ell}_e(x_e(\s))=h_e(x_e(\s))$ and let $E_\ell = E\setminus E_h $ be the rest of the edges, where the original latencies are used by at least one agent. Then $\hat{\Phi}(\s)$ can be written as:
\[
\hat{\Phi}(\s) = \sum_{e\in E} \sum_{i=1}^{x_e(\s)}\hat{\ell}_e(i) 
=\sum_{e\in E_\ell} \sum_{i=1}^{x_e(\s)}\ell_e(i) 
+ \sum_{e\in E_h} \left(
    \sum_{i=1}^{x_e(\spred)}\ell_e(i)
    +
    \sum_{i=x_e(\spred)+1}^{x_e(\s)}h_e(i)
\right).
\]
We will use the following properties of the auxiliary latencies $h$: for every
edge $e$ and every  $x\ge 1$,
$\ell_e(x)\le h_e(x)\le \beta \ell_e(x)$.\footnote{Regarding the first inequality, for $d_e=0$, we have $h_e(x)=\ell_e(x)=a_e$, so this is immediate. For $d_e \ge 1$, it follows by expanding $h_e(x)-\ell_e(x)$ and using
$x^{d_e}-(x-1)^{d_e}\ge d_e(x-1)^{d_e-1}$; the remaining condition is $\beta(d+1)d_e\ge (d_e+1)(d+1-\beta),$
equivalently $\beta \ge
    \frac{(d_e+1)(d+1)}{d_e(d+1)+d_e+1}$,which holds for all
$\beta\ge \frac{2(d+1)}{d+3}$ for every $1\le d_e\le d$.}
We first show that $\hat{\Phi}(\s)\leq C(\s)$ and then that $\hat{\Phi}(\s)\geq \frac{1}{\beta(d+1)}\hat{C}(\s)$. We will use Lemma~\ref{lem:polyAlgebra}, whose statement and  proof are deferred to Appendix~\ref{appendix:Arithmetic}.

\begin{claim} 
$\hat{\Phi}(\s)\leq C(\s)\,.$
\end{claim}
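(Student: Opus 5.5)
We prove the claim edge by edge, working throughout in the monomial decomposition: each resource $e$ has latency $\ell_e(x)=a_e x^{d_e}$ with $0\le d_e\le d$. Summing the resulting per-edge bounds over all edges gives $\hat\Phi(\s)\le C(\s)$, since $C(\s)=\sum_e x_e(\s)\ell_e(x_e(\s))$. Write $x=x_e(\s)$ and $\bar x=x_e(\spred)$.

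\emph{Untriggered edges.} For $e\in E_\ell$ the potential term is $\sum_{i=1}^{x}\ell_e(i)$. Since $\ell_e$ is non-decreasing, each summand is at most $\ell_e(x)$, so
\[
\sum_{i=1}^{x}\ell_e(i)\le x\,\ell_e(x).
\]

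\emph{Triggered edges.} For $e\in E_h$ the potential term is
\[
\sum_{i=1}^{\bar x}\ell_e(i)+\sum_{i=\bar x+1}^{x}h_e(i).
\]
Using $\ell_e\le h_e$ on the prefix, this is at most $\sum_{i=1}^{x}h_e(i)=\Phi^h_e(x)$. So it suffices to show the single-variable inequality
\[
\Phi^h_e(x)\le x\,\ell_e(x)=a_e x^{d_e+1}\quad\text{for every integer } x\ge 1.
\]
For $d_e=0$ this is an equality, since $\Phi^h_e(x)=a_e x$. For $d_e\ge1$, write $D=\beta(d+1)+(d_e+1)(d+1-\beta)$. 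By \eqref{eq:phiatomic},
\[
\Phi^h_e(x)=\frac{a_e}{D}\Bigl(\beta(d+1)x^{d_e+1}+(d_e+1)(d+1-\beta)x^{d_e}\Bigr).
\]
Since $x^{d_e}\le x^{d_e+1}$ for $x\ge1$ and the coefficient $(d_e+1)(d+1-\beta)$ is nonnegative (because $\beta\le d+1$), we get
\[
\Phi^h_e(x)\le \frac{a_e}{D}\bigl(\beta(d+1)+(d_e+1)(d+1-\beta)\bigr)x^{d_e+1}=a_e x^{d_e+1}.
\]
This is the required bound; presumably this is the role played by Lemma~\ref{lem:polyAlgebra}.

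The step needing the most care is the prefix replacement on triggered edges. It requires $\ell_e(i)\le h_e(i)$ for every $1\le i\le \bar x$, which is exactly the footnoted property. That property holds only for $\beta\ge 2(d+1)/(d+3)$, so this is where the lower limit on $\beta$ enters.

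Finally, the reduction to monomials is harmless: for a polynomial latency, both $\hat\Phi$ and $C$ are sums of the corresponding monomial terms, all sharing the same load $x_e(\s)$ and threshold $x_e(\spred)$. The per-monomial bounds therefore add up to the bound for the original resource.
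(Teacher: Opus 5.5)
Your proof is correct and is essentially the paper's argument. On triggered edges you replace the prefix terms $\ell_e(i)$ by $h_e(i)$, identify $\sum_{i=1}^{x}h_e(i)$ with $\Phi^h_e(x)$, and use $x^{d_e}\le x^{d_e+1}$ so that the two coefficients sum to $a_e$; untriggered edges are handled by monotonicity of $\ell_e$.

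One small correction to your aside: in the paper, Lemma~\ref{lem:polyAlgebra}(1) supplies the telescoping identity $\sum_{i=1}^{x}h_e(i)=\Phi^h_e(x)$, not the inequality step. You instead read that identity directly off the definition of $h_e$ as consecutive differences of \eqref{eq:phiatomic}.
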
 
\begin{proof}
    For any $e\in E_h$, with $d_e\ge 1$, by using  Lemma~\ref{lem:polyAlgebra} (1), it holds that 
\begin{align*}
& \sum_{i=1}^{x_e(\spred)}\ell_e(i)
+
\sum_{i=x_e(\spred)+1}^{x_e(\s)}h_e(i) \le 
\sum_{i=1}^{x_e(\s)}h_e(i) \\ 
&= \frac{a_{e}\beta(d+1)}{\beta(d+1)+(d_e+1)(d+1-\beta)}\sum_{i=1}^{x_e(\s)}(i^{d_e+1}-(i-1)^{d_e+1}) + \\
        &\qquad \frac{a_{e}(d_e+1)(d+1-\beta)}{\beta(d+1)+(d_e+1)(d+1-\beta)}\sum_{i=1}^{x_e(\s)}(i^{d_e}-(i-1)^{d_e})\\
        &= \frac{a_{e}\beta(d+1)}{\beta(d+1)+(d_e+1)(d+1-\beta)}(x_e(\s))^{d_e+1}+  \frac{a_{e}(d_e+1)(d+1-\beta)}{\beta(d+1)+(d_e+1)(d+1-\beta)}(x_e(\s))^{d_e}\\
        &\leq \frac{a_{e}\beta(d+1)}{\beta(d+1)+(d_e+1)(d+1-\beta)}(x_e(\s))^{d_e+1}+  \frac{a_{e}(d_e+1)(d+1-\beta)}{\beta(d+1)+(d_e+1)(d+1-\beta)}(x_e(\s))^{d_e+1}\\
        &=a_ex_e(\s)^{d_e+1}=x_e(\s)\ell_e(x_e(\s))
\end{align*}
The same also holds with equality when $d_e=0$.  
Then, by using the above inequality for any $e\in E_h$ and since any $\ell_e$ is a non-decreasing function

\begin{align*}
\hat{\Phi}(\s) 
&\leq \sum_{e\in E_\ell} \sum_{i=1}^{x_e(\s)}\ell_e(x_e(\s)) 
+ \sum_{e\in E_h} {x_e(\s)}\ell_e(x_e(\s))\\
&=\sum_{e\in E_\ell} {x_e(\s)}\ell_e(x_e(\s)) 
+ \sum_{e\in E_h} {x_e(\s)}\ell_e(x_e(\s)) 
= C(\s)\,.
\end{align*}
\end{proof}

\begin{claim}
$\hat{\Phi}(\s)\geq \frac{1}{\beta(d+1)}\hat{C}(\s)\,.$
\end{claim}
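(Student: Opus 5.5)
The plan is to prove the inequality resource by resource, working in the monomial decomposition, and then sum. For a fixed resource $e$, write $x=x_e(\s)$ and $\bar x = x_e(\spred)$. There are two cases, according to whether $e\in E_\ell$ or $e\in E_h$. Resources with $x=0$ contribute zero to both sides, so we may assume $x\ge 1$.

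\emph{Untriggered resources ($e\in E_\ell$).} Here $\hat\Phi_e(\s)=\sum_{i=1}^{x}\ell_e(i)=a_e\sum_{i=1}^x i^{d_e}$, and we want this to be at least $\frac{1}{\beta(d+1)}\,x\,\ell_e(x)=\frac{a_e x^{d_e+1}}{\beta(d+1)}$. The sum comparison $\sum_{i=1}^x i^{d_e}\ge \int_0^x t^{d_e}\,dt = \frac{x^{d_e+1}}{d_e+1}$ gives
\[
\sum_{i=1}^x i^{d_e}\ge \frac{x^{d_e+1}}{d_e+1}\ge \frac{x^{d_e+1}}{d+1}\ge \frac{x^{d_e+1}}{\beta(d+1)},
\]
using $d_e\le d$ and $\beta\ge1$. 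This is the case where the slack $\beta$ is not even needed.

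\emph{Triggered resources ($e\in E_h$).} Here $\hat\Phi_e(\s)=\sum_{i=1}^{\bar x}\ell_e(i)+\sum_{i=\bar x+1}^{x}h_e(i)$ and $\hat C_e(\s)=x\,h_e(x)$. I will use the sandwich $\ell_e(i)\ge h_e(i)/\beta$ stated just before the claims (valid for all $i\ge1$ when $\beta\ge \frac{2(d+1)}{d+3}$), together with $h_e(i)\ge h_e(i)/\beta$, exactly as in the affine warm-up. This yields
\[
\hat\Phi_e(\s)\ \ge\ \frac1\beta\sum_{i=1}^{x}h_e(i)=\frac1\beta\,\Phi^h_e(x).
\]
It then remains to show $\Phi^h_e(x)\ge \frac{x\,h_e(x)}{d+1}$ for integer $x\ge1$. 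From \eqref{eq:phiatomic},
\[
\Phi^h_e(x)=A x^{d_e+1}+B x^{d_e}, \qquad A,B\ge 0,
\]
since $\beta\le d+1$. Also
\[
h_e(x)=A\bigl(x^{d_e+1}-(x-1)^{d_e+1}\bigr)+B\bigl(x^{d_e}-(x-1)^{d_e}\bigr).
\]
By the mean value theorem and monotonicity of $t^{k}$, we have $x^{k}-(x-1)^{k}\le k x^{k-1}$. Hence
\[
x\,h_e(x)\le (d_e+1)A x^{d_e+1}+d_e B x^{d_e}\le (d_e+1)\,\Phi^h_e(x)\le (d+1)\,\Phi^h_e(x).
\]
The case $d_e=0$ is immediate, since then $h_e=\ell_e=a_e$. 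Alternatively, this is presumably what Lemma~\ref{lem:polyAlgebra} records, and I would invoke it if it applies. Combining, $\hat\Phi_e(\s)\ge \frac{1}{\beta(d+1)}\,x\,h_e(x)=\frac{1}{\beta(d+1)}\hat C_e(\s)$.

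Summing the two cases over all resources gives $\hat\Phi(\s)\ge\frac{1}{\beta(d+1)}\hat C(\s)$. The main obstacle is the triggered case: one must check that the ``$\ell\ge h/\beta$'' direction of the sandwich genuinely holds for the atomic $h_e$, including small $i$ where the $x^{d_e}$ correction term matters; the footnote asserts it, and I would rely on it. One must also verify that the discrete inequality $x h_e(x)\le (d+1)\Phi^h_e(x)$ holds with nonnegative coefficients $A,B$, which is handled by the mean value bound above.
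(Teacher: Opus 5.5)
Your proof is correct and is essentially the paper's argument: the same $E_\ell$/$E_h$ split, the same prefix/suffix use of $\ell_e\ge h_e/\beta$ to get $\hat\Phi_e\ge\frac{1}{\beta}\Phi^h_e(x)$ on triggered resources, and the same two elementary estimates (your sum-versus-integral and mean-value bounds are parts (3) and (2) of Lemma~\ref{lem:polyAlgebra}). The one step you flag, $h_e\le\beta\ell_e$, is also asserted without proof in the paper (its footnote only justifies $\ell_e\le h_e$), but it does hold: the trapezoid rule for the convex map $t\mapsto t^{d_e}$ gives $(d_e+1)x^{d_e}-\bigl(x^{d_e+1}-(x-1)^{d_e+1}\bigr)\ge\frac{d_e+1}{2}\bigl(x^{d_e}-(x-1)^{d_e}\bigr)$, and combining this with $\beta\ge\frac{2(d+1)}{d+3}$ yields $h_e(x)\le\beta a_e x^{d_e}$.
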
 
\begin{proof} 
By using Lemma~\ref{lem:polyAlgebra} (2) for any $e\in E_h$, with $d_e\geq 1$:
\begin{align*}
\sum_{i=1}^{x_e(\spred)}\ell_e(i)
+
\sum_{i=x_e(\spred)+1}^{x_e(\s)}h_e(i)
&\geq
\frac{1}{\beta}\sum_{i=1}^{x_e(\s)}h_e(i) \\
&=
\frac{1}{\beta}\left(
\frac{a_{e}\beta(d+1)}{\beta(d+1)+(d_e+1)(d+1-\beta)}(x_e(\s))^{d_e+1}\right.\\
&\qquad\left.+
\frac{a_{e}(d_e+1)(d+1-\beta)}{\beta(d+1)+(d_e+1)(d+1-\beta)}(x_e(\s))^{d_e}
\right)\\
&\geq
\frac{1}{\beta}\left(
\frac{a_{e}\beta(d+1)}{\beta(d+1)+(d_e+1)(d+1-\beta)}
x_e(\s)\frac{(x_e(\s))^{d_e+1}-(x_e(\s)-1)^{d_e+1}}{d_e+1}\right.\\
&\qquad\left.+
\frac{a_{e}(d_e+1)(d+1-\beta)}{\beta(d+1)+(d_e+1)(d+1-\beta)}
x_e(\s)\frac{(x_e(\s))^{d_e}-(x_e(\s)-1)^{d_e}}{d_e}
\right)\\
& \geq \frac{x_e(\s)h_e(x_e(\s))}{\beta(d+1)}.
\end{align*}

The first inequality comes from the fact that for the prefix $i\le x_e(\spred)$, we use
$\ell_e(i)\ge h_e(i)/\beta$. For the suffix
$i>x_e(\spred)$, we use $h_e(i)\ge h_e(i)/\beta$, which follows from
$\beta\ge 1$ and $h_e(i)\ge 0$.

The same inequality also holds when $d_e=0$. Moreover, using Lemma~\ref{lem:polyAlgebra} (3) for any $e\in E_{\ell}$:
\[
\sum_{i=1}^{x_e(\s)}\ell_e(i) 
        = \sum_{i=1}^{x_e(\s)}a_e i^{d_e}
        \geq \frac{a_e(x_e(\s))^{d_e+1}}{d_e+1} \geq \frac{x_e(\s)\ell_e(x_e(\s))}{d+1}\,.
\]
Overall, 
\begin{align*}
\hat{\Phi}(\s) 
&=\sum_{e\in E_\ell} \sum_{i=1}^{x_e(\s)}\ell_e(i) 
+ \sum_{e\in E_h} \left(
    \sum_{i=1}^{x_e(\spred)}\ell_e(i)
    +
    \sum_{i=x_e(\spred)+1}^{x_e(\s)}h_e(i)
\right)\\
&\geq \sum_{e\in E_\ell} \frac{x_e(\s)\ell_e(x_e(\s))}{\beta(d+1)} 
+ \sum_{e\in E_h} \frac{x_e(\s)h_e(x_e(\s))}{\beta(d+1)} 
= \frac{1}{\beta(d+1)}\hat{C}(\s)\,.
\end{align*}
\end{proof}

We complete the proof for the robustness upper bound by using the above two claims.
Let $\spot$ be any potential minimizer of $\hat{\Phi}$, and $\s^*$ be an optimal profile w.r.t. the original latency functions, i.e., $\s^*$ minimizes $C(\s)$. Then, by utilizing the above two claims that hold for any $\s$, and therefore for $\spot$ and $\s^*$, it follows that
$\frac{1}{\beta(d+1)}\hat{C}(\spot) \leq \hat{\Phi}(\spot)\leq \hat{\Phi}(\s^*) \leq C(\s^*)\,,$
which concludes the proof for robustness, and so completes the proof of the theorem.
\end{proof}

\subsubsection{A threshold-sensitive robustness refinement}
\label{sec:threshold-sensitive-atomic}

Similarly to the non-atomic case, we present a refined threshold-sensitive analysis of the \textsc{Potential} mechanism, depending on how close the triggered resources are to their predicted thresholds. Let \(\s\) be any strategy profile and let
\(E_h(\s)=\{e\in E:x_e(\s)>x_e(\bar{\s})\}\) be the set of triggered resources. For every \(e\in E_h(\s)\), define
\(\eta_e=x_e(\bar{\s})/x_e(\s)\in[0,1)\). If \(E_h(\s)\neq\emptyset\), let
\(\eta(\s,\bar{\s})=\max_{e\in E_h(\s)}\eta_e\), and otherwise set \(\eta(\s,\bar{\s})=0\). Recall that
\(\hat\Phi_e(\s)=\sum_{i=1}^{x_e(\s)}\hat\ell_e(i)\), \(\hat\Phi(\s)=\sum_{e\in E}\hat\Phi_e(\s)\), and
\(\hat C_e(\s)=x_e(\s)\hat\ell_e(x_e(\s))\).

We will use two elementary facts. First, \(h_e(x)\le \beta\ell_e(x)\) for every resource \(e\) and every integer \(x\ge1\), as established in the proof of Theorem~\ref{thm:atomic}. Second, if
\(\Phi^h_e(x)=\sum_{i=1}^x h_e(i)\), then \(\Phi^h_e(x)\ge xh_e(x)/(d+1)\) for every integer \(x\ge1\). This follows from the definition of the auxiliary potential \(\Phi^h_e\) and Lemma~\ref{lem:polyAlgebra}.

\begin{lemma}[Threshold-sensitive potential lower bound]
\label{lemma:threshold-sensitive-atomic}
For every strategy profile \(\s\),
\[
    \hat{\Phi}(\s)
    \ge
    \frac{1-\left(1-\frac{1}{\beta}\right)\eta(\s,\bar{\s})}
    {d+1}\,
    \hat{C}(\s).
\]
\end{lemma}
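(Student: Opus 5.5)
We mirror the non-atomic argument of Lemma~\ref{lemma:threshold-sensitive-nonatomic} and prove the inequality resource by resource, then sum over $e\in E$. For an untriggered resource $e\notin E_h(\s)$ we have $x_e(\s)\le x_e(\spred)$, so $\hat\ell_e=\ell_e$ on $\{1,\dots,x_e(\s)\}$. Hence $\hat\Phi_e(\s)=\sum_{i=1}^{x_e(\s)}\ell_e(i)\ge x_e(\s)\ell_e(x_e(\s))/(d+1)=\hat C_e(\s)/(d+1)$ by Lemma~\ref{lem:polyAlgebra}~(3), which applies monomial by monomial in the decomposed setting. Since $1-(1-1/\beta)\eta(\s,\spred)\le 1$, this already gives the claimed bound on such resources.

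For a triggered resource $e\in E_h(\s)$, write $y=x_e(\spred)<x=x_e(\s)$, so that
\[
\hat\Phi_e(\s)=\Phi^\ell_e(y)+\Phi^h_e(x)-\Phi^h_e(y).
\]
First, $h_e(i)\le\beta\ell_e(i)$ for all integers $i\ge1$ gives $\Phi^\ell_e(y)\ge\Phi^h_e(y)/\beta$. Therefore
\[
\hat\Phi_e(\s)\ge \Phi^h_e(x)-\left(1-\tfrac1\beta\right)\Phi^h_e(y).
\]

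The key step is to show $\Phi^h_e(y)\le \eta_e\,\Phi^h_e(x)$ with $\eta_e=y/x$. By \eqref{eq:phiatomic}, $\Phi^h_e(z)$ is a combination with nonnegative coefficients of $z^{d_e+1}$ and $z^{d_e}$, or equals $a_e z$ when $d_e=0$. Each term is a power $z^q$ with $q\ge1$ and zero constant term. The ratio $(y/x)^q\le y/x$ holds because $0\le y/x<1$ and $q\ge1$, so the inequality holds termwise. Nonnegativity of the coefficients is clear, since $\beta\le d+1$.

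Combining, we get $\hat\Phi_e(\s)\ge(1-(1-1/\beta)\eta_e)\Phi^h_e(x)$, where the prefactor is positive. Then $\Phi^h_e(x)\ge x h_e(x)/(d+1)=\hat C_e(\s)/(d+1)$. This fact follows, as already noted, from \eqref{eq:phiatomic} together with Lemma~\ref{lem:polyAlgebra}~(2), exactly as in the second claim of the proof of Theorem~\ref{thm:atomic}. Finally, $\eta_e\le\eta(\s,\spred)$ and the prefactor is decreasing in $\eta$, which gives the edge-wise bound with the uniform $\eta(\s,\spred)$. Summing over all resources completes the argument.

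The main potential obstacle is the step $\Phi^h_e(x)\ge x h_e(x)/(d+1)$ in the discrete setting. Here $h_e$ is a finite difference rather than a derivative, so I would rely on the chain of inequalities already verified in the proof of Theorem~\ref{thm:atomic} rather than reprove it. The remaining steps are direct analogues of the continuous case.
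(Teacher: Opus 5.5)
Your proposal is correct and follows the paper's proof essentially step for step. It uses the same untriggered/triggered split, the bound $\Phi^\ell_e\ge\Phi^h_e/\beta$ coming from $h_e\le\beta\ell_e$, the termwise estimate $\Phi^h_e(x_e(\spred))\le\eta_e\Phi^h_e(x_e(\s))$, the discrete inequality $\Phi^h_e(x)\ge x h_e(x)/(d+1)$ via Lemma~\ref{lem:polyAlgebra}, and then monotonicity in $\eta$ and summation over resources; your explicit checks that the coefficients of $\Phi^h_e$ are nonnegative (since $\beta\le d+1$) and that the prefactor is positive are small clarifications the paper leaves implicit.
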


\begin{proof}
We prove the inequality resource by resource. First consider an untriggered resource \(e\notin E_h(\s)\), so \(x_e(\s)\le x_e(\bar{\s})\). Then \(\hat\ell_e(i)=\ell_e(i)\) for every \(1\le i\le x_e(\s)\), and hence
\(\hat\Phi_e(\s)=\sum_{i=1}^{x_e(\s)}\ell_e(i)\ge x_e(\s)\ell_e(x_e(\s))/(d+1)=\hat C_e(\s)/(d+1)\), where the inequality follows from Lemma~\ref{lem:polyAlgebra} and the nonnegativity of the coefficients of \(\ell_e\). Since
\(1-(1-1/\beta)\eta(\s,\bar{\s})\le 1\), this already implies the desired resource-wise bound for every untriggered resource.

Now consider a triggered resource \(e\in E_h(\s)\), so \(x_e(\s)>x_e(\bar{\s})\). Then
\begin{align*}
    \hat\Phi_e(\s)
    &=
    \sum_{i=1}^{x_e(\bar{\s})}\ell_e(i)
    +
    \sum_{i=x_e(\bar{\s})+1}^{x_e(\s)}h_e(i)  \\
    &=
    \Phi^\ell_e(x_e(\bar{\s}))+\Phi^h_e(x_e(\s))-\Phi^h_e(x_e(\bar{\s})) \\
    &\ge
    \Phi^h_e(x_e(\s))
    -
    \left(1-\frac{1}{\beta}\right)\Phi^h_e(x_e(\bar{\s})) \\
    &\ge
    \Phi^h_e(x_e(\s))
    -
    \left(1-\frac{1}{\beta}\right)\eta_e\Phi^h_e(x_e(\s)) \\
    &=
    \left(1-\left(1-\frac{1}{\beta}\right)\eta_e\right)\Phi^h_e(x_e(\s)) \\
    &\ge
    \frac{1-\left(1-\frac{1}{\beta}\right)\eta_e}{d+1}x_e(\s)h_e(x_e(\s)) \\
    &=
    \frac{1-\left(1-\frac{1}{\beta}\right)\eta_e}{d+1}\hat C_e(\s).
\end{align*}
The first inequality uses \(h_e(i)\le \beta\ell_e(i)\), which implies
\(\Phi^\ell_e(x_e(\bar{\s}))\ge \Phi^h_e(x_e(\bar{\s}))/\beta\). The second inequality uses
\(\Phi^h_e(x_e(\bar{\s}))\le \eta_e\Phi^h_e(x_e(\s))\): indeed, \(\Phi^h_e\) has nonnegative coefficients and zero constant term as a polynomial in the load, and
\((x_e(\bar{\s})/x_e(\s))^q\le x_e(\bar{\s})/x_e(\s)=\eta_e\) for every \(q\ge1\). The last inequality uses
\(\Phi^h_e(x_e(\s))\ge x_e(\s)h_e(x_e(\s))/(d+1)\). Since
\(\eta_e\le \eta(\s,\bar{\s})\) and \(1-(1-1/\beta)\eta\) is decreasing in \(\eta\), this gives
\[
    \hat\Phi_e(\s)
    \ge
    \frac{1-\left(1-\frac{1}{\beta}\right)\eta(\s,\bar{\s})}
    {d+1}\hat C_e(\s)
\]
for every triggered resource. Summing over all resources completes the proof.
\end{proof}

\begin{corollary}[Threshold-sensitive robustness]
\label{cor:threshold-sensitive-robustness-atomic}
Let \(\spot\) be any potential minimizer of the modified game, and let
\(\s^\ast\) be an optimal strategy profile under the original latencies. Then
\[
    \hat C(\spot)
    \le
    \frac{d+1}
    {1-\left(1-\frac{1}{\beta}\right)\eta(\spot,\bar{\s})}
    C(\s^\ast).
\]
\end{corollary}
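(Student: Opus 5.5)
The argument mirrors the non-atomic Corollary~\ref{cor:threshold-sensitive-robustness}. It sandwiches the modified potential of the potential minimizer between a lower bound in terms of its modified social cost and an upper bound in terms of the original optimal cost.

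For the lower bound, apply Lemma~\ref{lemma:threshold-sensitive-atomic} to the profile $\spot$:
\[
\frac{1-\left(1-\frac{1}{\beta}\right)\eta(\spot,\bar{\s})}{d+1}\,\hat C(\spot)\le \hat\Phi(\spot).
\]

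For the upper bound, we have $\hat\Phi(\spot)\le\hat\Phi(\s^\ast)$, since $\spot$ minimizes $\hat\Phi$ over all profiles feasible for the actual demand $r$, and $\s^\ast$ is one such profile. Next we need $\hat\Phi(\s^\ast)\le C(\s^\ast)$. This is the first claim in the robustness part of the proof of Theorem~\ref{thm:atomic}, which states $\hat\Phi(\s)\le C(\s)$ for every strategy profile $\s$. That claim uses no equilibrium property, so it applies to $\s^\ast$. Chaining the three inequalities gives
\[
\frac{1-\left(1-\frac{1}{\beta}\right)\eta(\spot,\bar{\s})}{d+1}\,\hat C(\spot)\le C(\s^\ast).
\]

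It remains to divide by the coefficient, so we must check it is positive. Since $\eta\in[0,1)$ and $\beta\ge 1$, we have $1-(1-1/\beta)\eta>1-(1-1/\beta)=1/\beta>0$. Rearranging gives the stated bound.

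The only point needing care is that the claim $\hat\Phi(\s)\le C(\s)$ was proved under the standing assumption $\beta\in[2(d+1)/(d+3),d+1]$, which gives $\ell_e\le h_e\le\beta\ell_e$. That claim also relies on Lemma~\ref{lem:polyAlgebra}. I take these from the earlier proof, so the main work is just making sure the hypotheses match, which they do.
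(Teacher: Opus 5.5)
Your proof is correct and follows the paper's argument exactly: apply Lemma~\ref{lemma:threshold-sensitive-atomic} to $\spot$, chain $\hat\Phi(\spot)\le\hat\Phi(\s^\ast)\le C(\s^\ast)$ using minimality and the first robustness claim of Theorem~\ref{thm:atomic}, and rearrange. Your explicit positivity check of the coefficient, which the paper leaves implicit, is a good addition; it should read $1-(1-1/\beta)\eta\ge 1/\beta>0$ rather than a strict first inequality, since equality holds when $\beta=1$, which is admissible for $d=1$.
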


\begin{proof}
By Lemma~\ref{lemma:threshold-sensitive-atomic},
\[
    \frac{1-\left(1-\frac{1}{\beta}\right)\eta(\spot,\bar{\s})}
    {d+1}
    \hat C(\spot)
    \le
    \hat\Phi(\spot).
\]
Since \(\spot\) minimizes \(\hat\Phi\), and since the robustness proof of
Theorem~\ref{thm:atomic} shows that \(\hat\Phi(\s)\le C(\s)\) for every strategy profile \(\s\), we have
\(\hat\Phi(\spot)\le \hat\Phi(\s^\ast)\le C(\s^\ast)\). Combining the two inequalities and rearranging gives the claim.
\end{proof}

As in the non-atomic case, this bound refines the worst-case robustness guarantee of Theorem~\ref{thm:atomic} by tracking how close the triggered resources are to their predicted threshold loads. When the realized potential minimizer exceeds the threshold loads by a large margin, the factor is close to \(d+1\); in the worst case, as \(\eta(s^\diamond,\bar s)\) approaches \(1\), the bound recovers the general guarantee \(\beta(d+1)\).

\subsection{PoA-Robustness}\label{sec:atomicPoA}
In this section, we analyze the PoA-Robustness of \textsc{Potential} for the case of atomic games with polynomial latencies with nonnegative coefficients of degree $d$. The main result shows an upper bound of PoA-Robustness as a function of the confidence parameter $\beta$ (Theorem~\ref{thm:atomicPoA}). 

The next result presents an upper bound of PoA-Robustness for the special case of affine latency functions. We stress that this is an improved bound compared to the bound obtained by Theorem~\ref{thm:atomicPoA} for $d=1$. 

\atomicAffinePoACor*

\begin{proof}
Given an arbitrary input rate $r$, let $\s$ be any equilibrium profile under the modified latencies $\hat{\ell}$, and $\s^*$ be the optimal profile under the original latencies $\ell$.

By using the Lifting Lemma for atomic games (Lemma \ref{lemma:LiftingAtomic}) and substituting for $h_e(x) = \beta(a_ex+b_e)-(\beta-1)(a_e+b_e)$, we obtain
\begin{align}
\label{eq:relationToCKS11Atomic}
  \hat{C}(\s) & = \sum_{e \in E}\hat{\ell}_e(x_e(\s))x_e(\s) \le \sum_{e \in E} (\beta a_ex_e(\s) + a_e + b_e)x_e(\s^*).
\end{align}

A quick ---but not tight---  upper bound of PoA-Robustness can be obtained as follows. 
The last term of (\ref{eq:relationToCKS11Atomic}) can be bounded above by  $\beta \cdot \sum_{e \in E}\ell_e(x_e(\s)+1)x_e(\s^*)$, which gives $$\hat{C}(\s)\le \beta \sum_{e \in E}\ell_e(x_e(\s)+1)x_e(\s^*).$$

By observing that the above inequality is also satisfied by $\beta$-approximate equilibria w.r.t. the original latencies, one can follow a similar analysis with that of \cite{algorithmica/ChristodoulouKS11} to upper bound the PoA of $\beta$-approximate equilibria. This gives a PoA-Robustness bound of at most
$\beta\frac{z^2+3z+1}{2z-\beta+1}$, where $z = \left\lfloor \frac{\beta+\sqrt{5+6(\beta-1)+(\beta-1)^2}}{2} \right\rfloor$.

We are able to obtain an improved bound by a more careful analysis (see Figure \ref{fig:robustness_affine} (b)). In particular, by applying Lemma \ref{lem:ineqAtomic} (we defer the proof to Appendix~\ref{appendix:Arithmetic}) to the last term of \eqref{eq:relationToCKS11Atomic} we establish the following claim.

\begin{claim}
    For any parameter $\beta \in [1,2]$ it holds that 
    \begin{align*}
        \hat{C}(\s) \le \frac{\beta}{2z+1}\hat{C}(\s)+\beta \frac{z^2+(1+\frac{2}{\beta})z+\frac{\beta+3}{4\beta}}{2z+1}C(\s^*) \,. 
    \end{align*}
\end{claim}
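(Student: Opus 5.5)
The plan is to start from inequality \eqref{eq:relationToCKS11Atomic}, $\hat C(\s)\le \sum_{e}(\beta a_e x_e+a_e+b_e)y_e$ with $x_e=x_e(\s)$ and $y_e=x_e(\s^*)$, and bound the right-hand side term by term. Each term will be bounded by a combination of $a_ex_e^2+b_ex_e$ and $a_ey_e^2+b_ey_e$. We then use $C(\s)\le \hat C(\s)$, which holds because $\hat\ell_e\ge \ell_e$ pointwise for integer loads $\ge 1$. This lets us replace $\sum_e(a_ex_e^2+b_ex_e)=C(\s)$ by $\hat C(\s)$, which gives the claimed form. Write $K:=z^2+(1+\tfrac{2}{\beta})z+\tfrac{\beta+3}{4\beta}$.

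\emph{Constant terms.} The $b_e$ part is easy: $b_ey_e\le \frac{\beta}{2z+1}b_ex_e+\frac{\beta K}{2z+1}b_ey_e$ holds because $\frac{\beta K}{2z+1}\ge 1$. Indeed $\beta K\ge \beta z^2+(\beta+2)z+1\ge 2z+1$.

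\emph{Core inequality.} For the $a_e$ part it suffices to prove the integer inequality (this is what I expect Lemma~\ref{lem:ineqAtomic} to be)
\[
\beta xy+y\le \frac{\beta}{2z+1}x^2+\frac{\beta K}{2z+1}y^2 \qquad \text{for all integers } x,y\ge 0 .
\]
After multiplying by $(2z+1)/\beta$, this is equivalent to
\[
g(x,y):=x^2-(2z+1)xy+Ky^2-\frac{(2z+1)}{\beta}y\ge 0 .
\]
The case $y=0$ is trivial. For $y=1$, $g$ is a quadratic in $x$ whose integer minimum is attained at $x\in\{z,z+1\}$ and equals $-z^2-z$ before the last two terms are added. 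Substituting gives $K-z^2-z-\frac{2z+1}{\beta}=\frac{\beta-1}{4\beta}\ge 0$, which is exactly why the constant $\frac{\beta+3}{4\beta}$ appears and why $\beta\ge 1$ is needed.

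\emph{Main obstacle: the case $y\ge 2$.} Here I would drop integrality and minimize over real $x$, which gives the sufficient condition
\[
\Bigl(K-\tfrac{(2z+1)^2}{4}\Bigr)y\ge \tfrac{2z+1}{\beta}.
\]
Since $K-\frac{(2z+1)^2}{4}=\frac{(2-\beta)z}{\beta}+\frac{3-\beta}{4\beta}$, it suffices to verify the condition at $y=2$. At $y=2$ it becomes $2(2-\beta)z+\frac{3-\beta}{2}\ge 2z+1$, which is not automatic. This is the point where the specific choice of $z$ in Corollary~\ref{cor:atomicAffinePoA} (the largest integer below $\frac{\beta-1+\sqrt{\beta^4+4\beta^2-1}}{2}$) has to enter. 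If the real relaxation is too lossy, I would instead keep $x$ integral for small $y$, checking $x=\lfloor (2z+1)y/2\rfloor$ and $x=\lceil (2z+1)y/2\rceil$, and use the defining inequality of $z$ to close the remaining cases. I am least sure of this step and would expect to spend most of the effort there.

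\emph{Conclusion.} Granting the core inequality, summing over $e$ and adding the constant-term bound gives
\[
\hat C(\s)\le \frac{\beta}{2z+1}C(\s)+\frac{\beta K}{2z+1}C(\s^*)\le \frac{\beta}{2z+1}\hat C(\s)+\frac{\beta K}{2z+1}C(\s^*),
\]
which is the claim.
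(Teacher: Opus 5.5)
Your overall structure is the paper's. You start from \eqref{eq:relationToCKS11Atomic}, bound $\beta a_e xy+a_e y$ by your integer inequality (which is exactly Lemma~\ref{lem:ineqAtomic} multiplied by $\beta a_e$), absorb $b_e y_e$ using $\beta K\ge 2z+1$, and finish with $C(\s)\le \hat C(\s)$. Your cases $y=0$ and $y=1$ are correct, and the $y=1$ computation matches the paper's $(x-z)(x-z-1)+\frac14-\frac1{4\beta}$.

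\textbf{The gap is the case $y\ge 2$.} You leave it open, and the obstacle you describe comes from an algebra slip. Since $\frac{(2z+1)^2}{4}=z^2+z+\frac14$, you get $K-\frac{(2z+1)^2}{4}=\frac{2z}{\beta}+\frac{3}{4\beta}=\frac{8z+3}{4\beta}$. This is not $\frac{(2-\beta)z}{\beta}+\frac{3-\beta}{4\beta}$; that expression is what results from subtracting $z^2+2z+\frac12$ instead.

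With the correct value, your real-relaxation condition $\bigl(K-\frac{(2z+1)^2}{4}\bigr)y\ge\frac{2z+1}{\beta}$ becomes $(8z+3)y\ge 8z+4$. This holds for every $y\ge 2$, every integer $z\ge 0$ and every $\beta>0$. Equivalently, as in the paper, $g(x,y)=\bigl(x-\frac{2z+1}{2}y\bigr)^2+\frac{(8z+3)y-8z-4}{4}\cdot\frac{y}{\beta}$, which is visibly nonnegative for $y\ge 2$.

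So no property of the particular $z$ in Corollary~\ref{cor:atomicAffinePoA} is needed. The inequality holds for all natural $z$, and the choice of $z$ only matters afterwards, when optimizing the ratio $\frac{\beta K}{2z+1-\beta}$.

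Your proposed remedies would not have worked:
\begin{itemize}
\item The defining condition on $z$ is only an upper bound on $z$, so it cannot supply the missing inequality.
\item With your incorrect coefficient, the condition at $y=2$ reduces to $(1-\beta)\bigl(2z+\frac12\bigr)\ge 0$, which is false for every $\beta>1$.
\item At $y=2$ the real minimizer is the integer $x=2z+1$, so keeping $x$ integral cannot rescue it; you would have been led to a spurious counterexample.
\end{itemize}
Correcting the arithmetic closes the proof along your intended lines.
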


\begin{proof}
\begin{align*}
    \hat{C}(\s) & \leq \sum_{e \in E} (\beta a_ex_e(\s) + a_e + b_e)x_e(\s^*) \\ & \le \sum_{e \in E} \beta\left(\frac{1}{2z+1}\right)a_e(x_e(\s))^2 + \beta\frac{z^2+\left(1+\frac{2}{\beta}\right)z+\frac{\beta+3}{4\beta}}{2z+1}a_e(x_e(\s^*))^2 + b_e x_e(\s^*) \\
    & \le \frac{\beta}{2z+1} \hat{C}(\s) + \frac{\beta z^2+(\beta+2)z+\frac{\beta+3}{4}}{2z+1}C(\s^*)\,,
\end{align*}
where the last inequality follows from the fact that $\beta z^2+(\beta+2)z+\frac{\beta+3}{4} \geq 2z+1$ for all non-negative integers $z$ and $\beta\geq 1.$ 
\end{proof}

Rearranging yields the following bound:

$$\frac{\hat{C}(\s)}{C(\s^*)} \le \frac{\beta z^2+(\beta+2)z+\frac{\beta+3}{4}}{2z+1-\beta} \,.$$

Taking \(z\) to be the integer specified in the statement gives the claimed
upper bound.

\end{proof}

Now we are ready to restate the main result of this section.

\atomicPoAThm*

\begin{proof}
In fact we will show that $$\mbox{PoA-Robustness: }\frac{\beta((z+1)^{2d+1}-z^{d+1}(z+2)^d)}{(z+1)^{d+1} - z^{d+1} - \beta ((z+2)^d-(z+1)^d)}, $$
    where $z$ is the maximum integer that satisfies $\frac{z^{d+1}}{(z+1)^d}\leq \beta.$ Then, similarly to \cite{siamcomp/AlandDGMS11} it holds that the above expression is upper bounded by $z^{d+1}=\phi_{d,\beta}^{d+1}$, for $z$ that makes $\frac{z^{d+1}}{(z+1)^d} = \beta$.   

Given any input rate $r$, let $\s$ be an equilibrium profile under the modified latencies $\hat{\ell}$ and $\s^*$ the optimal profile under the original latencies $\ell$. Using the Lifting Lemma for atomic games (Lemma \ref{lemma:LiftingAtomic}), we obtain that:

\begin{align}
    \hat{C}(\s)=\sum_{e \in E}\hat{\ell}_e(x_e(s))x_e(s) \le \sum_{e \in E}h_e(x_e(s)+1)x_e(s^*) 
\end{align}

Now, the right part can be upper bounded by $\beta \sum_{e \in E}\ell_e(x_e(s)+1)x_e(s^*).$
The resulting inequality enables us to apply the analysis of the atomic PoA for $\beta$-approximate equilibria in \cite{siamcomp/ChristodoulouGG19}, by observing that $\s$ corresponds to a $\beta$-approximate equilibrium under the original latencies $\ell$. Thus, we can upper bound the robustness via the PoA of $\beta$-approximate equilibria provided by \cite{siamcomp/ChristodoulouGG19}. Hence, we obtain a robustness bound of at most $\frac{\beta((z+1)^{2d+1}-z^{d+1}(z+2)^d)}{(z+1)^{d+1} - z^{d+1} - \beta ((z+2)^d-(z+1)^d)},$
    where $z$ is the maximum integer that satisfies $\frac{z^{d+1}}{(z+1)^d}\leq \beta.$
\end{proof}

\section{Impossibility Results}
\label{sec:impossibilityResults}

In this section, we present impossibility results concerning the design of learning-augmented coordination mechanisms that satisfy further desirable properties.
In Section~\ref{sec:poaCons} we show that when a stronger notion of consistency (PoA-Consistency) is required, any mechanism that is $1$-consistent with respect to this stronger criterion must suffer unbounded robustness. Furthermore, in Section~\ref{sec:smoothness} we establish impossibility results for mechanisms whose approximation preserves a continuity property for arbitrarily small prediction errors.

\subsection{PoA-Consistency}
\label{sec:poaCons}

The following two theorems demonstrate that designing mechanisms that are 1-consistent with respect to the PoA is excessively demanding, thereby, motivating our choice of the potential minimizers for the consistency and robustness guarantees. In particular, we show that mechanisms that are $1$-consistent w.r.t.  PoA exhibit unbounded robustness (Theorem~\ref{thm:impossibilityPoAconsNon-atomic} and Theorem~\ref{thm:impossibilityPoAcons} for non-atomic and atomic congestion games, respectively). Furthermore, for the atomic games we generalize our impossibility result for mechanisms with PoA-Consistency equal to $\alpha\geq 1$ (see Appendix~\ref{sec:impossibilitiesPoAadditional}).

\begin{theorem}
\label{thm:impossibilityPoAconsNon-atomic}
    There is no learning-augmented coordination mechanism for non-atomic congestion games that has simultaneously PoA-Consistency equal to $1$ and finite robustness.
\end{theorem}

\begin{proof}    
    Consider a non-atomic congestion game defined as follows. There are $4$ commodities, denoted by $0, 1, 2,3$, and a set of resources $E=\{e_0,e_1,e_2,e_3\}$. The predicted demand is $\bar{r}_0=2$, $\bar{r}_1=\bar{r}_2=1$, $\bar r_3=0$. The latencies of the resources are: $\ell_0(x)=2$, $\ell_1(x)=\ell_2(x)=x$, and $\ell_3(x)=\epsilon$. The set of strategies is $\Sigma_0=\{\{e_0\},\{e_1\},\{e_2\}\}$ for commodity $0$, $\Sigma_1=\{\{e_1\},\{e_2\}\}$ for commodity $1$,  $\Sigma_2=\{\{e_1,e_3\},\{e_2\}\}$ for commodity $2$, and $\Sigma_3=\{e_3\}$. 

    The optimal solution assigns $\bar r_0$ to $e_0$, $\bar r_1$ to $e_1$, and $\bar r_2$ to $e_2$ for the predicted demand $\bar r$ resulting in a social cost $C_{opt}(\bar r)=2\ell_0(2)+\ell_1(1)+\ell_2(1)=6$. There exists an equilibrium profile $f$ in which commodity $0$ routes half of its demand through $e_1$ and the rest through $e_2$, commodity $1$ routes all its demand through $e_1$, while commodity $2$ routes all its demand through $\{e_2\}$, yielding a social cost $C(f)=8$.

    Now consider any mechanism that is required to be $1$-consistent w.r.t. PoA. So, when $r=\bar r$, the mechanism must enforce commodity $0$ to route all its demand through $e_0$, and for this it must modify the latencies of $e_1$ and $e_2$, so that $\hat{\ell}_1(x) > 2$ for $x > 1$ and $\hat{\ell}_2(x) > 2$ for $x > 1$. Then,
    
    the profile where commodity $0$ routes all its demand through $e_0$, commodity $1$ routes all its demand through $e_2$ and commodity $2$ routes all its demand through $\{e_1,e_3\}$ must not be an equilibrium for $r=\bar r$. Eliminating such an equilibrium without harming the predicted optimal solution requires $\hat{\ell}_3(x) > 1$ for $x > 0$.

    Now consider an actual instance in which only commodity $3$'s demand is realized, i.e., $r_0=r_1=r_2=0$ and $r_3=1$. In this case, the social cost of the unique outcome is equal to the modified latency of $e_3$ for the realized demand. Hence, the mechanism's robustness is at least ${1}/{\epsilon}$, which tends to $\infty$ as $\epsilon \to 0$.
\end{proof}

\begin{theorem}
\label{thm:impossibilityPoAcons}
    There is no learning-augmented coordination mechanism for atomic congestion games that has simultaneously PoA-Consistency equal to $1$ and finite robustness.
\end{theorem}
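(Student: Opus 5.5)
We mirror the construction of Theorem~\ref{thm:impossibilityPoAconsNon-atomic}, with players in place of flow. We build an atomic game and a prediction $\bar r$ with three features. First, the original game with demand $\bar r$ has a bad equilibrium. Second, killing that equilibrium while keeping the optimum an equilibrium forces the mechanism to raise the latency of a cheap resource $e_3$ at load $1$. Third, $e_3$ has original latency $\epsilon$ and is the only resource available to a player who is absent from the prediction. Taking the actual demand to consist only of that player, the unique outcome has cost $\hat\ell_3(1)$, while $C_{opt}(r)=\epsilon$. The robustness ratio is therefore unbounded as $\epsilon\to 0$.

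\textbf{Instance.} Take resources $e_0,e_1,e_2,e_3$ with $\ell_0(x)=c$ (a constant), $\ell_1(x)=\ell_2(x)=x$, and $\ell_3(x)=\epsilon$. The predicted demand has three players: player $0$ with $\mathcal{S}_0=\{\{e_0\},\{e_1\},\{e_2\}\}$, player $1$ with $\mathcal{S}_1=\{\{e_1\},\{e_2\}\}$, and player $2$ with $\mathcal{S}_2=\{\{e_1,e_3\},\{e_2\}\}$. A fourth player, $3$, has $\mathcal{S}_3=\{\{e_3\}\}$ and is not in $\bar N$, or equivalently is a predicted player absent from $\bar r$'s realization.

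Choose $c$ with $2<c<3$, say $c=5/2$. Then the optimum for $\bar r$ puts players $0,1,2$ on $e_0,e_1,e_2$ respectively, with cost $c+2+O(\epsilon)$. The bad profile puts player $0$ on $e_1$, player $1$ on $e_1$, and player $2$ on $e_2$. Its cost is $2\cdot2+1=5>c+2$. It is an equilibrium because player $0$ pays $2\le c$ and cannot do better on $e_2$, where it would pay $2$. Player $1$ would pay $2$ on $e_2$, no better than its current $2$. Player $2$ pays $1$, while deviating to $\{e_1,e_3\}$ costs $3+\epsilon$. I will verify these numbers carefully, possibly adjusting $c$ so that all comparisons are strict where needed.

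\textbf{Forcing the modification.} PoA-Consistency $1$ means every equilibrium of the modified game under $\bar r$ must have cost exactly $C_{opt}(\bar r)$ in $\hat C$. Since $\hat\ell\ge\ell$, this also forces the optimal profile to remain at its original cost, so $\hat\ell_0(1)=c$, $\hat\ell_1(1)=\hat\ell_2(1)=1$, and $\hat\ell_3(1)=\epsilon$ on the part used by the optimum. Here I must be careful: in the optimum player $2$ uses $\{e_2\}$, so $e_3$ is unused, and $\hat\ell_3(1)$ is free.

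To destroy the bad equilibrium, the mechanism must raise $\hat\ell_1(2)$ above $c$, since otherwise player $0$ stays on $e_1$. The mechanism must also destroy the profile in which player $0$ uses $e_0$, player $1$ uses $e_2$, and player $2$ uses $\{e_1,e_3\}$. That profile's cost is $c+1+1+\hat\ell_3(1)$, which exceeds the optimum as soon as $\hat\ell_3(1)>\epsilon$. The mechanism cannot simply leave $\hat\ell_3(1)=\epsilon$: then this swapped profile would be a suboptimal equilibrium. Indeed player $1$ would pay $\hat\ell_2(1)=1$ versus $\hat\ell_1(2)>c$, and player $2$ would pay $1+\epsilon$ versus $\hat\ell_2(2)$, which is at least $2$. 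Player $0$ pays $c$ versus $\hat\ell_1(2)>c$ or $\hat\ell_2(2)$.

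So the mechanism must make player $2$ prefer $\{e_2\}$ there. That requires $1+\hat\ell_3(1)>\hat\ell_2(2)\ge 2$, hence $\hat\ell_3(1)\ge 1$. Alternatively it must make player $1$ prefer $e_1$, but that requires $\hat\ell_1(2)\le 1$, which is impossible since $\hat\ell_1(2)\ge\ell_1(2)=2$.

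\textbf{Conclusion and main obstacle.} With actual demand consisting only of player $3$, the unique profile costs $\hat\ell_3(1)\ge1$, against $C_{opt}(r)=\epsilon$. The robustness ratio is at least $1/\epsilon\to\infty$.

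The delicate step is the case analysis showing that \emph{every} modification achieving $1$-PoA-Consistency must have $\hat\ell_3(1)\ge 1$. In particular I must rule out alternatives such as raising $\hat\ell_2(2)$ by a smaller amount, or exploiting tie-breaking, and I must ensure the swapped profile really is an equilibrium whenever $\hat\ell_3(1)<1$, using only $\hat\ell\ge\ell$ and the forced values at the optimum. Handling ties, such as player $1$ being indifferent, may require perturbing constants slightly.
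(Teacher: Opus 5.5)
Your instance works, but the step you flag as delicate is left open, and the plan you give for closing it does not suffice. That plan uses only $\hat\ell\ge\ell$ and the load-$1$ values $\hat\ell_0(1)=c$, $\hat\ell_1(1)=\hat\ell_2(1)=1$. In the swapped profile $(e_0,e_2,\{e_1,e_3\})$ you must also rule out player $0$ moving from $e_0$ to $e_2$, which requires $\hat\ell_2(2)\ge c$. From $\hat\ell\ge\ell$ you only get $\hat\ell_2(2)\ge 2<c=5/2$, and your sentence ``player $0$ pays $c$ versus $\hat\ell_1(2)>c$ or $\hat\ell_2(2)$'' leaves exactly this comparison unresolved. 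Under the original latencies the swapped profile is \emph{not} an equilibrium, since player $0$ gains by moving to $e_1$ or $e_2$ and paying $2<c$. It becomes a threat only once the mechanism has raised $\hat\ell_1(2)$ and $\hat\ell_2(2)$. Your reason for $\hat\ell_1(2)>c$ (``otherwise player $0$ stays on $e_1$'') is also invalid. The profile $(e_1,e_1,e_2)$ could instead be destroyed by making $\hat\ell_1(2)>\hat\ell_2(2)$, which sends player $0$ or player $1$ to $e_2$. Similarly, your list of ways to kill the swapped profile omits player $0$'s deviations.

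The missing idea is that the optimum itself must be an equilibrium of the modified game. The modified game is still a congestion game, so it has a pure equilibrium by Rosenthal's potential. Since $\hat C\ge C\ge C_{opt}(\bar r)$, $1$-PoA-consistency forces every equilibrium to be an original optimum with untouched latencies at its loads. For $\epsilon>0$ and $c<3$, $(e_0,e_1,e_2)$ is the unique optimum. Player $0$'s equilibrium condition there gives $\hat\ell_1(2)\ge c$ and $\hat\ell_2(2)\ge c$; weak inequalities suffice, and ties are harmless because Nash equilibrium is defined with $\le$. Now take the swapped profile, which has $\hat C=c+2+\hat\ell_3(1)>c+2$ and so must not be an equilibrium. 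Neither player $0$ nor player $1$ can strictly improve, so player $2$ must. That means $1+\hat\ell_3(1)>\hat\ell_2(2)\ge c$, hence $\hat\ell_3(1)>c-1>1$, and the $1/\epsilon$ bound follows. With this repair, the original bad equilibrium $(e_1,e_1,e_2)$ plays no role.

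The paper avoids all of this by dropping $e_0$ and player $0$. With predicted players $\{1,2\}$ only, $(\{e_2\},\{e_1,e_3\})$ is already an equilibrium of the original game, with cost $2+\epsilon>2$. Player $1$ cannot be made to leave without raising $\hat\ell_2(1)$, which the optimum forbids, so player $2$ must be pushed out, forcing $\hat\ell_3(1)>1$. Your construction mirrors the non-atomic proof. There the extra commodity is needed to force the jumps in $\hat\ell_1,\hat\ell_2$ above load $1$, because infinitesimal flow does not feel its own contribution. In the atomic game a player's own unit of load already creates that jump, so the extra player only adds case analysis.
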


\begin{proof}
    Consider an atomic congestion game defined as follows. The set of players is $N=\{1,2,3\}$ and the set of resources is given by $E=\{e_1,e_2,e_3\}$. The strategy sets are $\Sigma_1=\{\{e_1\},\{e_2\}\}$, $\Sigma_2=\{\{e_1,e_3\},\{e_2\}\}$, and $\Sigma_3=\{e_3\}$, for players $1, 2,$ and $3$ respectively.  
    The latency functions are defined as $\ell_1(x)=\ell_2(x)=x$ and $\ell_3(x)=\epsilon$, for $\epsilon>0$ arbitrarily small.

    Let the predicted demand vector be $\bar{r}=\{1,2\}$, i.e., indicating that only players $1$ and $2$ will be active. For this predicted instance, the socially optimal allocation assigns agent $1$ to $e_1$ and agent $2$ to $e_2$, resulting in a social cost of $C_{opt}(\bar r)=\ell_1(1)+\ell_2(1)=2$.
    However, for the same predicted active set $\{1,2\}$, the strategy profile $\s$ in which player $1$ chooses $\{e_2\}$ and player $2$ chooses $\{e_1,e_3\}$ is a Nash equilibrium with a resulting social cost of $C(s)=\ell_1(1)+\ell_2(1)+\ell_3(1)=2+\epsilon$, so the PoA for the predicted instance is  equal to $1+\epsilon/{2}>1$. 

    Now consider any mechanism that is required to be $1$-consistent w.r.t. the PoA: the mechanism must modify the latencies such that any equilibrium has optimal social cost. Concretely, the mechanism must eliminate the bad equilibrium $\s$. Observe that the resources $e_1$ and $e_2$ are also used in the optimal solution for the predicted instance, so increasing their latencies for load 1 would also yield an increased social cost of the optimal allocation. Therefore, the only way to eliminate the bad equilibrium without harming the predicted optimum (i.e., maintaining $1$-consistency) is to enforce player 2 to deviate to $\{e_2\}$ by satisfying $\hat{\ell}_3(1)\ge c$, for some constant $c > 1$.
    
    Now consider an actual realization, that differs from the predicted demand, in which only player $3$ is active. In this case, the social cost of the unique outcome is the latency of $e_3$, under the mechanism's modified latencies that is $\hat{\ell}_3(1) > c$. Hence, robustness is equal to ${c}/{\epsilon}$, which is arbitrarily large after choosing $\epsilon$ arbitrarily small. 
\end{proof}

These impossibilities show that the PoA is a poorly suited consistency target for mechanisms that are required to preserve optimality when the prediction is correct. 

\subsubsection{PoA-Consistency lower bounds}
\label{sec:impossibilitiesPoAadditional}

We further show lower bounds relaxing the $1$-PoA-consistency requirement. We show that any $\alpha$-consistent mechanism must suffer the following PoA-Robustness.

\begin{theorem}
\label{thm:impossibilityPoAconsTradeoff}
    Any $\alpha$-consistent mechanism (with respect to PoA) must have robustness 

    $$\rho \ge 
    \begin{cases}
        (3-2\alpha)/(2\alpha-2), & 1 \le \alpha \le (1+\sqrt2)/2\\
        2/(2\alpha-1), & (1+\sqrt2)/2\le\alpha<(1+\sqrt17)/4
    \end{cases}
    $$
\end{theorem}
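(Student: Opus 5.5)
We would prove Theorem~\ref{thm:impossibilityPoAconsTradeoff} by quantitatively refining the atomic construction in the proof of Theorem~\ref{thm:impossibilityPoAcons}. The infinitesimal latency $\epsilon$ on the ``bait'' resource $e_3$ becomes a tunable constant $\delta>0$, chosen as a function of $\alpha$. Concretely, we keep the players $\{1,2,3\}$, the resources $\{e_1,e_2,e_3\}$ and the strategy sets $\Sigma_1=\{\{e_1\},\{e_2\}\}$, $\Sigma_2=\{\{e_1,e_3\},\{e_2\}\}$, $\Sigma_3=\{e_3\}$, with $\ell_1(x)=\ell_2(x)=x$ and $\ell_3(x)=\delta$. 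The prediction is $\bar r=\{1,2\}$, so $C_{opt}(\bar r)=2$.

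\textbf{Step 1: the bad equilibrium must be removed.} In the profile $\s$ where player 1 uses $e_2$ and player 2 uses $\{e_1,e_3\}$, the cost under $\ell$ is $2+\delta$. We pick $\delta$ just above $2\alpha-2$, so that $(2+\delta)/2>\alpha$. Since the mechanism only increases latencies, this profile's cost under $\hat\ell$ is still more than $\alpha C_{opt}(\bar r)$. Hence any $\alpha$-PoA-consistent mechanism must make $\s$ a non-equilibrium under $\hat\ell$.

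\textbf{Step 2: list the ways to break $\s$.} Each way has a price.
\begin{itemize}
\item[(a)] Make player 2 prefer $\{e_2\}$: $\hat\ell_1(1)+\hat\ell_3(1)>\hat\ell_2(2)$.
\item[(b)] Make player 1 prefer $\{e_1\}$: $\hat\ell_2(1)>\hat\ell_1(2)$.
\end{itemize}
The mechanism is free to raise $\hat\ell_1$ and $\hat\ell_2$ at loads $\ge 2$ without affecting profiles with single loads. However, it must keep every surviving equilibrium of the predicted instance, in particular the one it keeps near the optimum, at cost at most $2\alpha$.
\begin{itemize}
\item Raising $\hat\ell_1(1)$ or $\hat\ell_2(1)$ by amounts $u_1,u_2$ raises the cost of the optimal-type profile to $2+u_1+u_2$ (plus possibly $\hat\ell_3$). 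This must stay $\le 2\alpha$, which caps the total slack at $2\alpha-2$.
\item Raising $\hat\ell_1(2),\hat\ell_2(2)$ may create or destroy other equilibria. One checks that every remaining profile of the predicted instance is either an equilibrium with cost $>2\alpha$ or must also be destroyed.
\end{itemize}
We would enumerate the four single-load profiles and the two double-load profiles. From the equilibrium conditions we derive a lower bound of the form $\hat\ell_3(1)\ge c(\alpha)$. In the first regime the mechanism's best strategy is to spend the slack $u\le 2\alpha-2$ on $\hat\ell_1(1)$, giving roughly $\hat\ell_3(1)\ge 1-(2\alpha-2)=3-2\alpha$. In the second regime a different option becomes cheaper (route (b), or making the alternative profile the unique good one); this yields the bound involving $2/(2\alpha-1)$, possibly after rescaling $\delta$ differently.

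\textbf{Step 3: realize only player 3.} The unique outcome then costs $\hat\ell_3(1)$ against an optimum of $\delta$. The robustness is therefore at least $c(\alpha)/\delta$, and letting $\delta\downarrow 2\alpha-2$ gives $(3-2\alpha)/(2\alpha-2)$ in the first regime. For the second regime we would likely adjust the instance, for example by changing $\ell_3$ or the constant on the competing route, so that the binding option changes. We would then optimize $\delta$ against the best response of the mechanism; the breakpoint $(1+\sqrt2)/2$ is where the two options cost the same.

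The main obstacle is Step 2: arguing exhaustively that no cleverer modification (for instance, raising loads $\ge2$ together with small load-1 increases, or relying on $\alpha$-consistency permitting a mildly bad equilibrium) escapes the bound. I would handle this by writing the Nash conditions for every profile of the predicted instance as linear inequalities in the unknown values $\hat\ell_j(1),\hat\ell_j(2)$. Minimizing $\hat\ell_3(1)$ subject to these inequalities and the $2\alpha$ cost cap is a small case analysis / LP, and its two vertex solutions should produce the two branches of the stated bound.
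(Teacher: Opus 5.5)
Your Step~1 and the first branch match the paper's argument. With $\ell_3\equiv 2\alpha-2+\epsilon$, breaking the bad profile forces $\hat\ell_1(1)+\hat\ell_3(1)\ge 2$. Consistency then makes the optimal profile the only possible equilibrium of the predicted instance, so $\hat\ell_1(1)\le 2\alpha-1$. Realizing player~3 alone then gives $(3-2\alpha)/(2\alpha-2)$.

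The gap is the second branch. On your three-player instance, nothing penalizes the mechanism for spending the whole consistency slack on $\hat\ell_1(1)$. It can push $\hat\ell_1(1)$ to the cap $2\alpha-1$ at essentially no robustness cost, since a lone player~1 or~2 just uses $e_2$. So the only bound your argument yields is $(3-2\alpha)/(2\alpha-2)$ on the whole range. For $\alpha>(1+\sqrt2)/2$ this is strictly weaker than $2/(2\alpha-1)$: with $t=2\alpha-1$, $(2-t)/(t-1)<2/t$ holds iff $t^2>2$.

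The alternatives you propose for the second regime do not exist:
\begin{itemize}
\item Route~(b) needs $\hat\ell_2(1)>\hat\ell_1(2)\ge 2$. Since $\alpha<3/2$, this makes every profile of the predicted instance cost more than $2\alpha$, which contradicts consistency.
\item Every non-optimal profile of the predicted instance costs more than $2\alpha$: the bad one costs at least $2+\delta$ and the two doubled ones at least $4$. So no alternative profile can serve as the good equilibrium.
\end{itemize}
Hence there is no second option whose price crosses the first one. Rescaling $\delta$ or $\ell_3$ alone will not create one.

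The missing idea is a probe that makes raising $\hat\ell_1(1)$ costly. The paper adds a fourth player, outside the predicted set, whose only strategy is $\{e_1\}$. Realizing only that player gives $\rho\ge\hat\ell_1(1)$. Realizing only player~3 gives $\rho\ge(2-\hat\ell_1(1))/(2\alpha-2+\epsilon)$. You then minimize $\max\{\hat\ell_1(1),(2-\hat\ell_1(1))/(2\alpha-2)\}$ over $\hat\ell_1(1)\le 2\alpha-1$:
\begin{itemize}
\item The balancing value $\hat\ell_1(1)=2/(2\alpha-1)$ is feasible exactly when $(2\alpha-1)^2\ge 2$, i.e.\ $\alpha\ge(1+\sqrt2)/2$, and gives the second branch.
\item Otherwise the endpoint $\hat\ell_1(1)=2\alpha-1$ is optimal and recovers your first branch.
\end{itemize}
So the breakpoint is where the balancing point hits the consistency cap on $\hat\ell_1(1)$, not where two ways of breaking the bad profile cost the same.
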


\begin{proof}

    Consider the same construction as in Theorem~\ref{thm:impossibilityPoAcons}, but let the ground
    set of players be \(N=\{1,2,3,4\}\), the resources be \(M=\{e_1,e_2,e_3\}\), and the strategy sets
    be as before for players \(1,2,3\), with player \(4\) having the singleton strategy set
    \(\Sigma_4=\{\{e_1\}\}\). The predicted active set is \(\bar N=\{1,2\}\). The latency functions are
    \(\ell_1(x)=\ell_2(x)=x\) and \(\ell_3(x)=2\alpha-2+\epsilon\), where \(\alpha\ge1\) denotes the
    consistency guarantee and \(\epsilon>0\) is arbitrarily small and chosen such that
    \(2\alpha-2+\epsilon<1\).

    For the predicted instance, the socially optimal allocation assigns player \(1\) to \(e_1\) and
    player \(2\) to \(e_2\), and has social cost \(2\). As in Theorem~\ref{thm:impossibilityPoAcons},
    the profile \(s\) in which player \(1\) selects \(e_2\) and player \(2\) selects \(\{e_1,e_3\}\)
    is an equilibrium under the original latencies. Its social cost is $
        2+(2\alpha-2+\epsilon)=2\alpha+\epsilon,$
    and therefore its approximation ratio is \(\alpha+\epsilon/2>\alpha\). Thus, any
    \(\alpha\)-consistent mechanism with respect to PoA must prevent \(s\) from being an equilibrium
    under the modified latencies.

    In the profile \(s\), player \(1\)'s cost is \(\hat{\ell}_2(1)\), while deviating to \(e_1\) would
    create load \(2\) on \(e_1\), and hence would cost at least \(2\), since the mechanism only
    increases latencies. Similarly, player \(2\)'s cost is \(\hat{\ell}_1(1)+\hat{\ell}_3(1)\), while
    deviating to \(e_2\) would create load \(2\) on \(e_2\), and hence would cost at least \(2\).
    Therefore, in order to eliminate \(s\), the mechanism must satisfy
    \[
        \hat{\ell}_2(1)\ge2
        \qquad\text{or}\qquad
        \hat{\ell}_1(1)+\hat{\ell}_3(1)\ge2.
    \]

    We first argue that the case \(\hat{\ell}_2(1)\ge2\) is incompatible with
    \(\alpha\)-PoA-consistency in the range of the theorem. Indeed, since
    \(\alpha<(1+\sqrt{17})/4<3/2\), we have \(2\alpha<3\). If \(\hat{\ell}_2(1)\ge2\), then the
    predicted optimal profile has modified social cost at least $\hat{\ell}_1(1)+\hat{\ell}_2(1)\ge 1+2=3>2\alpha$,
    because \(\hat{\ell}_1(1)\ge\ell_1(1)=1\). The non-optimal profile \(s\) has modified social cost
    at least $\hat{\ell}_1(1)+\hat{\ell}_2(1)+\hat{\ell}_3(1)
        \ge 1+2+(2\alpha-2+\epsilon)>2\alpha$.
    The remaining two profiles place both active players on the same load-dependent resource, and
    therefore have modified social cost at least \(4>2\alpha\). Since every finite congestion game has
    a pure equilibrium, some profile must be an equilibrium, and all possible profiles have modified
    social cost greater than \(2\alpha\), contradicting \(\alpha\)-PoA-consistency. Hence, for an
    \(\alpha\)-consistent mechanism, we may assume $\hat{\ell}_1(1)+\hat{\ell}_3(1)\ge2.$ Under this condition, the non-optimal profile \(s\) has modified social cost at least $\hat{\ell}_1(1)+\hat{\ell}_2(1)+\hat{\ell}_3(1)\ge 1+2=3>2\alpha.$
    The two profiles in which both active players create load \(2\) on either \(e_1\) or \(e_2\) also
    have modified social cost at least \(4>2\alpha\). Thus, in any \(\alpha\)-consistent modified
    game, the only possible equilibrium for the predicted active set is the predicted optimal profile.
    Since this equilibrium must have modified social cost at most \(2\alpha\), we get $\hat{\ell}_1(1)+\hat{\ell}_2(1)\le2\alpha$.     Using \(\hat{\ell}_2(1)\ge1\), this implies
    $\hat{\ell}_1(1)\le2\alpha-1$. Combining this with \(\hat{\ell}_1(1)+\hat{\ell}_3(1)\ge2\), we obtain $
        \hat{\ell}_3(1)\ge2-\hat{\ell}_1(1)$.

     We now derive the robustness lower bound. Consider an actual realization in which only player
    \(4\) is active. Since player \(4\) is forced to use \(e_1\), the unique equilibrium has modified
    social cost \(\hat{\ell}_1(1)\), while the optimal social cost under the original latencies is \(1\).
    Hence, $
        \rho\ge \hat{\ell}_1(1)$.
    Next consider an actual realization in which only player \(3\) is active. Since player \(3\) is
    forced to use \(e_3\), the unique equilibrium has modified social cost \(\hat{\ell}_3(1)\), while
    the optimal social cost under the original latencies is \(2\alpha-2+\epsilon\). Hence, $
        \rho\ge \frac{\hat{\ell}_3(1)}{2\alpha-2+\epsilon}$.
    Therefore,
    \[
        \rho
        \ge
        \max\left\{
            \hat{\ell}_1(1),
            \frac{\hat{\ell}_3(1)}{2\alpha-2+\epsilon}
        \right\}
        \ge
        \max\left\{
            \hat{\ell}_1(1),
            \frac{2-\hat{\ell}_1(1)}{2\alpha-2+\epsilon}
        \right\},
    \]
    with \(\hat{\ell}_1(1)\le2\alpha-1\). Letting \(\epsilon\to0\), the right-hand side is minimized by balancing the two terms, namely when
    $\hat{\ell}_1(1)=\frac{2-\hat{\ell}_1(1)}{2\alpha-2}$,
    which gives $\hat{\ell}_1(1)=\frac{2}{2\alpha-1}.$ This balancing point is feasible precisely when $
        \frac{2}{2\alpha-1}\le2\alpha-1,$
    equivalently, when \(\alpha\ge(1+\sqrt2)/2\). Therefore, if
    \((1+\sqrt2)/2\le\alpha\), then $
        \rho\ge \frac{2}{2\alpha-1}$.
    If \(1\le\alpha\le(1+\sqrt2)/2\), then the balancing point is infeasible, and the best lower
    bound is obtained at the endpoint \(\hat{\ell}_1(1)=2\alpha-1\), yielding
    \[
        \rho
        \ge
        \frac{2-(2\alpha-1)}{2\alpha-2}
        =
        \frac{3-2\alpha}{2\alpha-2}.
    \]
    For \(\alpha=1\), this expression is interpreted as an unbounded lower bound by taking
    \(\epsilon\to0\).

    Finally, the second lower bound is relevant as long as it is at least the consistency parameter
    itself, i.e.,
    $\frac{2}{2\alpha-1}\ge\alpha.$
    Solving this inequality gives \(\alpha\le(1+\sqrt{17})/4\). Hence, for
    \((1+\sqrt2)/2\le\alpha<(1+\sqrt{17})/4\), we obtain
    \[
        \rho\ge \frac{2}{2\alpha-1}.
    \]
    This completes the proof.
\end{proof}

\subsection{Error Tolerance}
\label{sec:smoothness}

In this section, we examine whether it is possible to design learning-augmented coordination mechanisms  whose performance degrades gracefully as the prediction error increases in non-atomic games. Inspired by the notion of  \emph{brittleness} in learning-augmented algorithms \cite{nips/Elenter0DL24:britleness} (see also \cite{aistats/BenomarP25}), we formalize this notion of smooth performance degradation by requiring that the performance measure varies continuously with the prediction error, for sufficiently small errors. For our lower bounds, we use the standard monotonicity assumption, also used in \cite{algorithmica/ChristodoulouMP14}, that both the original and modified latencies are nondecreasing.

Our main impossibility (Theorem~\ref{thm:smoothNonSymmetric}) shows that no mechanism can be $1$-consistent while simultaneously achieving a graceful degradation to $4/3$, which coincides with the PoA without predictions. In other words, in order to improve the PoA of the original network, any mechanism achieving optimal consistency suffers a discontinuous performance that collapses to the original PoA value, even for arbitrarily small prediction error.  This phenomenon highlights fundamental barriers to designing $1$-consistent mechanisms in this setting.

We complement this negative result with a refined analysis that yields both positive and negative results: we show that this impossibility persists (albeit with weaker bounds) even in symmetric games (Theorem~\ref{thm:smoothImpossibility}) and (multi-commodity) network congestion games (Theorem~\ref{thm:SmoothnessMC}). Nonetheless, in the special case of single-commodity network congestion games, we exploit the network structure to obtain a positive result (Theorem~\ref{thm:SmoothnessSC}).

\paragraph{Congestion Games.} We consider the most general setting: non-atomic congestion games with no underlying network structure. We establish the following impossibility result.

\begin{theorem}\label{thm:smoothNonSymmetric}
There is no $1$-consistent learning-augmented coordination mechanism for non-atomic congestion games that guarantees the existence of an equilibrium with approximation ratio strictly better than $4/3$ as the prediction error tends to zero.
\end{theorem}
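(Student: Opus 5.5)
We prove the statement by a lower-bound construction built on Pigou's example. The value $4/3$ is the PoA of Pigou's network, so the plan is to embed a Pigou-type subgame and force the following: any $1$-consistent mechanism, at the predicted demand, must make the Pigou part behave like its optimum. A slightly perturbed demand then leaves the mechanism no room to help, and every equilibrium has ratio approaching $4/3$.

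\textbf{The instance.} We use the freedom of general congestion games, where strategies are arbitrary resource subsets.
\begin{itemize}
\item There are two resources, $e_1$ with $\ell_1(x)=1$ and $e_2$ with $\ell_2(x)=x$.
\item A commodity $A$ with predicted demand $\bar r_A=1$ may use $\{e_1\}$ or $\{e_2\}$. Its optimum sends $1/2$ to each resource, with cost $3/4$.
\item An auxiliary commodity $B$ with predicted demand $\bar r_B$ uses only $e_2$ (or only a private copy, tuned appropriately). It shifts the load on $e_2$, which lets a small demand change move the instance into a regime the thresholds cannot distinguish.
\end{itemize}

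\textbf{Step 1: what $1$-consistency forces.} Because modified latencies are nondecreasing and satisfy $\hat\ell\ge\ell$, they can only increase costs. For the predicted optimum to be an equilibrium of cost exactly $C_{opt}(\bar r)$, we need $\hat\ell_e=\ell_e$ at the optimal loads. Making the optimum the equilibrium then requires a jump of $\hat\ell_2$ to at least $1$ immediately above the optimal load $x^*_2$ (compare the threshold structure of \textsc{Threshold}). Monotonicity also gives $\hat\ell_2(x)\le\ell_2(x^*_2)$ for all $x\le x^*_2$. This pins down $\hat\ell$ completely below the thresholds.

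\textbf{Step 2: perturb the demand.} Take $r_A=1$ and $r_B=\bar r_B-\delta$, with $\delta\to 0$. Lowering $B$'s demand frees capacity on $e_2$, so $A$'s traffic can push $e_2$ up to its threshold load while still paying at most $\ell_2(x_2^*)<1$. The structure of $B$ should be chosen so that every equilibrium for the real demand routes essentially all the slack flow onto $e_2$ up to the threshold. In the limit this is the Pigou equilibrium restricted to $A$, and its ratio against the true optimum for $r$ tends to $4/3$. The true optimum for $r$ is computable in closed form.

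\textbf{Step 3: rule out every equilibrium.} The ratio is taken over the best equilibrium, so it is not enough to exhibit one bad equilibrium. I would argue that in every user equilibrium for $r$, $e_2$ carries load at least $x^*_2-O(\delta)$. Otherwise $\hat\ell_2$ sits strictly below $1$ with $A$-flow on $e_1$, and the liminf condition lets that flow deviate profitably. I would then compute the cost against $C_{opt}(r)$ and let $\delta\to0$.

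\textbf{Main obstacle.} The hard part is designing the auxiliary commodities so that a vanishing perturbation changes the optimal split by a constant amount. The $4/3$ gap must survive in the limit rather than shrink with $\delta$. Without this, continuity would give a ratio tending to $1$.

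My first attempt would make $e_2$ shared with a commodity whose removal lowers the optimal $e_2$-load discontinuously, for instance by using a constant-latency alternative for $B$ at the indifference point. If that fails, I would switch to many parallel copies of Pigou gadgets with tiny demands, so that the gap remains constant after normalizing each gadget's cost.
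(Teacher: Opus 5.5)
Your proposal has a genuine gap. It never specifies a working instance, and the perturbation direction in Step~2 cannot work for any choice of the auxiliary commodity $B$.

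\textbf{Why Step~2 is fatal.} With $r_A=\bar r_A$ and $r_B=\bar r_B-\delta$ you have $r\le\bar r$ componentwise, and in that regime a $1$-consistent mechanism can always be error tolerant. Let $\hat\ell_e$ agree with $\ell_e$ up to the optimal load $\bar f^*_e$ and jump to $\max\{W,\ell_e\}$ above it, for a large constant $W$. For every $r\le\bar r$, consider a minimizer of the Beckmann potential subject to $f_e\le\bar f^*_e$; such flows exist because scaling $\bar f^*$ gives a feasible one. This minimizer is a user equilibrium of the modified game: the KKT multipliers rule out deviations through unsaturated resources, and $W$ blocks deviations through saturated ones. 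Its cost is at most $\sum_e\bar f^*_e\ell_e(\bar f^*_e)=C_{opt}(\bar r)$, so the ratio tends to $1$. At $r=\bar r$ the same argument gives $1$-consistency.

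Your own Step~2 shows this concretely. The freed capacity is only $\delta$, so $A$ shifts only $\delta$ onto $e_2$. The flow therefore converges to the predicted optimum, not to the Pigou equilibrium with all of $A$ on $e_2$. Your ``main obstacle'' is thus fatal, and the fallbacks do not escape it. Removing a commodity is again a downward perturbation. Pigou copies on disjoint resources decouple into single-commodity networks, where Theorem~\ref{thm:SmoothnessSC} applies gadget by gadget.

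\textbf{The missing idea.} Perturb \emph{upward} a commodity that has an \emph{unavoidable} resource, meaning one contained in every one of its strategies. Its optimal load on that resource should equal the commodity's whole predicted demand. You also need $1$-consistency to force a jump to at least $1$ right above that load; this happens because, in the optimum, some other flow pays constant latency $1$ and would otherwise move onto the resource. Then any positive error pushes the resource past the jump, however the flow is routed.

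The paper does this with two symmetric commodities on four resources. Resource $e_2$ is unavoidable for commodity $1$ and $e_3$ for commodity $2$, with predicted demands $1/2$. At demands $(1+\epsilon)/2$ every unit pays at least $2$, while the optimum tends to $3/2$.

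Your two-resource setup can be repaired the same way:
\begin{itemize}
\item Give $B$ the single strategy $\{e_2\}$ with $\bar r_B=1/2$, and set $\bar r_A=1/2$.
\item The unique optimum puts $B$ on $e_2$ and $A$ on $e_1$, at cost $3/4$. So $1$-consistency and monotonicity force $\hat\ell_2(x)\ge 1$ for all $x>1/2$.
\item At $r_B=1/2+\delta$ the load on $e_2$ exceeds $1/2$, so every unit pays at least $1$. The ratio is at least $(1+\delta)/\bigl(1/2+(1/2+\delta)^2\bigr)\to 4/3$.
\end{itemize}
Taking $\bar r_A\to 0^+$ instead pushes this limit toward $2$.
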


\begin{proof}
    Consider a non-atomic congestion game defined as follows: there are two commodities $1$ and $2$, and a set of resources $E=\{e_1, e_2, e_3, e_4\}$. The set of latencies is given below: $\ell_1(x)=\ell_4(x)=1$ and $\ell_2(x)=\ell_3(x)=x$. The set of strategies is $\Sigma_1=\{\{e_1,e_2\},\{e_2,e_3\}\} = \{S_1,S_2\}$ for commodity $1$, and $\Sigma_2=\{\{e_2,e_3\},\{e_3,e_4\}\}=\{S_2,S_3\}$. For each commodity $i$ the predicted demand is  $\bar r_i=1/2$. The optimal solution assigns $\bar r_1$ to $S_1$ and $\bar r_2$ to $S_3$, with a social cost $C_{opt}=1/2(\ell_1(1/2)+\ell_2(1/2))+1/2(\ell_3(1/2)+\ell_4(1/2)) = 3/2$. There exists a Nash equilibrium in which both commodities route all their demands through $S_2$; its social cost is $C_N=1/2(\ell_2(1)+\ell_3(1))+1/2(\ell_2(1)+\ell_3(1))=2$. Hence, the PoA is equal to $4/3$.

    Consider any $1$-consistent coordination mechanism. Suppose that $\hat\ell_2(x) < 1$ for some $x>1/2$; then commodity $2$ has a profitable deviation from the optimal allocation: commodity $2$ improves its latency by routing $x-1/2$ via $S_2$ and the rest via $S_3$ compared to routing all its demand via $S_3$. Observe that this also holds due to symmetry for $\hat\ell_3(x)$ and commodity $1$. Combining the two facts, we obtain that $\hat{\ell}_2(x) \geq 1$ and $\hat{\ell}_3(x) \geq 1$ for all $x > 1/2$. 

    Now let the actual demands be $r_1=r_2=(1+\epsilon)/{2}$ for $\epsilon>0$ infinitesimally small. Note that $e_2$ belongs to all strategies of commodity $1$ and $e_3$ belongs to all strategies of commodity $2$; therefore, no matter the strategy profile, the load of $e_2$ and $e_3$ is more than $1/2$, resulting in their latency being at least $1$. Hence, the social cost of any equilibrium in the modified game is at least $2(1+\epsilon)$ which tends to $2$ as $\epsilon$ goes to $0$, while the optimal solution in the original network has social cost that tends to $3/2$ as $\epsilon$ goes to $0$. We conclude that any equilibrium for this almost accurate demand is at least $4/3$. 
\end{proof}

\begin{theorem}
\label{thm:smoothImpossibility}
    There is no $1$-consistent coordination mechanism in non-atomic symmetric congestion games that guarantees the existence of an equilibrium with approximation ratio strictly better than $7/6$ as the prediction error tends to zero.
\end{theorem}

\begin{proof}
    Consider the symmetric congestion game $(G,\ell)$ defined as follows. The predicted demand is $\bar{r}=1$ and the set of resources is given by $E = \{e_1, e_2, e_3, e_4\}$. The set of available strategies is $\Sigma = \{S_{1}, S_{2}, S_{3}\}$, where  $S_{1}=\{e_1,e_3\}$, $S_{2}=\{e_2, e_4\},$ and $S_{3}=\{e_1,e_2\}$, and the latency functions are defined as $\ell_1(x)=\ell_2(x)=x$ and $\ell_3(x)=\ell_4(x)=1$.

    Notice that this setup closely resembles the Braess network. The optimal solution assigns half of the demand to resources in $S_{1}$ and the other half to $S_{2}$, resulting in a total latency of $C_{opt}(\bar{r})=1/2(\ell_1(1/2)+\ell_3(1/2))+1/2(\ell_2(1/2)+\ell_4(1/2))=3/2$. However, in the unique equilibrium $f$, all players choose strategy $S_{3}$, yielding a total latency of $C_N(f)=2$. 
    
    Regarding consistency, let $\bar{f^*}$ denote the optimal solution under the predicted input rate $\bar{r}$, and let $\hat{\ell}_e$ denote the modified latency function for each resource $e \in E$.  Similarly to the characterization of $1$-consistent mechanisms for parallel link networks given by \cite{CCSV25:predictions}, we establish necessary conditions on the modified latencies for $x>\bar{f_e^*}$.

    \begin{claim}
        For any $1$-consistent mechanism with modified latencies $\hat{\ell}$, the following hold for any $e\in E$:

        \begin{itemize}
            \item $\hat{\ell}_e(\bar{f}_e^*)=\ell_e(\bar{f}_e^*)$ and 
            \item $\liminf_{\epsilon \rightarrow0^+} \hat{\ell}_e(\bar{f}_e^*+\epsilon) \ge 1$.
        \end{itemize}
    \end{claim}
    \begin{proof}
        Let $r = \bar r = 1.$ We first establish the first condition. Observe that the optimal solution of the congestion game is unique. Thus, in order to guarantee $1$-consistency, $\bar{f}^*$ should be an equilibrium with respect to $\hat{\ell}$, and the latencies must remain untouched for any used resource in the optimal solution. Since all resources are used in the optimal solution, it should be $\hat{\ell}_e(\bar{f}_e^*)=\ell_e(\bar{f}_e^*)$, for all $e\in E$.

        Regarding the second condition, it should be $\hat{\ell}_3(x) \ge \ell_3(x)= 1$ and $\hat{\ell}_4(x) \ge \ell_4(x)= 1$ for all $x>0$. Regarding the rest of the resources, assume for the sake of contradiction that $\liminf_{\epsilon \rightarrow0^+} \hat{\ell}_1(\bar{f}_1^*+\epsilon) < 1$; the proof for $e_2$ is similar. Then, the optimal solution (that splits half of the rate through $S_{1}$ and half through $S_{2}$) is not an equilibrium. This is because the cost for strategy $S_2$ equals $3/2$, whereas deviating to strategy $S_3$ results in a cost strictly lower than $3/2.$ 
    \end{proof}

    Let now $r=1+\epsilon$, where $\epsilon>0$. Then, in any equilibrium $f$,
either $f_1>1/2$ or $f_2>1/2$ (or both).
If w.l.o.g. $f_1>1/2$ and $f_2\le1/2$, then
$f_{S_2}+f_{S_3}=f_2\le1/2$, leaving
$f_{S_1}=r-f_2>1/2$ units of flow to experience a cost of
at least $2$. The remaining $f_2$ units experience a cost
of at least $1+f_2$.
If both $f_1>1/2$ and $f_2>1/2$, all $r$ units experience
a cost of at least $2$.
Overall, let $x=f_{S_1}$ in the first case and $x=r$ in the
second. Then $x>1/2$ units experience a cost of at least $2$,
and the remaining $r-x$ units experience a cost of at least
$1+r-x$. Therefore,
\[
\hat C(f)
\ge 2x+(r-x)(1+r-x)
=2r-\frac14+\left(r-x-\frac12\right)^2
\ge 2r-\frac14.
\]

In the original game, splitting the demand equally between
$S_1$ and $S_2$ gives social cost $r+r^2/2$.
Hence $C_{opt}(r)\le r+r^2/2$.
Taking the lower limit as $\epsilon$ goes to $0$, for any equilibrium $f$ at demand $1+\epsilon$, yields
\[
\liminf_{\epsilon\downarrow0}
\frac{\hat C(f)}{C_{opt}}
\ge
\lim_{\epsilon\downarrow0}
\frac{2(1+\epsilon)-1/4}
     {(1+\epsilon)+(1+\epsilon)^2/2}
=\frac76.
\]
\end{proof}

\paragraph{Network congestion games: Multi-commodity.} Having established a strong negative result in the general setting, we now examine whether additional structure can facilitate perfect consistency along with error tolerance guarantees. The natural restriction is to consider multi-commodity network congestion games, where players' strategies correspond to paths in a graph. Despite the added structure, we show that the impossibility persists even in this more constrained setting. Specifically, we establish that $1$-consistency cannot be combined
with an approximation ratio that approaches $1$ as the prediction
error tends to zero.

\begin{theorem}
\label{thm:SmoothnessMC}
    There is no $1$-consistent coordination mechanism in multi-commodity network non-atomic congestion games that guarantees the existence of an equilibrium with approximation ratio strictly better than $7/6$ as the prediction error tends to zero.
\end{theorem}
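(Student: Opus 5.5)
The plan is to embed the symmetric Braess-like instance of Theorem~\ref{thm:smoothImpossibility} into a network, with two commodities. The symmetric construction used strategies $S_1=\{e_1,e_3\}$, $S_2=\{e_2,e_4\}$, $S_3=\{e_1,e_2\}$. The issue is that $S_3=\{e_1,e_2\}$ uses both variable edges, which a single-commodity network realizes only as the Braess zig-zag path. That path also creates the extra paths of a Braess network, and those are exactly what the positive single-commodity result (Theorem~\ref{thm:SmoothnessSC}) exploits. So I would instead mimic the instance of Theorem~\ref{thm:smoothNonSymmetric} in a network.

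The construction uses two commodities $(s_1,t_1)$ and $(s_2,t_2)$, each with predicted demand $\bar r_i=1/2$.
\begin{itemize}
\item Commodity 1 has exactly two simple paths: one through a constant-latency edge and a variable edge $e_2$ with $\ell(x)=x$, and another through $e_2$ followed by a second variable edge $e_3$.
\item Commodity 2 symmetrically has one path through $e_3$ and a constant edge, and another through $e_2,e_3$.
\end{itemize}
To keep the edge sets shared exactly as in the abstract game, I would use a small DAG in which $e_2=(u,v)$ and $e_3=(w,z)$, with connector edges of zero latency. I would need to check that no unintended simple paths appear; this is the main obstacle. The reason is that an extra path avoiding $e_2$ for commodity 1 would break the step where every strategy of commodity 1 loads $e_2$. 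If the $4/3$ bound cannot be preserved this way, I would fall back to the weaker target $7/6$ stated in the theorem. That weaker bound suggests the intended construction is the two-commodity analogue of the symmetric one: commodity $i$ routes either on its own ``outer'' path $e_i$ then $e_{i+2}$, or on the shared ``middle'' path $e_1$ then $e_2$. Both commodities share $s,t$ endpoints only in the middle, which avoids creating the Braess zig-zag.

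Concretely, for the $7/6$ version, I would take the symmetric instance and split the unit demand into two commodities. Commodity 1 goes $s_1\to t_1$ with paths $P_1=e_1e_3$ and $P_3=e_1e_2$. Commodity 2 goes $s_2\to t_2$ with paths $P_2=e_2e_4$ and $P_3'=e_1e_2$, realized by routing through shared edges $e_1=(a,b)$ and $e_2=(b,c)$ plus zero-latency connectors. I would verify that each commodity has only the listed simple paths. The predicted optimum routes $1/2$ on $P_1$ and $1/2$ on $P_2$. The claim from the proof of Theorem~\ref{thm:smoothImpossibility} then carries over verbatim: $1$-consistency forces $\hat\ell_e(1/2)=\ell_e(1/2)$ and $\liminf_{\epsilon\to0^+}\hat\ell_{1}(1/2+\epsilon)\ge1$, and likewise for $e_2$. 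Otherwise commodity 2 (resp.\ 1) deviates to the middle path at cost below $3/2$.

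Finally, set $r_1=r_2=(1+\epsilon)/2$. Commodity 1 always uses $e_1$ and commodity 2 always uses $e_2$, so each variable edge carries at least $(1+\epsilon)/2$. The same case analysis as before gives $\hat C(f)\ge 2r-1/4$ for every equilibrium, while $C_{opt}(r)\le r+r^2/2$. The ratio therefore tends to $7/6$. Since the realization check is done on a concrete graph, most of the work is in drawing that graph and listing its simple paths.
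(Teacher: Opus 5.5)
There is a genuine gap at the step where you say the claim from the proof of Theorem~\ref{thm:smoothImpossibility} ``carries over verbatim.'' In the abstract game, the only resource on commodity 2's deviation $P_2\to P_3'$ that is not already on $P_2$ is $e_1$. Blocking that deviation therefore forces $\liminf_{\epsilon\downarrow0}\hat\ell_1(1/2+\epsilon)\ge1$.

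In a network this forcing fails. The two middle paths have different endpoints, so your realization needs connector edges. Some of these connectors lie on no outer path and carry zero flow in the predicted optimum. Take, for instance, your realization with:
\begin{itemize}
\item $e_1=(a,b)$, $e_2=(b,c)$, $e_3=(b,t_1)$, $e_4=(s_2,b)$;
\item $P_1=(s_1,a)\,e_1e_3$ and $P_3=(s_1,a)\,e_1e_2\,(c,t_1)$;
\item $P_2=e_4e_2\,(c,t_2)$ and $P_3'=(s_2,a)\,e_1e_2\,(c,t_2)$.
\end{itemize}
This graph has exactly the intended paths.

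A mechanism can set $\hat\ell(x)=1$ for $x>0$ on the idle connectors $(s_2,a)$ and $(c,t_1)$ and leave every other latency unchanged. At $\bar r$ this blocks both middle deviations, since each now costs $2>3/2$. The predicted optimum is thus an equilibrium with unchanged cost, so the mechanism is $1$-consistent. Yet at $r_1=r_2=(1+\epsilon)/2$ the outer-path profile is still an equilibrium. Its cost is $r+r^2/2$ with $r=1+\epsilon$, and this equals $C_{opt}(r)$ for $r\ge1$. So your instance gives ratio $1$, not $7/6$.

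Two further points follow from this. Your ``fallback'' instance is, after relabeling, the same game as the $4/3$ instance of Theorem~\ref{thm:smoothNonSymmetric}: commodity $i$ always loads one variable edge, and the middle strategy uses both. Lowering the target therefore changes nothing. Also, the real obstacle is not unintended simple paths but the mechanism's freedom on auxiliary edges that are idle in the optimum. This is exactly the freedom the single-commodity positive result (Theorem~\ref{thm:SmoothnessSC}) exploits with thresholds $u_e=0$ on unused edges.

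The missing idea is to make every edge of the deviating path carry flow in the predicted optimum, and to ensure it is pushed above that level whenever demand grows. The paper does this on the Wheatstone network. Commodity 1 routes $s_1\to t_1$, and a second commodity whose unique path is the zero-latency bridge $e_5$ is added, with $\bar r_1=\bar r_2=1$. Blocking the zig-zag deviation now forces $\liminf_{\epsilon\downarrow0}[\hat\ell_1(1/2+\epsilon)+\hat\ell_5(1+\epsilon)]\ge1$, and the same with $\hat\ell_4$. At $r_1=r_2=1+\epsilon$, the identity $f_5=f_1+f_4>1$ lets the bridge's penalty, which commodity 2 also pays, be charged to the flow on $e_1$ and $e_4$. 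This yields $\hat C(f)\ge 2r_1-1/4$ against $C_{opt}\le r_1+r_1^2/2$, and hence the ratio $7/6$.
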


\begin{proof}
Consider a multi-commodity game defined over the Wheatstone network (also known as the Braess paradox network). The network consists of five edges $E=\{e_1, e_2, e_3, e_4, e_5\}$ with latencies $\ell_1(x)=\ell_4(x)=x$, $\ell_2(x)=\ell_3(x)=1$, and 
$\ell_5(x)=0$, and two commodities, $(s_1, t_1)$ and $(s_2, t_2)$ (as illustrated in Figure~\ref{fig:NoSmoothNetMulti}). Commodity $2$ has a unique strategy: all flow from $s_2$ to $t_2$ is routed along edge $e_5$.

\begin{figure}[h]
\centering
\begin{tikzpicture}[xscale=1,yscale=1]

\node[draw,circle,thick] at (0,0) (s1)  {{$\mathbf{s_1}$}} ;
\node[draw,circle,thick] at (6,0) (t1)  {{$\mathbf{t_1}$}} ;
\node[draw,circle,thick] at (3,2) (s2)  {{$\mathbf{s_2}$}} ;
\node[draw,circle,thick] at (3,-2) (t2)  {{$\mathbf{t_2}$}} ;

\draw[ultra thick,->] (s1) to (s2) ;
\node[rotate=33] at (1.4,1.2) {\footnotesize $\ell_1(x)=x$};
\draw[ultra thick,->] (s2) to (t1) ;
\node[rotate=-33] at (4.6,1.2) {\footnotesize $\ell_2(x)=1$};
\draw[ultra thick,->] (s1) to (t2) ;
\node[rotate=-33] at (1.4,-1.2) {\footnotesize $\ell_3(x)=1$};
\draw[ultra thick,->] (t2) to (t1) ;
\node[rotate=33] at (4.6,-1.2) {\footnotesize $\ell_4(x)=x$};
\draw[ultra thick,->] (s2) to (t2) ;
\node at (3.75,0) {\footnotesize $\ell_5(x)=0$};

\end{tikzpicture}
\caption{Network multicommodity game.}
\label{fig:NoSmoothNetMulti}
\end{figure}

Let $\bar{r}_1=\bar{r}_2=1$. In the optimal flow $\bar {r}_1$ uses only paths $(s_1,s_2,t_1)$ and $(s_1,t_2,t_1)$, whereas in the equilibrium flow it uses only the path $(s_1,s_2,t_2,t_1)$; $\bar{r}_2$ uses only $e_5$ in both optimum and equilibrium flow. Any 1-consistent mechanism should preserve $\hat{\ell}_5(1)=0$. 

Then, $1$-consistency requires that the predicted optimal flow be an equilibrium under the modified latencies,
with $\hat{\ell}_1(1/2)=\hat{\ell}_4(1/2)=1/2$ and
$\hat{\ell}_2(1/2)=\hat{\ell}_3(1/2)=1$.
Moreover,
$\liminf_{\epsilon\downarrow0}
[\hat{\ell}_1(1/2+\epsilon)+\hat{\ell}_5(1+\epsilon)]\ge1$.
Otherwise, commodity $1$ could profitably deviate from
$(s_1,t_2,t_1)$ to $(s_1,s_2,t_2,t_1)$: edge $e_4$ is common
to both paths, while replacing $e_3$ by $e_1$ and $e_5$
would incur a cost strictly below $1$.
By symmetry, the same inequality holds with $\hat{\ell}_4$
in place of $\hat{\ell}_1$.

Let now $r_1=r_2=1+\epsilon$, where $\epsilon>0$, and let $f$
be any equilibrium. The flow identities give
$f_1+f_4=f_5\ge r_2>1$ and
$f_2+f_3=2r_1-f_1-f_4$.
Thus, either $f_1>1/2$ or $f_4>1/2$.
Assume w.l.o.g. that $f_1>1/2$.
By monotonicity and the preceding inequality,
$\hat{\ell}_1(f_1)+\hat{\ell}_5(f_5)\ge1$.
Using also $\hat{\ell}_4(f_4)\ge f_4$,
$\hat{\ell}_5(f_5)\ge0$, and
$\hat{\ell}_2(f_2),\hat{\ell}_3(f_3)\ge1$, we obtain
\[
\begin{aligned}
\hat C(f)
&\ge
f_1[\hat{\ell}_1(f_1)+\hat{\ell}_5(f_5)]
+f_4[\hat{\ell}_4(f_4)+\hat{\ell}_5(f_5)]
+f_2+f_3\\
&\ge f_1+f_4^2+2r_1-f_1-f_4\\
&=2r_1-\frac14+\left(f_4-\frac12\right)^2
\ge 2r_1-\frac14.
\end{aligned}
\]

In the original game, splitting the demand of commodity $1$
equally between the two outer paths gives social cost
$r_1+r_1^2/2$, while commodity $2$ incurs zero cost.
Hence, $C_{opt}(r_1,r_2)\le r_1+r_1^2/2$.
Taking the lower limit as $\epsilon$ goes to zero,
for any equilibrium at demands $r_1=r_2=1+\epsilon$,
yields
\[
\liminf_{\epsilon\downarrow0}
\frac{\hat C(f)}{C_{opt}}
\ge
\lim_{\epsilon\downarrow0}
\frac{2(1+\epsilon)-1/4}
     {(1+\epsilon)+(1+\epsilon)^2/2}
=\frac76.
\]
\end{proof}

\paragraph{Network congestion games: Single-commodity.}
We now turn to single-commodity network congestion games and show that,
unlike the multi-commodity case, the special topology of these games enables a
positive result. In particular, given a prediction $\bar r$ and an error tolerance
parameter $\bar\eta$, 
we construct a $1$-consistent mechanism that admits
an equilibrium whose approximation ratio varies continuously with the input rate in
$[\max\{0,\bar r-\bar\eta\},\bar r+\bar\eta]$.
We assume $\bar r>0$ and $C_{opt}(\bar r)>0$.
For polynomial latencies with nonnegative coefficients, the latter
assumption implies $C_{opt}(r)>0$ for every $r>0$.

\begin{definition}[\textsc{ErrorTolerant($\bar\eta,\bar r$)}]
Let $0\le\bar\eta<\infty$, and let $\bar f^*$ be an optimal flow at
the predicted rate $\bar r$ whose positive edge support is acyclic\footnote{Such a flow is guaranteed to exist~\cite{phd/Roughgarden02}}.
Choose a maximum-cost $s$--$t$ path in its positive edge support, i.e.,
\(
p^*\in\arg\max_{p\in\mathcal P:\ \bar f_e^*>0\ \text{for all }e\in p}
\ell_p(\bar f^*).
\)
Let $u_e=\bar f_e^*+\bar\eta\boldsymbol{1}_{\{e\in p^*\}}$ and choose
$W\ge\sum_{e\in p^*}\ell_e(u_e)$.
Define
\[
\hat\ell_e(x)=
\begin{cases}
\ell_e(x),&0\le x\le u_e,\\
\max\{W,\ell_e(x)\},&x>u_e.
\end{cases}
\]
\end{definition}

In the next theorem we show that the \textsc{ErrorTolerant} mechanism achieves the combination of 1-consistency and error tolerance.

\begin{theorem}
\label{thm:SmoothnessSC}
In single-commodity network non-atomic congestion games, the
\textsc{ErrorTolerant($\bar\eta,\bar r$)} mechanism is $1$-consistent
and error tolerant for any input rate
$r\in[\max\{0,\bar r-\bar\eta\},\bar r+\bar\eta]$.
\end{theorem}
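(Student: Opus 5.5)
The plan is to prove the two claims separately: $1$-consistency at $r=\bar r$, and error tolerance on the interval $I=[\max\{0,\bar r-\bar\eta\},\bar r+\bar\eta]$. For error tolerance I will exhibit, for every $r\in I$, an explicit user equilibrium $f^{(r)}$ of the modified game that never exceeds the thresholds $u_e$. Such a flow satisfies $\hat C(f^{(r)})=C(f^{(r)})$, and $f^{(r)}\to\bar f^*$ as $r\to\bar r$. Since $C_{opt}(r)$ is continuous in $r$ for polynomial latencies (optimal flows vary continuously, and $C_{opt}(r)>0$ for $r>0$), this makes the ratio $\hat C(f^{(r)})/C_{opt}(r)$ continuous on $I$ and equal to $1$ at $\bar r$.

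\textbf{Consistency.} I will show that $\bar f^*$ is a user equilibrium for $\hat\ell$ at demand $\bar r$. Because $\bar f^*_e\le u_e$ for every edge, it triggers no modification, so $\hat C(\bar f^*)=C_{opt}(\bar r)$. Take a used path $p_1$ and any alternative $p_2$, and shift an infinitesimal amount of flow from $p_1$ to $p_2$. Every edge of $p_2\setminus p_1$ then gets load $\bar f^*_e+\epsilon$, so there are two cases.
\begin{itemize}
\item If some such edge has $\bar f^*_e+\epsilon>u_e$ (any edge off $p^*$, including edges outside the support), the liminf of $\ell_{p_2}$ is at least $W$. Since $W\ge \sum_{e\in p^*}\ell_e(u_e)\ge\ell_{p^*}(\bar f^*)\ge\ell_{p_1}(\bar f^*)$ by the maximality of $p^*$, the deviation is not profitable.
\item Otherwise $p_2\setminus p_1\subseteq p^*$ and all edges of $p_2$ lie in the support, so the original latencies apply to first order. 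Here I will use optimality of $\bar f^*$: marginal costs $\ell_e+x\ell_e'$ are equal on used paths. This alone does not give $\ell_{p_1}\le\ell_{p_2}$, and I expect this case to be the main obstacle.
\end{itemize}
My first attempt at the obstacle is to exploit single-commodity structure. Each used path is $s$--$t$ in the acyclic support, so any $p_2$ with $p_2\setminus p_1\subseteq p^*$ must share a suffix and prefix pattern with $p^*$ and $p_1$, and I will try to compare segment latencies by a path-swapping argument against the maximality of $p^*$. If this fails, I will fall back on a cleaner route for the whole theorem: characterize equilibria through the potential. I would show that a minimizer of the Beckmann potential $\Phi^\ell$ over the box $\{f: f_e\le u_e\}$ is a user equilibrium of $\hat\ell$, since every direction leaving the box crosses an edge of cost $\ge W$. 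At $r=\bar r$ I would then compare it with $\bar f^*$, arguing that the set of equilibria inside the box has optimal cost, possibly after choosing $W$ larger.

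\textbf{Error tolerance.} For $r\in[\bar r,\bar r+\bar\eta]$ I take $f^{(r)}=\bar f^*+(r-\bar r)\boldsymbol 1_{p^*}$. This stays within $u$ because only edges of $p^*$ gain load and the slack there is $\bar\eta$. For $r<\bar r$ I take the scaled flow $f^{(r)}=(r/\bar r)\bar f^*$, which is again within the thresholds. The equilibrium check reuses the two-case argument above. In the first case, the bound $W\ge\sum_{e\in p^*}\ell_e(u_e)$ dominates every in-box used path latency, because $u$ is exactly the load cap along $p^*$ and $p^*$ is the costliest support path. In the second case, I will use the box-constrained potential minimizer in place of the explicit $f^{(r)}$ if the explicit flow fails the local condition. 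In either form the resulting equilibrium lies in the box, so its cost is computed with the original latencies and, by continuity of the constrained minimizer in $r$, tends to $C_{opt}(\bar r)$. This yields the claimed continuity on $I$.
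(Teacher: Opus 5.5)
Your overall plan is the right one, and your first case is handled correctly. That plan consists of the explicit flow $\bar f^*+(r-\bar r)\boldsymbol{1}_{p^*}$ for $r\ge\bar r$, a capacity-constrained Beckmann minimizer for $r<\bar r$, and the $W$-jump argument for deviations through a saturated edge. The gap is the second case, $p_2\setminus p_1\subseteq p^*$, which you flag as ``the main obstacle'' but never close. Everything depends on it: it is what makes $\bar f^*$ (and $\bar f^*+(r-\bar r)\boldsymbol{1}_{p^*}$) an equilibrium, so it is the heart of the consistency proof. Your fallback does not bypass it. No edge is triggered in this case, so enlarging $W$ is irrelevant. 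To show that the box-constrained minimizer at $r=\bar r$ has cost $C_{opt}(\bar r)$, you would have to show that $\bar f^*$ is such a minimizer. By convexity, that means checking the capacitated equilibrium conditions at $\bar f^*$, namely $\ell_{p_1}(\bar f^*)\le\ell_{p_2}(\bar f^*)$ whenever $p_1$ is used and $p_2\setminus p_1$ consists of unsaturated edges, i.e.\ edges of $p^*$. That is exactly your open case. Without it, your continuity argument as $r\to\bar r$ has no anchor at $C_{opt}(\bar r)$.

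The missing idea is an exchange argument that uses the acyclicity of the support and the maximality of $p^*$; the first-order optimality conditions you invoke are not needed. If $p_2\setminus p_1\subseteq p^*$, the edge multiset $p_1+p^*-p_2$ is nonnegative and forms an integral $s$--$t$ flow of value one inside the acyclic positive support of $\bar f^*$. It is therefore a single support path $p'$. Hence $\ell_{p_1}(f)+\ell_{p^*}(f)=\ell_{p_2}(f)+\ell_{p'}(f)$, and $\ell_{p'}(f)\le\ell_{p^*}(f)$ yields $\ell_{p_1}(f)\le\ell_{p_2}(f)$. Maximality of $p^*$ persists under $f=\bar f^*+(r-\bar r)\boldsymbol{1}_{p^*}$ because only edges of $p^*$ gain load.

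Separately, discard the scaled flow $(r/\bar r)\bar f^*$ for $r<\bar r$. It leaves slack on every support edge off $p^*$, so your first case no longer applies, and it is generally not an equilibrium. Take Pigou's network with $\ell_1\equiv 1$, $\ell_2(x)=x$, $\bar r=1$ and $\bar\eta\ge 0.1$, where $p^*$ is the constant link. At $r=0.9$ its users pay $1$, while the other link costs $0.45$ and is still below its threshold $1/2$. The capacity-constrained minimizer you already mention is the right object there, but its link to $C_{opt}(\bar r)$ again rests on the exchange argument above.
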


\begin{proof}
    Given a predicted input rate $\bar r$, let $\bar f^*$
    be an acyclic optimal flow under $\bar r$, and let
    $p^*$ be defined as above. Consider any input rate
    $r$, with $\bar r\leq r\leq\bar r+\bar\eta$,
    and let $f$ be the flow that routes $\bar r$ as
    $\bar f^*$ does, and the extra $r-\bar r$ via $p^*$,
    i.e., $f_e=\bar f_e^*$ for any $e\notin p^*$,
    and $f_e=\bar f_e^*+(r-\bar r)$ for any
    $e\in p^*$. Since $f_e\leq u_e$ for every
    $e\in E$, the modified latencies coincide with
    the original latencies at $f$.
    We first show that $f$ is an equilibrium for
    every $r\in[\bar r,\bar r+\bar\eta]$.
    For this, let $p_1$ be any path used by $f$,
    i.e., $f_{p_1}>0$, and let $p_2$ be any
    other path in the network. We establish
    the following claim:

    \begin{claim}
    \label{cl:smoothSingleNet}
    For every path $p_1$ used by $f$ and every
    alternative path $p_2$, it holds that
    \[
    \sum_{e\in p_1\setminus p_2}\hat\ell_e(f_e)
    \le
    \liminf_{\epsilon\downarrow0}
    \sum_{e\in p_2\setminus p_1}
    \hat\ell_e(f_e+\epsilon).
    \]
    \end{claim}
    
    \begin{proof}
        First, observe that $p^*$ remains a maximum-cost
        path in the positive edge support of $\bar f^*$
        under $f$. Indeed, the additional flow is routed
        exclusively through $p^*$, the original latencies
        are nondecreasing, and
        $\hat\ell_e(f_e)=\ell_e(f_e)$ for every $e\in E$.
        We distinguish two cases based on whether there
        exists an edge in $p_2\setminus p_1$ that does
        not belong to $p^*$.
        
        If $(p_2\setminus p_1)\setminus p^*
        \neq\emptyset$, then there exists an edge
        $e'\in p_2\setminus p_1$ such that
        $e'\notin p^*$. Since
        $f_{e'}=\bar f_{e'}^*=u_{e'}$,
        by the definition of the mechanism, for
        any $\epsilon>0$,
        \(
        \sum_{e\in p_2\setminus p_1}
        \hat\ell_e(f_e+\epsilon)
        \geq\hat\ell_{e'}(f_{e'}+\epsilon)
        \geq W.
        \)
        Consequently,
        \(
        \liminf_{\epsilon\downarrow0}
        \sum_{e\in p_2\setminus p_1}
        \hat\ell_e(f_e+\epsilon)\geq W.
        \)
        On the other hand, since $p_1$ belongs to
        the positive edge support of $\bar f^*$,
        \[
        \begin{aligned}
        \sum_{e\in p_1\setminus p_2}\hat\ell_e(f_e)
        &\leq\hat\ell_{p_1}(f)\\
        &\leq\hat\ell_{p^*}(f)\\
        &\leq\hat\ell_{p^*}(u)\\
        &=\sum_{e\in p^*}\ell_e(\bar f_e^*+\bar\eta)
        \leq W.
        \end{aligned}
        \]
        This establishes the claim in the first case.
        
        If $(p_2\setminus p_1)\setminus p^*
        =\emptyset$, then every edge in
        $p_2\setminus p_1$ belongs to $p^*$.
        We consider the alternative path
        \(
        p'=\bigl(p^*\setminus(p_2\setminus p_1)\bigr)
        \cup(p_1\setminus p_2).
        \)
        This is a valid $s$--$t$ path in the positive
        edge support of $\bar f^*$. 
        Since $p^*$ remains a maximum-cost path in
        this support under $f$, we have
        \(
        \hat\ell_{p^*}(f)\geq\hat\ell_{p'}(f).
        \)
        Furthermore, by the construction of $p'$,
        \(
        \hat\ell_{p_1}(f)+\hat\ell_{p^*}(f)
        =\hat\ell_{p_2}(f)+\hat\ell_{p'}(f).
        \)
        Combining these inequalities and cancelling
        the common edges of $p_1$ and $p_2$ yields
        \(
        \hat\ell_{p_2\setminus p_1}(f)
        \geq\hat\ell_{p_1\setminus p_2}(f).
        \)
        Consequently,
        \[
        \begin{aligned}
        \liminf_{\epsilon\downarrow0}
        \sum_{e\in p_2\setminus p_1}
        \hat\ell_e(f_e+\epsilon)
        &\geq
        \sum_{e\in p_2\setminus p_1}
        \hat\ell_e(f_e) \geq
        \sum_{e\in p_1\setminus p_2}
        \hat\ell_e(f_e),
        \end{aligned}
        \]
        where the first inequality follows from
        the nondecreasing modified latencies.
    \end{proof}

    By Claim~\ref{cl:smoothSingleNet}, we next show that the condition for $f$ bein an equilibrium is satisfied:
    \[
    \begin{aligned}
    \hat\ell_{p_1}(f)
    &=\sum_{e\in p_1\cap p_2}\hat\ell_e(f_e)
    +\sum_{e\in p_1\setminus p_2}\hat\ell_e(f_e)\\
    &\leq
    \sum_{e\in p_1\cap p_2}\hat\ell_e(f_e)
    +\liminf_{\epsilon\downarrow0}
    \sum_{e\in p_2\setminus p_1}
    \hat\ell_e(f_e+\epsilon)\\
    &=
    \liminf_{\epsilon\downarrow0}
    \hat\ell_{p_2}
    (f+\epsilon\boldsymbol{1}_{p_2}
    -\epsilon\boldsymbol{1}_{p_1}).
    \end{aligned}
    \]
    Thus, $f$ is a user equilibrium of the modified
    game for every $r\in[\bar r,\bar r+\bar\eta]$.
    Moreover, since $f_e\leq u_e$ for every edge,
    the modified latencies coincide with the
    original latencies at $f$. Consequently,
    $\hat C(f)=C(f)$ varies continuously with $r$,
    as the additional flow is routed through $p^*$. Since the mechanism ensures that the cost on this path does not increase abruptly for small changes in the flow, the error tolerance condition is satisfied.
    In particular, for $r=\bar r$, we recover
    $\bar f^*$ with
    $\hat C(\bar f^*)=C_{\mathrm{opt}}(\bar r)$.

    For $r\leq\bar r$, consider a Beckmann–McGuire–Winsten (BMW) equilibrium $f$ in the network with edge
    capacities $u_e$, i.e., a minimizer of the
    original Beckmann potential subject to
    $f_e\leq u_e$ for every $e\in E$~\cite{mor/CorreaSM04}.
    Existence and the corresponding variational inequalities follow from
    \cite{mor/CorreaSM04}. It is easy to see that this also a user equilibrium under the modified latencies. Any deviation that exceeds a capacity incurs a modified latency of at least $W$ on some edge, and therefore, cannot be profitable. 
    Finally, all BMW equilibria at a fixed rate
    have the same social cost, as follows from
    \cite[Lemma 3.3]{mor/CorreaSM04} and monotonicity. Since the capacitated feasible sets
    vary continuously with the input rate,
    compactness and continuity of the Beckmann
    potential imply that this common social cost
    is continuous in $r$.
    At $r=\bar r$, the optimal flow $\bar f^*$
    is such a minimizer, since it minimizes the
    modified potential and the two potentials
    coincide below the thresholds.
    Consequently, the equilibrium cost converges
    to $C_{\mathrm{opt}}(\bar r)$ as
    $r\uparrow\bar r$.
\end{proof}

\bibliographystyle{alpha}
\bibliography{sample}

@article{CaragiannisF21,
	Author = {Ioannis Caragiannis and Angelo Fanelli},
	Journal = {J. Comput. Syst. Sci.},
	Pages = {40--48},
	Title = {On approximate pure Nash equilibria in weighted congestion games with polynomial latencies},
	Volume = {117},
	Year = {2021}}

@article{Bilo18,
	Author = {Vittorio Bil{\`{o}}},
	Journal = {Theory Comput. Syst.},
	Number = {5},
	Pages = {1288--1317},
	Title = {A Unifying Tool for Bounding the Quality of Non-Cooperative Solutions in Weighted Congestion Games},
	Volume = {62},
	Year = {2018}}

@article{RT04,
	Author = {Tim Roughgarden and {\'{E}}va Tardos},
	Doi = {10.1016/j.geb.2003.06.004},
	Journal = {Games and Economic Behavior},
	Month = {may},
	Number = {2},
	Pages = {389--403},
	Publisher = {Elsevier {BV}},
	Title = {Bounding the inefficiency of equilibria in nonatomic congestion games},
	Volume = {47},
	Year = {2004}}

@article{Roughgarden15,
	Author = {Tim Roughgarden},
	Journal = {J. {ACM}},
	Number = {5},
	Pages = {32:1--32:42},
	Title = {Intrinsic Robustness of the Price of Anarchy},
	Volume = {62},
	Year = {2015}}

@article{algorithmica/ChristodoulouKS11,
	Author = {George Christodoulou and Elias Koutsoupias and Paul G. Spirakis},
	Journal = {Algorithmica},
	Number = {1},
	Pages = {116--140},
	Title = {On the Performance of Approximate Equilibria in Congestion Games},
	Volume = {61},
	Year = {2011}}

@article{Kollias13,
	Author = {Konstantinos Kollias},
	Journal = {Theory Comput. Syst.},
	Number = {3},
	Pages = {424--440},
	Title = {Nonpreemptive Coordination Mechanisms for Identical Machines},
	Volume = {53},
	Year = {2013}}

@article{arxiv:CKS08,
	Author = {George Christodoulou and Elias Koutsoupias and Paul G. Spirakis},
	Journal = {CoRR},
	Title = {On the performance of approximate equilibria in congestion games},
	Volume = {abs/0804.3160},
	Year = {2008}}

@inproceedings{BhattacharyaIKM14,
	Author = {Sayan Bhattacharya and Sungjin Im and Janardhan Kulkarni and Kamesh Munagala},
	Booktitle = {{ITCS}},
	Pages = {121--134},
	Publisher = {{ACM}},
	Title = {Coordination mechanisms from (almost) all scheduling policies},
	Year = {2014}}

@article{ColeCGMO15,
	Author = {Richard Cole and Jos{\'{e}} R. Correa and Vasilis Gkatzelis and Vahab S. Mirrokni and Neil Olver},
	Journal = {Games Econ. Behav.},
	Pages = {306--326},
	Title = {Decentralized utilitarian mechanisms for scheduling games},
	Volume = {92},
	Year = {2015}}

@inproceedings{esa/AbedH12,
  author       = {Fidaa Abed and
                  Chien{-}Chung Huang},
  title        = {Preemptive Coordination Mechanisms for Unrelated Machines},
  booktitle    = {{ESA}},
  series       = {Lecture Notes in Computer Science},
  volume       = {7501},
  pages        = {12--23},
  publisher    = {Springer},
  year         = {2012}
}

@article{mor/CorreaSM04,
	Author = {Jos{\'{e}} R. Correa and Andreas S. Schulz and Nicol{\'{a}}s E. Stier Moses},
	Journal = {Math. Oper. Res.},
	Number = {4},
	Pages = {961--976},
	Title = {Selfish Routing in Capacitated Networks},
	Volume = {29},
	Year = {2004}}

@inproceedings{approx/HansknechtKS14,
	Author = {Christoph Hansknecht and Max Klimm and Alexander Skopalik},
	Booktitle = {{APPROX-RANDOM}},
	Pages = {242--257},
	Publisher = {Schloss Dagstuhl - Leibniz-Zentrum f{\"{u}}r Informatik},
	Series = {LIPIcs},
	Title = {Approximate Pure Nash Equilibria in Weighted Congestion Games},
	Volume = {28},
	Year = {2014}}

@inproceedings{icalp/BhattacharyaKM14,
	Author = {Sayan Bhattacharya and Janardhan Kulkarni and Vahab S. Mirrokni},
	Booktitle = {{ICALP} {(1)}},
	Pages = {186--197},
	Publisher = {Springer},
	Series = {Lecture Notes in Computer Science},
	Title = {Coordination Mechanisms for Selfish Routing over Time on a Tree},
	Volume = {8572},
	Year = {2014}}

@phdthesis{phd/Roughgarden02,
	Author = {Tim Roughgarden},
	School = {Cornell University, {USA}},
	Title = {Selfish Routing},
	Year = {2002}}

@article{talg/CKK10,
	Author = {Ioannis Caragiannis and Christos Kaklamanis and Panagiotis Kanellopoulos},
	Journal = {{ACM} Trans. Algorithms},
	Number = {1},
	Pages = {13:1--13:31},
	Title = {Taxes for linear atomic congestion games},
	Volume = {7},
	Year = {2010}}

@article{tocs/GiannakopoulosP23,
	Author = {Yiannis Giannakopoulos and Diogo Po{\c{c}}as},
	Journal = {Theory Comput. Syst.},
	Number = {4},
	Pages = {855--876},
	Title = {A Unifying Approximate Potential for Weighted Congestion Games},
	Volume = {67},
	Year = {2023}}

@article{tocs/Fotakis10a,
	Author = {Dimitris Fotakis},
	Journal = {Theory Comput. Syst.},
	Number = {1},
	Pages = {218--249},
	Title = {Stackelberg Strategies for Atomic Congestion Games},
	Volume = {47},
	Year = {2010}}

@inproceedings{ITCS/BalkanskiGT23,
	Author = {Eric Balkanski and Vasilis Gkatzelis and Xizhi Tan},
	Booktitle = {{ITCS}},
	Pages = {11:1--11:22},
	Publisher = {Schloss Dagstuhl - Leibniz-Zentrum f{\"{u}}r Informatik},
	Series = {LIPIcs},
	Title = {Strategyproof Scheduling with Predictions},
	Volume = {251},
	Year = {2023}}

@misc{LM:alpsRepo,
	Author = {Lindermayr, Alexander and Megow, Nicole},
	Title = {Repository of works on algorithms with predictions},
	Year = {2023},
    howpublished = {\url{https://algorithms-with-predictions.github.io/}}
    }

@article{sigecom/BalkanskiGT23Overview,
	Author = {Eric Balkanski and Vasilis Gkatzelis and Xizhi Tan},
	Journal = {SIGecom Exch.},
	Number = {1},
	Pages = {54--57},
	Title = {Mechanism Design with Predictions: An Annotated Reading List},
	Volume = {21},
	Year = {2023}}

@article{cacm/MitzenmacherV22,
	Author = {Michael Mitzenmacher and Sergei Vassilvitskii},
	Journal = {Commun. {ACM}},
	Number = {7},
	Pages = {33--35},
	Title = {Algorithms with predictions},
	Volume = {65},
	Year = {2022}}

@inproceedings{CBKST24,
	Author = {Riccardo Colini{-}Baldeschi and Sophie Klumper and Guido Sch{\"{a}}fer and Artem Tsikiridis},
	Booktitle = {{EC}},
	Pages = {1134--1154},
	Publisher = {{ACM}},
	Title = {To Trust or Not to Trust: Assignment Mechanisms with Predictions in the Private Graph Model},
	Year = {2024}}

@article{mor/AgrawalBGOT24,
	Author = {Priyank Agrawal and Eric Balkanski and Vasilis Gkatzelis and Tingting Ou and Xizhi Tan},
	Journal = {Math. Oper. Res.},
	Number = {4},
	Pages = {2626--2651},
	Title = {Learning-Augmented Mechanism Design: Leveraging Predictions for Facility Location},
	Volume = {49},
	Year = {2024}}

@article{teac/BV19,
	Author = {Vittorio Bil{\`{o}} and Cosimo Vinci},
	Journal = {{ACM} Trans. Economics and Comput.},
	Number = {3},
	Pages = {15:1--15:36},
	Title = {Dynamic Taxes for Polynomial Congestion Games},
	Volume = {7},
	Year = {2019}}

@article{ior/PG24,
	Author = {Dario Paccagnan and Martin Gairing},
	Journal = {Oper. Res.},
	Number = {3},
	Pages = {966--982},
	Title = {In Congestion Games, Taxes Achieve Optimal Approximation},
	Volume = {72},
	Year = {2024}}

@article{AzarFJMS15,
	Author = {Yossi Azar and Lisa Fleischer and Kamal Jain and Vahab S. Mirrokni and Zoya Svitkina},
	Journal = {Oper. Res.},
	Number = {3},
	Pages = {489--500},
	Title = {Optimal Coordination Mechanisms for Unrelated Machine Scheduling},
	Volume = {63},
	Year = {2015}}

@article{mor/BonifaciHS10,
	Author = {Vincenzo Bonifaci and Tobias Harks and Guido Sch{\"{a}}fer},
	Journal = {Math. Oper. Res.},
	Number = {2},
	Pages = {330--346},
	Title = {Stackelberg Routing in Arbitrary Networks},
	Volume = {35},
	Year = {2010}}

@article{algorithmica/Caragiannis13,
	Author = {Ioannis Caragiannis},
	Journal = {Algorithmica},
	Number = {3},
	Pages = {512--540},
	Title = {Efficient Coordination Mechanisms for Unrelated Machine Scheduling},
	Volume = {66},
	Year = {2013}}

@article{tocs/CaragiannisF19,
	Author = {Ioannis Caragiannis and Angelo Fanelli},
	Journal = {Theory Comput. Syst.},
	Number = {1},
	Pages = {114--127},
	Title = {An Almost Ideal Coordination Mechanism for Unrelated Machine Scheduling},
	Volume = {63},
	Year = {2019}}

@article{siamcomp/Roughgarden04,
	Author = {Tim Roughgarden},
	Journal = {{SIAM} J. Comput.},
	Number = {2},
	Pages = {332--350},
	Title = {Stackelberg Scheduling Strategies},
	Volume = {33},
	Year = {2004}}

@article{or/GkatzelisKR16,
	Author = {Vasilis Gkatzelis and Konstantinos Kollias and Tim Roughgarden},
	Journal = {Oper. Res.},
	Number = {6},
	Pages = {1230--1238},
	Title = {Optimal Cost-Sharing in General Resource Selection Games},
	Volume = {64},
	Year = {2016}}

@inproceedings{focs/KarakostasK04,
	Author = {George Karakostas and Stavros G. Kolliopoulos},
	Booktitle = {{FOCS}},
	Pages = {268--276},
	Publisher = {{IEEE} Computer Society},
	Title = {Edge Pricing of Multicommodity Networks for Heterogeneous Selfish Users},
	Year = {2004}}

@article{siamcomp/ChristodoulouGG19,
	Author = {George Christodoulou and Martin Gairing and Yiannis Giannakopoulos and Paul G. Spirakis},
	Journal = {{SIAM} J. Comput.},
	Number = {5},
	Pages = {1544--1582},
	Title = {The Price of Stability of Weighted Congestion Games},
	Volume = {48},
	Year = {2019}}

@article{teac/CG16:PoS,
	Author = {George Christodoulou and Martin Gairing},
	Journal = {{ACM} Trans. Economics and Comput.},
	Number = {2},
	Pages = {10:1--10:17},
	Title = {Price of Stability in Polynomial Congestion Games},
	Volume = {4},
	Year = {2016}}

@article{teac/BhawalkarGR14:PoAWeighted,
  author       = {Kshipra Bhawalkar and
                  Martin Gairing and
                  Tim Roughgarden},
  title        = {Weighted Congestion Games: The Price of Anarchy, Universal Worst-Case
                  Examples, and Tightness},
  journal      = {{ACM} Trans. Economics and Comput.},
  volume       = {2},
  number       = {4},
  pages        = {14:1--14:23},
  year         = {2014}
}

@article{siamcomp/AlandDGMS11,
	Author = {Sebastian Aland and Dominic Dumrauf and Martin Gairing and Burkhard Monien and Florian Schoppmann},
	Journal = {{SIAM} J. Comput.},
	Number = {5},
	Pages = {1211--1233},
	Title = {Exact Price of Anarchy for Polynomial Congestion Games},
	Volume = {40},
	Year = {2011}}

@inproceedings{focs/FleischerJM04,
	Author = {Lisa Fleischer and Kamal Jain and Mohammad Mahdian},
	Booktitle = {{FOCS}},
	Pages = {277--285},
	Publisher = {{IEEE} Computer Society},
	Title = {Tolls for Heterogeneous Selfish Users in Multicommodity Networks and Generalized Congestion Games},
	Year = {2004}}

@inproceedings{stoc/ColeDR03,
	Author = {Richard Cole and Yevgeniy Dodis and Tim Roughgarden},
	Booktitle = {{STOC}},
	Pages = {521--530},
	Publisher = {{ACM}},
	Title = {Pricing network edges for heterogeneous selfish users},
	Year = {2003}}

@article{jcss/ColeDR06Taxes,
	Author = {Richard Cole and Yevgeniy Dodis and Tim Roughgarden},
	Journal = {J. Comput. Syst. Sci.},
	Number = {3},
	Pages = {444--467},
	Title = {How much can taxes help selfish routing?},
	Volume = {72},
	Year = {2006}}

@inproceedings{sigecom/BhaskarCKS16,
	Author = {Umang Bhaskar and Yu Cheng and Young Kun{-}Ko and Chaitanya Swamy},
	Booktitle = {{EC}},
	Pages = {479--496},
	Publisher = {{ACM}},
	Title = {Hardness Results for Signaling in Bayesian Zero-Sum and Network Routing Games},
	Year = {2016}}

@article{tcs/ImmorlicaLMS09,
	Author = {Nicole Immorlica and Li (Erran) Li and Vahab S. Mirrokni and Andreas S. Schulz},
	Journal = {Theor. Comput. Sci.},
	Number = {17},
	Pages = {1589--1598},
	Title = {Coordination mechanisms for selfish scheduling},
	Volume = {410},
	Year = {2009}}

@inproceedings{isaac/WangYC16:ImprovedTax,
	Author = {Te{-}Li Wang and Chih{-}Kuan Yeh and Ho{-}Lin Chen},
	Booktitle = {{ISAAC}},
	Pages = {61:1--61:12},
	Publisher = {Schloss Dagstuhl - Leibniz-Zentrum f{\"{u}}r Informatik},
	Series = {LIPIcs},
	Title = {An Improved Tax Scheme for Selfish Routing},
	Volume = {64},
	Year = {2016}}

@book{Beckmann1956-,
  title={Studies in the Economics of Transportation},
  author={Beckmann, M.J. and McGuire, C.B. and Winsten, C.B.},
  year={1956},
  publisher={Cowles Commission for Research in Economics}
}

@inproceedings{KoutsoupiasP99,
	Author = {Elias Koutsoupias and Christos H. Papadimitriou},
	Booktitle = {{STACS}},
	Pages = {404--413},
	Publisher = {Springer},
	Series = {Lecture Notes in Computer Science},
	Title = {Worst-case Equilibria},
	Volume = {1563},
	Year = {1999}}

@article{RoughgardenT02,
	Author = {Tim Roughgarden and {\'{E}}va Tardos},
	Journal = {J. {ACM}},
	Number = {2},
	Pages = {236--259},
	Title = {How bad is selfish routing?},
	Volume = {49},
	Year = {2002}}

@inproceedings{stoc/ChristodoulouK05,
	Author = {George Christodoulou and Elias Koutsoupias},
	Booktitle = {{STOC}},
	Pages = {67--73},
	Publisher = {{ACM}},
	Title = {The price of anarchy of finite congestion games},
	Year = {2005}}

@article{Rosenthal1973,
	Author = {Rosenthal, Robert W},
	Journal = {International Journal of Game Theory},
	Month = dec,
	Number = 1,
	Pages = {65--67},
	Title = {A class of games possessing pure-strategy Nash equilibria},
	Volume = 2,
	Year = 1973}

@article{MondererShapley:Potential,
	Author = {Dov Monderer and Lloyd S. Shapley},
	Doi = {https://doi.org/10.1006/game.1996.0044},
	Issn = {0899-8256},
	Journal = {Games and Economic Behavior},
	Number = {1},
	Pages = {124-143},
	Title = {Potential Games},
	Volume = {14},
	Year = {1996}}

@inproceedings{CCSV25:predictions,
  author       = {George Christodoulou and
                  Vasilis Christoforidis and
                  Alkmini Sgouritsa and
                  Ioannis Vlachos},
  title        = {Improving the Price of Anarchy via Predictions in Parallel-Link Networks},
  booktitle    = {{WWW}},
  pages        = {309--317},
  publisher    = {{ACM}},
  year         = {2026}
}

@inproceedings{stoc/AwerbuchAE05,
	Author = {Baruch Awerbuch and Yossi Azar and Amir Epstein},
	Booktitle = {{STOC}},
	Pages = {57--66},
	Publisher = {{ACM}},
	Title = {The Price of Routing Unsplittable Flow},
	Year = {2005}}

@inproceedings{CaragiannisGV17,
	Author = {Ioannis Caragiannis and Vasilis Gkatzelis and Cosimo Vinci},
	Booktitle = {{WINE}},
	Pages = {74--87},
	Publisher = {Springer},
	Series = {Lecture Notes in Computer Science},
	Title = {Coordination Mechanisms, Cost-Sharing, and Approximation Algorithms for Scheduling},
	Volume = {10660},
	Year = {2017}}

@article{mst/CohenDT11,
  author       = {Johanne Cohen and
                  Christoph D{\"{u}}rr and
                  Kim Thang Nguyen},
  title        = {Non-clairvoyant Scheduling Games},
  journal      = {Theory Comput. Syst.},
  volume       = {49},
  number       = {1},
  pages        = {3--23},
  year         = {2011}
}

@inproceedings{esa/AbedCH14,
  author       = {Fidaa Abed and
                  Jos{\'{e}} R. Correa and
                  Chien{-}Chung Huang},
  title        = {Optimal Coordination Mechanisms for Multi-job Scheduling Games},
  booktitle    = {{ESA}},
  series       = {Lecture Notes in Computer Science},
  volume       = {8737},
  pages        = {13--24},
  publisher    = {Springer},
  year         = {2014}
}

@article{tcs/ChristodoulouKN09:CMs,
	Author = {George Christodoulou and Elias Koutsoupias and Akash Nanavati},
	Journal = {Theor. Comput. Sci.},
	Number = {36},
	Pages = {3327--3336},
	Title = {Coordination mechanisms},
	Volume = {410},
	Year = {2009}}

@article{tcs/FotakisKS05,
	Author = {Dimitris Fotakis and Spyros C. Kontogiannis and Paul G. Spirakis},
	Journal = {Theor. Comput. Sci.},
	Number = {2-3},
	Pages = {226--239},
	Title = {Selfish unsplittable flows},
	Volume = {348},
	Year = {2005}}

@inproceedings{icalp/Colini-Baldeschi18,
	Author = {Riccardo Colini{-}Baldeschi and Max Klimm and Marco Scarsini},
	Booktitle = {{ICALP}},
	Pages = {151:1--151:14},
	Publisher = {Schloss Dagstuhl - Leibniz-Zentrum f{\"{u}}r Informatik},
	Series = {LIPIcs},
	Title = {Demand-Independent Optimal Tolls},
	Volume = {107},
	Year = {2018}}

@article{mor/GiannakopoulosNS22,
  author       = {Yiannis Giannakopoulos and
                  Georgy Noarov and
                  Andreas S. Schulz},
  title        = {Computing Approximate Equilibria in Weighted Congestion Games via
                  Best-Responses},
  journal      = {Math. Oper. Res.},
  volume       = {47},
  number       = {1},
  pages        = {643--664},
  year         = {2022}
}

@inproceedings{sigecom/LuW024,
	Author = {Pinyan Lu and Zongqi Wan and Jialin Zhang},
	Booktitle = {{EC}},
	Pages = {1155--1183},
	Publisher = {{ACM}},
	Title = {Competitive Auctions with Imperfect Predictions},
	Year = {2024}}

@article{siamcomp/AnshelevichDKTWR08:PoS,
	Author = {Elliot Anshelevich and Anirban Dasgupta and Jon M. Kleinberg and {\'{E}}va Tardos and Tom Wexler and Tim Roughgarden},
	Journal = {{SIAM} J. Comput.},
	Number = {4},
	Pages = {1602--1623},
	Title = {The Price of Stability for Network Design with Fair Cost Allocation},
	Volume = {38},
	Year = {2008}}

@article{ior/CGS24,
	Author = {George Christodoulou and Vasilis Gkatzelis and Alkmini Sgouritsa},
	Journal = {Oper. Res.},
	Number = {1},
	Pages = {167--184},
	Title = {Resource-Aware Cost-Sharing Methods for Scheduling Games},
	Volume = {72},
	Year = {2024}}

@article{algorithmica/CFKKM11,
	Author = {Ioannis Caragiannis and Michele Flammini and Christos Kaklamanis and Panagiotis Kanellopoulos and Luca Moscardelli},
	Journal = {Algorithmica},
	Number = {3},
	Pages = {606--637},
	Title = {Tight Bounds for Selfish and Greedy Load Balancing},
	Volume = {61},
	Year = {2011}}

@inproceedings{esa/ChristodoulouK05,
	Author = {George Christodoulou and Elias Koutsoupias},
	Booktitle = {{ESA}},
	Pages = {59--70},
	Publisher = {Springer},
	Series = {Lecture Notes in Computer Science},
	Title = {On the Price of Anarchy and Stability of Correlated Equilibria of Linear Congestion Games},
	Volume = {3669},
	Year = {2005}}

@article{algorithmica/ChristodoulouMP14,
	Author = {Giorgos Christodoulou and Kurt Mehlhorn and Evangelia Pyrga},
	Journal = {Algorithmica},
	Number = {3},
	Pages = {619--640},
	Title = {Improving the Price of Anarchy for Selfish Routing via Coordination Mechanisms},
	Volume = {69},
	Year = {2014}}

@article{transci/BernsteinS94:UserEq,
	Author = {David Bernstein and Tony E. Smith},
	Journal = {Transp. Sci.},
	Number = {3},
	Pages = {221--235},
	Title = {Equilibria for Networks with Lower Semicontinuous Costs: With an Application to Congestion Pricing},
	Volume = {28},
	Year = {1994}}

@article{FotakisSpirakis08,
	Author = {Dimitris Fotakis and Paul G. Spirakis},
	Journal = {Internet Math.},
	Number = {4},
	Pages = {343--363},
	Title = {Cost-Balancing Tolls for Atomic Network Congestion Games},
	Volume = {5},
	Year = {2008}}

@inproceedings{soda/Gkatzelis0T25,
	Author = {Vasilis Gkatzelis and Daniel Schoepflin and Xizhi Tan},
	Booktitle = {{SODA}},
	Pages = {2629--2655},
	Publisher = {{SIAM}},
	Title = {Clock Auctions Augmented with Unreliable Advice},
	Year = {2025}}

@book{NRTV2007a,
	Editor = {Noam Nisan and Tim Roughgarden and {\'{E}}va Tardos and Vijay V. Vazirani},
	Publisher = {Cambridge University Press},
	Title = {Algorithmic Game Theory},
	Year = {2007}}

@inproceedings{CSV24,
	Author = {George Christodoulou and Alkmini Sgouritsa and Ioannis Vlachos},
	Booktitle = {NeurIPS},
	Title = {Mechanism design augmented with output advice},
	Year = {2024}}

@inproceedings{sigecom/GkatzelisKST22,
	Author = {Vasilis Gkatzelis and Kostas Kollias and Alkmini Sgouritsa and Xizhi Tan},
	Booktitle = {{EC}},
	Pages = {529--557},
	Publisher = {{ACM}},
	Title = {Improved Price of Anarchy via Predictions},
	Year = {2022}}

@inproceedings{soda/Kashaev26,
  author    = {Danish Kashaev},
  title     = {Selfish, Local and Online Scheduling via Vector Fitting},
  booktitle = {Proceedings of the 2026 Annual {ACM-SIAM} Symposium on Discrete Algorithms ({SODA})},
  pages     = {3841--3871},
  year      = {2026}
}

@inproceedings{stoc/BlumHLR08,
  author       = {Avrim Blum and
                  MohammadTaghi Hajiaghayi and
                  Katrina Ligett and
                  Aaron Roth},
  title        = {Regret minimization and the price of total anarchy},
  booktitle    = {{STOC}},
  pages        = {373--382},
  publisher    = {{ACM}},
  year         = {2008}
}

@article{LvDKLW15:trafficPrediction,
  author       = {Yisheng Lv and
                  Yanjie Duan and
                  Wenwen Kang and
                  Zhengxi Li and
                  Fei{-}Yue Wang},
  title        = {Traffic Flow Prediction With Big Data: {A} Deep Learning Approach},
  journal      = {{IEEE} Trans. Intell. Transp. Syst.},
  volume       = {16},
  number       = {2},
  pages        = {865--873},
  year         = {2015}
}

@inproceedings{nips/Elenter0DL24:britleness,
  author       = {Alex Elenter and
                  Spyros Angelopoulos and
                  Christoph D{\"{u}}rr and
                  Yanni Lefki},
  title        = {Overcoming Brittleness in Pareto-Optimal Learning Augmented Algorithms},
  booktitle    = {NeurIPS},
  year         = {2024}
}

@inproceedings{aistats/BenomarP25,
  author       = {Ziyad Benomar and
                  Vianney Perchet},
  title        = {On Tradeoffs in Learning-Augmented Algorithms},
  booktitle    = {{AISTATS}},
  series       = {Proceedings of Machine Learning Research},
  volume       = {258},
  pages        = {802--810},
  publisher    = {{PMLR}},
  year         = {2025}
}

\appendix

\clearpage
\section{Arithmetic Lemmas}
\label{appendix:Arithmetic}

\begin{restatable}{lemma}{ArithmeticNonAtomic}\label{lem:arithmeticNonAtomic}
For any integer $k \in [1,d]$, non-negative real numbers $x,y$ and real number $\lambda > 0,$ it holds that

$$x^ky \leq \max\left\{\frac{1}{4\lambda},\frac{d}{(d+1)\sqrt[d]{\lambda(d+1)}}\right\}x^{k+1} + \lambda y^{k+1}.$$
\end{restatable}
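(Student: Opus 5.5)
The plan is to reduce the inequality to a one-variable problem by homogeneity, and then split on whether the ratio $y/x$ is at least $1$ or at most $1$. In the first case we compare with the degree-$1$ bound, and in the second with the degree-$d$ bound. This avoids any convexity analysis of the optimal constant as a function of $k$.

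\emph{Reduction.} If $x=0$, the left-hand side is $0$ and the right-hand side is $\lambda y^{k+1}\ge 0$, so there is nothing to prove. For $x>0$, divide both sides by $x^{k+1}$ and set $t=y/x\ge 0$. With
\[
M=\max\left\{\frac{1}{4\lambda},\frac{d}{(d+1)\sqrt[d]{\lambda(d+1)}}\right\},
\]
the claim becomes $t-\lambda t^{k+1}\le M$ for every $t\ge 0$ and every integer $k\in[1,d]$.

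\emph{Case $t\ge 1$.} Since $k+1\ge 2$, we have $t^{k+1}\ge t^2$, hence $t-\lambda t^{k+1}\le t-\lambda t^2$. Maximizing the concave quadratic $t\mapsto t-\lambda t^2$ over $t\ge 0$ gives its maximum at $t=1/(2\lambda)$, with value $1/(4\lambda)\le M$.

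\emph{Case $0\le t\le 1$.} Since $k+1\le d+1$, we have $t^{k+1}\ge t^{d+1}$, hence $t-\lambda t^{k+1}\le t-\lambda t^{d+1}$. The function $t\mapsto t-\lambda t^{d+1}$ is concave on $t\ge 0$. Its stationary point solves $1=\lambda(d+1)t^d$, i.e. $t^\ast=(\lambda(d+1))^{-1/d}$. Its maximum value is therefore
\[
t^\ast\left(1-\lambda (t^\ast)^d\right)=t^\ast\left(1-\frac{1}{d+1}\right)=\frac{d}{(d+1)\sqrt[d]{\lambda(d+1)}}\le M.
\]

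In both cases the bound holds, and multiplying back by $x^{k+1}$ gives the lemma. There is no real obstacle here. The only point needing care is the direction of the comparisons $t^{k+1}\ge t^2$ and $t^{k+1}\ge t^{d+1}$ in the two regimes, which is exactly why the case split is taken at $t=1$. Both one-dimensional maximizations are unconstrained over $t\ge 0$, which only makes the resulting bounds larger, so they remain valid upper bounds.
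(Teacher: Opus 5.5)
Your proof is correct, and it takes a different route from the paper's.

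The paper's route works degree by degree. It fixes $k$ and uses the parametrized inequality from \cite{algorithmica/ChristodoulouKS11}, $x^ky\le c_k\,x^{k+1}+\lambda y^{k+1}$, with the sharp per-degree constant $c_k=\frac{k}{(k+1)\sqrt[k]{\lambda(k+1)}}$. It then argues that $c_k$, as a function of $k$, is increasing for $\lambda\ge1$ and decreasing-then-increasing for $\lambda<1$. Hence its maximum over $[1,d]$ is attained at $k=1$ or $k=d$.

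Your route instead splits on $t=y/x$ at $1$. You dominate $t^{k+1}$ by $t^2$ when $t\ge 1$ and by $t^{d+1}$ when $t\le 1$, which reduces directly to the two endpoint degrees.

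The paper's approach exhibits the exact constant $c_k$ for each degree, which would matter if one wanted degree-dependent refinements. However, its crucial step, the shape of $c_k$ in $k$, is stated without a derivation. The per-$k$ inequality is also quoted for $\lambda\ge1$, while the lemma is stated for all $\lambda>0$.

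Your argument needs no analysis in $k$ at all. It handles every $\lambda>0$ uniformly and even works for real $k\in[1,d]$. Since $c_k$ is the optimal constant for degree $k$, it also proves the paper's endpoint claim $c_k\le\max\{c_1,c_d\}$ as a byproduct.
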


\begin{proof}
Consider first the following arithmetic inequality. For any integer $k \in [1,d]$, non-negative real numbers $x,y$ and real number $\lambda \geq 1$, it holds that: 

$$x^ky \leq \frac{k}{(k+1)\sqrt[k]{\lambda(k+1)}}x^{k+1} + \lambda y^{k+1}.$$ 

This is used in \cite{algorithmica/ChristodoulouKS11}, but here we parameterize it over a parameter $\lambda$ which will be determined later in order to optimize the robustness bound. Considering only positive $k>0$, the coefficient $\frac{k}{(k+1)\sqrt[k]{\lambda(k+1)}}$ is zero for $k=0$ and strictly increasing when $\lambda \geq 1$. In the case of $\lambda < 1$ it achieves its minimum at some value $k_{min} > 0$ and is strictly decreasing for $k < k_{min}$ and strictly increasing for $k > k_{min}$. Overall, since $k \in [1,d]$, the expression achieves its maximum either for $k=1$ or for $k=d$, and therefore:

    $$x^ky \leq \max\left\{\frac{1}{4\lambda},\frac{d}{(d+1)\sqrt[d]{\lambda(d+1)}}\right\}x^{k+1} + \lambda y^{k+1}.$$
\end{proof}

\begin{lemma}\label{lem:polyAlgebra}
    For any integers $x\geq 1$, the following hold:
    \begin{enumerate}[{(1)}]
        \item $\sum_{i=1}^x \left(i^k-(i-1)^k\right)= x^k$,   for any integer $k\ge1\,;$
        \item $(k+1)x^k\geq x^{k+1}-(x-1)^{k+1}$,  for any integer $k\ge0\,;$
        \item $(k+1)\sum_{i=1}^x i^k\geq x^{k+1}$, for any integer $k\ge0\,;$
    \end{enumerate}
\end{lemma}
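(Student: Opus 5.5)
Each of the three parts is an elementary inequality for integer $x\ge 1$, and I would prove them in order, since (3) uses (2).

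\emph{Part (1).} This is a telescoping sum. The sum $\sum_{i=1}^x (i^k-(i-1)^k)$ collapses to $x^k-0^k=x^k$, using $k\ge 1$ so that $0^k=0$.

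\emph{Part (2).} I would write the difference of powers via the standard factorization
\[
x^{k+1}-(x-1)^{k+1}=\sum_{j=0}^{k} x^{j}(x-1)^{k-j}.
\]
Since $x\ge 1$, we have $0\le x-1\le x$, so each of the $k+1$ summands is at most $x^k$. This gives the bound $(k+1)x^k$. Equivalently, one can apply the mean value theorem to $t^{k+1}$ on $[x-1,x]$: the derivative $(k+1)t^k$ is at most $(k+1)x^k$ there.

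\emph{Part (3).} I would apply (2) termwise with $x$ replaced by $i$ for each $i=1,\dots,x$, and then telescope:
\[
(k+1)\sum_{i=1}^x i^k\;\ge\;\sum_{i=1}^x \bigl(i^{k+1}-(i-1)^{k+1}\bigr)=x^{k+1}.
\]
The telescoping step is (1) with exponent $k+1\ge 1$. The case $k=0$ is covered as well, since both sides equal $x$.

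None of these steps is a real obstacle. The only points needing care are the boundary conventions: $0^0=1$ is excluded in (1) because $k\ge 1$ there, and in (3) the telescoping is always invoked with exponent $k+1\ge 1$.
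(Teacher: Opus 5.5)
Your proof is correct. Parts (1) and (3) match the paper's argument: telescoping for (1), and for (3) applying (2) termwise and then telescoping with exponent $k+1$.

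For part (2) you take a different route. The paper argues by induction on $k$. It sets $f(x,k)=(k+1)x^k-x^{k+1}+(x-1)^{k+1}$ and observes that $df(x,k)/dx=(k+1)f(x,k-1)\ge 0$ and $f(1,k)=k\ge 0$. Hence $f(\cdot,k)$ is nondecreasing in $x$ and therefore nonnegative.

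Your factorization $x^{k+1}-(x-1)^{k+1}=\sum_{j=0}^{k}x^j(x-1)^{k-j}$, with each of the $k+1$ terms bounded by $x^k$, is shorter. Your mean value theorem variant works just as well. Either way you need no induction or differentiation, and the factor $k+1$ is visibly the number of terms.

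Your argument also shows directly that the inequality holds for all real $x\ge 1$. The paper's induction needs that too, though only implicitly. Because it differentiates in $x$, the hypothesis $f(\cdot,k-1)\ge 0$ must hold on the whole interval $[1,\infty)$, not just at integers.

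Both proofs are complete; yours is the more economical one.
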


\begin{proof}
    The first equality holds for $k\ge1$ because
    $$\sum_{i=1}^x \left(i^k-(i-1)^k\right) = x^k+\sum_{i=1}^{x-1} i^k-\sum_{i=1}^{x} (i-1)^k= x^k+\sum_{i=1}^{x-1} i^k-\sum_{i=0}^{x-1} i^k = x^k\,. $$
    
    For the second inequality we will show that  
    $f(x,k)=(k+1)x^k - x^{k+1}+(x-1)^{k+1}\geq 0$ by induction on $k$. For $k=0$, $f(x,0)=0$ for any $x\geq 1$. Consider any $k\geq 1$, and suppose for the inductive hypothesis that $f(x,k-1)\geq 0$ for all $x\geq 1$. Taking the derivative w.r.t. $x$, $df(x,k)/dx = (k+1)f(x,k-1)\geq 0$, for all $x\geq 1$, meaning that $f(x,k)$ is non-decreasing w.r.t. $x$. Since, $f(1,k)=k\geq 0$, $f(x,k)\geq 0$, for any $x\geq 1$ and the statement follows.

The third inequality is derived immediately by the second one as follows: $$(k+1)\sum_{i=1}^x i^k=\sum_{i=1}^x (k+1)i^k\geq \sum_{i=1}^x \left(i^{k+1}-(i-1)^{k+1}\right)=x^{k+1}\,.$$ 
\end{proof}

\begin{lemma}\label{lem:ineqAtomic}
    For every $x,y,z \in \mathbb{N}$ and real number $\beta\ge1$:
    \begin{align*}
        y\left(x+\frac{1}{\beta}\right) \le \frac{1}{2z+1}x^2 + \frac{z^2+(1+\frac{2}{\beta})z+\frac{\beta+3}{4\beta}}{2z+1}y^2
    \end{align*}
\end{lemma}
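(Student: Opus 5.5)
We clear the denominator and reduce the claim to a statement about a quadratic in $x$. Multiplying both sides by $2z+1>0$, the inequality of Lemma~\ref{lem:ineqAtomic} is equivalent to
\[
x^2-(2z+1)xy+\Bigl(z^2+\bigl(1+\tfrac{2}{\beta}\bigr)z+\tfrac{\beta+3}{4\beta}\Bigr)y^2-\tfrac{2z+1}{\beta}\,y\;\ge\;0 .
\]
The key step is to isolate the factorizable part $x^2-(2z+1)xy+(z^2+z)y^2=(x-zy)\bigl(x-(z+1)y\bigr)$. This rewrites the left-hand side as
\[
(x-zy)\bigl(x-(z+1)y\bigr)+\Bigl(\tfrac{2z}{\beta}+\tfrac{\beta+3}{4\beta}\Bigr)y^2-\tfrac{2z+1}{\beta}\,y .
\]
We then split into cases according to the value of the integer $y$.

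\emph{Case $y=0$.} The expression reduces to $x^2\ge 0$, so there is nothing to prove.

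\emph{Case $y=1$.} Here we use integrality of $x$ and $z$. The factor $(x-z)(x-z-1)$ is a product of two consecutive integers, hence nonnegative. The remaining terms equal
\[
\frac{2z}{\beta}+\frac{\beta+3}{4\beta}-\frac{2z+1}{\beta}=\frac{\beta-1}{4\beta},
\]
which is nonnegative because $\beta\ge 1$. This is the tight case, and it is exactly where the constant $\tfrac{\beta+3}{4\beta}$ comes from.

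\emph{Case $y\ge 2$.} We no longer need integrality of $x$. As a real quadratic in $x$, $(x-zy)\bigl(x-(z+1)y\bigr)$ is minimized at the midpoint of its roots, with minimum value $-y^2/4$. Using $-\tfrac14+\tfrac{\beta+3}{4\beta}=\tfrac{3}{4\beta}$, the whole expression is at least
\[
\frac{y}{\beta}\Bigl(\bigl(2z+\tfrac34\bigr)y-(2z+1)\Bigr).
\]
For $y\ge 2$ the bracket is at least $4z+\tfrac32-2z-1=2z+\tfrac12>0$, so the expression is positive.

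The main obstacle is recognizing that the real-variable minimization over $x$ fails precisely when $y=1$, where it falls short by a constant. That case is rescued only by the integrality of $x$, via the consecutive-integer product being nonnegative. Once this case split is identified, the remaining verifications are routine.
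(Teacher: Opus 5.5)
Your proof is correct and takes essentially the same approach as the paper. The paper completes the square as $\bigl(x-\frac{2z+1}{2}y\bigr)^2$ plus a remainder, which is your factorization combined with the bound $-y^2/4$. It uses the same case split: $y=0$ is trivial, $y\ge 2$ follows from the remainder's sign, and $y=1$ is handled by the consecutive-integer product $(x-z)(x-z-1)\ge 0$ plus $\frac{\beta-1}{4\beta}\ge 0$.
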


\begin{proof}
    We subtract the left hand side of the statement from the right hand side, multiply by $2z+1$, and use $f(x,y,\beta)$ to denote the result. Then, we have:

    \begin{align*}
        f(x,y,\beta)& = x^2 + \left(z^2+\left(1+\frac{2}{\beta}\right)z+\frac{\beta+3}{4\beta}\right)y^2-(2z+1)y\left(x+\frac{1}{\beta}\right) \\
        & = 
        \left(x-\frac{2z+1}{2}y\right)^2+\left(\frac{2}{\beta}z+\frac{3}{4\beta}\right)y^2-\frac{2z+1} {\beta}y \\
        & = \left(x-\frac{2z+1}{2}y\right)^2+\frac{(8z+3)y-8z-4}{4}\frac{y}{\beta}\,. \\
    \end{align*}
    We want to show that $f(x,y,\beta) \ge 0$ for all nonnegative integers $x,y,z$. For $y=0$, this is clearly true. For any $y \ge 2$, $f(x,y,\beta)$ is positive since $2(8z+3)>8z+4$. It remains to check the case $y=1$; we have that $f(x,1,\beta) = x^2-(2z+1)x+z^2+z+\frac{1}{4}-\frac{1}{4\beta} = (x-z)(x-z-1) +\frac{1}{4}-\frac{1}{4\beta} \geq 0\,,$ which is nonnegative for $x,z$ integers and $\beta \geq 1$. So, the lemma follows.
\end{proof}

\end{document}